\declaretheorem[parent=section]{theorem}
\declaretheorem[sibling=theorem]{corollary}
\declaretheorem[sibling=theorem]{lemma}
\DeclareMathOperator {\E}{\mathbb{E}}
\let\epsilon=\varepsilon
\newcommand{\poly}{\mathrm{poly}}
\newcommand{\q}{\widehat{d}}
\DeclareMathOperator*{\argmin}{\arg\!\min}
\title{Directory Reconciliation}
\author{Michael Mitzenmacher\thanks{Harvard University School of Engineering and Applied Sciences.  email: michaelm@eecs.harvard.edu.  Michael Mitzenmacher was supported in part by NSF grants CNS-1228598, CCF-1320231, CCF-1563710 and CCF-1535795.} \and Tom Morgan\thanks{Harvard University School of Engineering and Applied Sciences.  email: tdmorgan@seas.harvard.edu.  Tom Morgan was supported in part by NSF grants CNS-1228598 and CCF-1320231.}}
\date{}
\begin{document}

\maketitle

\begin{abstract}
We initiate the theoretical study of \emph{directory reconciliation}, a generalization of document exchange, in which Alice and Bob each have different versions of a set of documents that they wish to synchronize.  This problem is designed to capture the setting of synchronizing different versions of file directories, while allowing for changes of file names and locations without significant expense.  We present protocols for efficiently solving directory reconciliation based on a reduction to document exchange under edit distance with block moves, as well as protocols combining techniques for reconciling sets of sets with document exchange protocols.  Along the way, we develop a new protocol for document exchange under edit distance with block moves inspired by noisy binary search in graphs, which uses only $O(k \log n)$ bits of communication at the expense of $O(k \log n)$ rounds of communication.
\end{abstract}

\section{Introduction}
Document exchange is a well studied two party communication problem, in which Alice and Bob have documents (strings) $a$ and $b$ respectively and they wish to communicate efficiently so that Bob can recover $a$.  They are given a small bound $k$ on the edit distance (or edit distance with block moves) between $a$ and $b$, and wish to use communication proportional to $k$, rather than to the lengths of their strings.  This has immediate applications to version control software, in which a server and client wish to synchronize different versions of the same files.  In such a setting, document exchange would match each file with the corresponding file with the same name on the other side;  this can be done in parallel.  However, if files are allowed to change names or locations in the file structure, the this approach could introduce significant inefficiencies, as a large file whose name is changed may have to be transmitted in its entirety between the parties.

We introduce the problem of \emph{directory reconciliation} to address this issue.  In this problem Alice and Bob each have a \emph{directory}, which we define to be a set of documents.  We have a small bound $d$ on the number of edits (character insertions, deletions, and substitutions) required to transform Alice's directory into Bob's.\footnote{Note that we use $k$ as our edit bound for document exchange and $d$ for our edit bound for directory reconciliation.  This is for historical consistency, and it helps to keep clear which problem we are solving.}  The goal of directory reconciliation is for Bob to recover Alice's directory using as little communication as possible.  In our version control application, a file's name and location could be encoded as a prefix for the document, and now a one character change to a file's name corresponds only to a single edit, rather than the deletion of a whole file and the creation of a new one.  Directory reconciliation is also applicable to situations where we wish to synchronize file collections that are closely related but do not have file names linking them.

We present two approaches to solving directory reconciliation.  The first approach is designed to minimize total communication; we accomplish this by using more rounds of communication.  To achieve this, we show that directory reconciliation can be reduced to document exchange under edit distance with block moves.  Recall that a block move operation selects a contiguous substring of any length, deletes it from its current location and inserts it elsewhere in the string.  The state of the art for this form of document exchange is the IMS sketch \cite{irmak2005improved} which is a one round protocol using $O(k \log n \log(n/k))$ bits of communication.  In \autoref{sec:multiround} we present our main technical result, a new protocol for document exchange under edit distance with block moves that achieves $O(k \log n)$ bits of communication at the expense of using more rounds communication ($O(k \log n)$ of them).  This protocol draws inspiration from techniques for noisy binary search, but requires new techniques and analysis in order to meet our communication requirements.

Our second approach is designed to be more communication efficient in terms of the number of rounds.  This approach is based on combining document exchange protocols with techniques for reconciling sets of sets \cite{mitzenmacher2017reconciling}.  We provide one round directory reconciliation protocols that are generally superior to what we achieve from our reduction to the IMS sketch.  In particular, they perform significantly better when the directory consists of a large number of small files.  Additionally, we provide efficient directory reconciliation protocols that use only a constant number of rounds for the setting when the bound $d$ is unknown.  These results motivate studying directory reconciliation problem as a distinct problem from document exchange, as they improve upon what is possible by direct reduction to document exchange.

\subsection{Related Work}
The formal study of document exchange began with Orlitsky \cite{orlitsky1993interactive} and has received significant attention since then;  see, for example, \cite{belazzougui2016edit,irmak2005improved,chakraborty2016streaming}.  We summarize the current state-of-the-art document exchange protocols in \autoref{sec:docexchange}.  All of the modern protocols use only a single round of communication.  While there was a line of work on multi-round protocols \cite{schwarz1990low, cormode2000communication, langford2001multiround, suel2004improved}, these protocols are dominated by the one round IMS sketch \cite{irmak2005improved}, which incorporates the ideas behind them. The rsync algorithm \cite{rsync, rsyncalg} is a well known practical tool for synchronizing files or directories. However, rsync and related tools have poor worst case performance with regards to the amount of data they communicate.  Furthermore, rsync performs directory synchronization by individually synchronizing files with common names/locations and thus behaves poorly if a file's name/location changes, an issue we seek to remedy with our model of directory reconciliation.

Set reconciliation is a related problem in which Alice and Bob each have a set, and they wish to communicate efficiently so that Bob can recover Alice's set given knowledge of a small bound $d$ on the difference between their sets \cite{minsky2003set,starobinski2003efficient}.  One of the solutions to set reconciliation makes use of the Invertible Bloom Lookup Table (IBLT) \cite{eppstein2011straggler,gm11}.  We make extensive use of IBLTs in several of our protocols, and describe them further in \autoref{sec:iblt}.

Set of sets reconciliation extends set reconciliation to the scenario where Alice and Bob's set elements are themselves sets, as proposed in \cite{mitzenmacher2017reconciling}.  Here the bound $d$ is on the number of element additions and deletions needed to make their sets of sets equal.  Mitzenmacher and Morgan \cite{mitzenmacher2017reconciling} develop several protocols for this problem that we adapt to the setting of document exchange, once again making heavy use of IBLTs.

Noisy binary search is the problem of searching via comparison for an item in a list or graph in which those comparisons have some chance of returning an incorrect response \cite{ben2008bayesian,emamjomeh2016deterministic}.  Solutions for this include multiplicative weights based algorithms which incrementally reinforce the likelihood that each possible candidate is the target.  Our multi-round document exchange protocol draws heavily from these ideas, by using noisy hash based comparisons to incrementally discover common substrings of Alice and Bob's documents.  More specifically, we repeatedly use a variation of noisy binary search to find the longest substring of Bob's document that is a prefix of Alice's document.  While similar to Nisan and Safra's use of noisy binary search to find the longest common prefix of two strings \cite{nisan1993communication}, the expansion to substrings greatly complicates realizing our desired communication bound.

\vspace{-8pt}
\section{Preliminaries}
\label{sec:prelims}
We focus on two problems, directory reconciliation and document exchange.  In the problem of directory reconciliation, Alice and Bob each have a directory, represented as a set of at most $s$ documents, each of which is binary string of length at most $h$.\footnote{We use a binary alphabet here for convenience.  Our results easily extended to larger alphabets.}  Recall that to interpret a file directory as a set, the file's name and directory location are encoded as a prefix of the document.  Hence moving a file or changing its name corresponds to a small number of character edits to the corresponding document in our set. 

The sum of the sizes of each parties' documents is at most $n$.  Bob's directory is equal to Alice's after a series of at most $d$ edits (single character insertions, deletions, and substitutions) to Alice's documents.  Let $\q$ be an upper bound on the number of documents that differ between Alice and Bob.  In general, we may not have such a bound in which case we use $\q = \min(d,s)$.  We develop protocols designed to terminate with Bob fully recovering Alice's directory.

In document exchange, Alice and Bob each have a binary string ($a$ and $b$ respectively) of length at most $n$.  We have a bound $k$ on either $\Delta_e(a,b)$, the edit distance between $a$ and $b$, or $\Delta_{\bar{e}}(a,b)$, the edit distance with block moves.  $\Delta_e(a,b)$ is equal to the minimum number of character insertions, deletions and substitutions to transform $a$ into $b$.  $\Delta_{\bar{e}}(a,b)$ is equal to the minimum number of block moves, character insertions, deletions, and substitutions to transform $a$ into $b$.  The goal of a document exchange protocol is to allow Bob to recover $a$ as efficiently as possible.

Throughout this paper, we work in the word RAM model, with words of size $\Theta(\log n)$ for both problems.  We refer to the number of rounds of communication in a protocol for the total number of messages sent.  A one round protocol therefore consists of a single message from Alice to Bob.

All of our protocols use the public randomness model, meaning that any random bits used in the protocol are shared between Alice and Bob automatically, without additional communication.  This simplifies our presentation as our protocols make heavy use of various hash functions, and public randomness allow Alice and Bob to be able to use the same hash functions without communication anything about them.  Our protocols can be converted to the private randomness model using minimal additional communication via standard techniques \cite{newman1991private}.  In practice, one would instantiate the public randomness model by sharing of a small random seed to be used for generating all of the random bits used in the protocol.

\subsection{Document Exchange Protocols}
\label{sec:docexchange}
Our protocols often use existing document exchange protocols as subroutines.  Here we review the current state-of-the-art document exchange protocols for the setting of edit distance, and edit distance with block moves.\footnote{At the time of this writing, there is a newly released protocol for document exchange under edit distance without block moves of \cite{haeupler2018optimal}.  As this paper is still in pre-print form, and in particular currently lacks a concrete running time for its document exchange protocol, we have opted not to discuss it here.}  The following is the best known protocol for document exchange under edit distance with block moves.

\begin{theorem}[Theorem 1 of \cite{irmak2005improved}] \label{thm:ims}
Document exchange under edit distance with block moves can be solved in one round using $O(k \log n \log (n / k))$ bits of communication and $O(n \log (n / k))$ time, with probability at least $1-1/\poly(n)$.
\end{theorem}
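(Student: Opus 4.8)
The plan is to establish this via the hierarchical fingerprinting (``IMS sketch'') construction of \cite{irmak2005improved}: Alice sends a Merkle-tree-like sketch of $a$, and Bob reconstructs $a$ by refining, level by level, a parse of $a$ into substrings of his own string $b$. Fix $L = \Theta(\log(n/k))$ levels. At level $\ell$ Alice partitions $a$ into $\Theta(k2^\ell)$ equal-length contiguous blocks of length $w_\ell = \Theta(n/(k2^\ell))$, so that $w_0 = \Theta(n/k)$ shrinks geometrically down to $w_L = \Theta(\log n)$, with each level-$\ell$ block split into two children at level $\ell+1$. Using the shared public randomness to sample a hash family, Alice assigns every block a fingerprint of $\Theta(\log n)$ bits.

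The structural core of the argument is that, since $\Delta_{\bar{e}}(a,b)\le k$, the string $a$ decomposes into $O(k)$ maximal segments, each of which is either a verbatim contiguous substring of $b$ or is built from characters introduced by insertions/substitutions, with only $O(k)$ such ``junk'' characters in total. This follows by tracking the at most $k$ operations: the block moves cut $b$ into $O(k)$ pieces that are then permuted and concatenated, while the $O(k)$ insertions and substitutions contribute $O(k)$ isolated junk positions. Hence a level-$\ell$ block is a genuine substring of $b$ unless it straddles one of the $O(k)$ segment boundaries or contains one of the $O(k)$ junk characters; since the level-$\ell$ blocks partition $a$, at most $O(k)$ of them are ``broken'' in this sense, \emph{at every level}.

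Bob reconstructs top-down. Before handling level $\ell$ he computes, with a rolling hash in $O(n)$ time, the fingerprints of all at most $n$ length-$w_\ell$ substrings of $b$ and stores them in a hash table. Inductively, after level $\ell-1$ Bob has classified each level-$(\ell-1)$ block either as \emph{matched} (he holds a position in $b$ whose length-$w_{\ell-1}$ substring equals the block) or as \emph{broken}; in particular he already knows the fingerprints of all children of matched blocks, i.e.\ all but $O(k)$ of the $\Theta(k2^\ell)$ level-$\ell$ fingerprints. This is exactly where the single-round requirement bites: Alice does not know $b$ and so cannot know which blocks are broken, hence cannot choose which fingerprints to send. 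The fix is that for each level $\ell$ Alice sends a systematic erasure code (e.g.\ Reed--Solomon over a field of size $\poly(n)$) of the vector of level-$\ell$ fingerprints with $O(k)$ redundancy symbols, costing $O(k\log n)$ bits per level and $O(k\log n\log(n/k))$ bits in total. Because Bob knows exactly which fingerprint positions are unknown (the children of the broken parents), he erasure-decodes those $O(k)$ symbols and then looks up every level-$\ell$ fingerprint in his table: a hit re-classifies the block as matched (with a concrete position, all of whose valid choices carry identical content), a miss as broken. At the bottom level Alice additionally sends an $O(k)$-redundancy code on the $\Theta(\log n)$-bit literal contents of the leaf blocks, so Bob can fill in the $O(k)$ broken leaves; concatenating all leaf contents yields $a$.

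The steps needing care --- and the main obstacle --- are the following. The structural lemma must be proved for edit distance \emph{with block moves}, since block moves are exactly what let a block of $a$ match anywhere in $b$ (forcing Bob to use the full rolling-hash table rather than a windowed search near an ``expected'' location), and one must check the $O(k)$ bound on broken blocks survives arbitrary permutation of the $O(k)$ pieces. One must also argue correctness of the fingerprint-based classification: with $\Theta(\log n)$-bit fingerprints and a union bound over the $\poly(n)$ relevant (block, $b$-substring) pairs across all $O(\log(n/k))$ levels, no spurious hash collision occurs except with probability $1/\poly(n)$ --- the origin of the stated success probability. Finally, for the running time one notes that building the $L$ rolling-hash tables costs $O(n\log(n/k))$, and checks that encoding and erasure-decoding the per-level codes (using a fast polynomial code, or just that the hashing term dominates in the relevant parameter range) does not exceed this, giving the claimed $O(n\log(n/k))$.
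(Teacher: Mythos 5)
Your proposal is correct and follows essentially the same route as the paper's sketch of the IMS protocol: $\Theta(\log(n/k))$ levels of $\Theta(\log n)$-bit block fingerprints protected by systematic codes with $O(k)$ redundancy, rolling-hash lookups by Bob, literal blocks of size $\Theta(\log n)$ at the bottom level, and the cut-and-rearrange argument showing only $O(k)$ blocks per level fail to be substrings of $b$. The only cosmetic difference is that you decode each level's code as erasures (Bob knows which fingerprints he is missing), whereas the paper phrases it as a code correcting $O(k)$ errors; both variants give the stated bounds.
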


The protocol for this theorem is fairly simple.  For each of $\Theta(\log (n / k))$ levels, Alice transmits an encoding of $a$.  For each level $i$  in $[\Theta(\log (n / k))]$, she splits her string into $2^i k$ blocks, computes a $\Theta(\log n)$ bit hash of each one, and encodes them using the systematic part of a systematic error correcting code for correcting $O(k)$ errors.  At the bottom level, where the blocks are of size $\Theta(\log n)$, she encodes the blocks themselves rather than hashes of them.  After receiving this message, Bob iterates through the levels.  Since the first level has $O(k)$ blocks, he can decode it immediately and recover Alice's hashes.  He applies a rolling hash to every length $n/k$ contiguous substring of $b$ and if any of Alice's hashes match any of his own, then by inverting the hash (assuming no hash collisions) he knows the contents of that block of $a$.  Bob then continues through the levels, each time using what he has recovered so far of Alice's string to decode the next level's code.  Since there are $O(k)$ edits between their strings, it can be shown that each level will only have $O(k)$ hashes that aren't present in Bob's string.  By decoding the final level, Bob will have recovered Alice's whole string.  The protocol only fails if there are hash collisions, which, given the hash size and number of strings hashed, occurs with probability $1-1/\poly(n)$.

Now we turn to the best known protocol for document exchange under edit distance without block moves.

\begin{theorem}[Theorem 9 of \cite{belazzougui2016edit}] \label{thm:de_best}
Assuming $k < n^{\epsilon}$ for a sufficiently small constant $\epsilon > 0$, document exchange under edit distance can be solved in one round using $O(k(\log^2 k + \log n))$ bits of communication and $O(n (\log k+\log \log n))$ time, with probability at least $1 - 1 / \poly(k \log n)$.
\end{theorem}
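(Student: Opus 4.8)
The plan is to reconstruct the two-tier protocol of \cite{belazzougui2016edit}: an \emph{outer} scheme that, using only $O(k\log n)$ bits, lets Bob localize every ``surviving'' block of $a$ to a window of length $\poly(k)$ inside $b$, thereby reducing the instance to document exchange on strings of total length $\poly(k)$; and an \emph{inner} scheme that solves that reduced instance with $O(k\log^2 k)$ bits. Since the reduced strings have logarithmic length $\Theta(\log k)$, the inner scheme can afford to be cruder, and the two bounds add to $O(k(\log^2 k+\log n))$.

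First I would set up a \emph{robust hierarchical block decomposition}. Fix a fanout $f=\Theta(k^{c})$ for a small constant $c>0$ and, proceeding top down, cut each string into blocks by a content-defined rule: a position is a level-$i$ boundary if a rolling hash of a surrounding window attains a local extremum, subject to a fallback rule that forces a cut whenever a block would otherwise exceed a prescribed length. This yields $O(\log_{f} n)=O(\log n/\log k)$ levels with controlled block sizes. The crucial structural claim --- and the main obstacle --- is that this decomposition is \emph{edit-robust without block moves}: each of the $k$ edits perturbs only $O(1)$ blocks at every level, since the cut rule at any position depends only on an $O(1)$-sized window of characters (modulo the fallback), so at each level the decompositions of $a$ and $b$ agree on all but $O(k)$ blocks. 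Arranging the fallback rule so that it simultaneously bounds block lengths and keeps each edit's effect $O(1)$-local at every level is the delicate point, and it is exactly what removes the block-move assumption that IMS needs.

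Given the decomposition, the outer protocol proceeds level by level. At the top level Alice sends $\Theta(\log n)$-bit hashes of all blocks, protected by the systematic part of an error-correcting code tolerating $O(k)$ erroneous symbols; this costs $O(k\log n)$ bits and lets Bob recover the top-level blocks of $a$, and hence a $\poly(k)$-length window of $b$ containing each surviving child block. From then on Alice need only send $\Theta(\log k)$-bit hashes per block, since each block now has only $\poly(k)$ candidate matches in $b$, again protected by an ECC for $O(k)$ errors; so each of the $O(\log_{f} n)$ remaining levels costs $O(k\log k)$ bits, and because $\log_{f} n\cdot\log k=O(\log n)$ the whole outer phase is $O(k\log n)$ bits. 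Bob descends the tree, at each level using his recovered blocks together with his own decomposition to identify matches, decoding the ECC to fix the $O(k)$ mismatches, and narrowing each window. Computing the rolling hashes over the string and running the ECC decodes can be organized to run in $O(n(\log k+\log\log n))$ time.

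It remains to handle the residual instance, in which the unrecovered content has been localized to $O(k)$ windows of total length $m=\poly(k)$. Here a weaker document-exchange bound suffices: rerun the same hierarchical scheme on the length-$m$ string, but now $\log m=\Theta(\log k)$, so there are $O(\log m)=O(\log k)$ levels each costing $O(k\log m)=O(k\log k)$ bits, for $O(k\log^{2}k)$ bits total. Combining the two phases gives $O(k(\log^{2}k+\log n))$ bits and a single round overall. The only failure mode is a hash collision among the $\poly(n)$ hashed blocks; choosing the hash lengths as above makes this probability $1/\poly(k\log n)$, and the hypothesis $k<n^{\epsilon}$ is what ensures the $\poly(k)$ window lengths stay below $n$ and the level/hash-length bookkeeping closes. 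I expect the edit-robust decomposition --- proving that the fallback cut rule keeps every edit's influence $O(1)$-local at all levels while still bounding block lengths --- to be the step requiring the most care.
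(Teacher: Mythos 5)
Your proposal does not follow the route behind the cited result, and it rests on a structural claim that is not established and, as stated, fails. The crux of your argument is the ``edit-robust hierarchical block decomposition'': a content-defined cut rule (rolling-hash local extrema in an $O(1)$-sized window) with a fallback that forces a cut whenever a block would get too long, such that each of the $k$ edits perturbs only $O(1)$ blocks at \emph{every} level. The fallback is exactly where this breaks. On inputs with long stretches containing no content-defined boundary (e.g.\ long runs or adversarial content avoiding local extrema), the fallback cuts sit at fixed offsets from the previous cut, so a single insertion or deletion shifts \emph{all} subsequent fallback boundaries in that stretch, changing $\Omega(n/\ell)$ blocks at a level with block length $\ell$ rather than $O(1)$. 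Once more than $O(k)$ blocks per level can differ, the error-correcting-code budget of $O(k)$ erroneous symbols per level is wrong and the whole outer phase collapses. Getting a deterministic-window parsing in which every edit has $O(1)$ (or even polylogarithmic) influence at all scales while block lengths stay bounded is essentially the locally-consistent-parsing problem, and known constructions (e.g.\ edit-sensitive parsing) only give $O(\log n \log^* n)$ affected blocks per edit --- which is why you cannot simply wave at the fallback rule as a ``delicate point.'' You flag this as the step needing the most care, but it is not a technicality to be patched; it is the theorem's main difficulty, and your scheme as described does not have it.

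For comparison, the actual proof of this theorem (Belazzougui--Zhang, as sketched in this paper) avoids any such deterministic edit-robust decomposition. It uses the randomized CGK embedding of Chakraborty--Goldenberg--Kouck\'y, which maps edit distance into Hamming distance with at most quadratic distortion, and applies it recursively $O(\log\log n)$ times: at each level the embedding is used to match up common pieces of $a$ and $b$ and shrink the unmatched portions, until the residual pieces are small enough that the IMS sketch (\autoref{thm:ims}) on them gives the $O(k(\log^2 k+\log n))$ bound. The randomness of the embedding is what supplies the robustness to edits that your deterministic cut rule cannot. Separately, even granting your decomposition, your accounting of the $\Theta(\log k)$-bit hashes across $O(\log n/\log k)$ levels and the claimed $O(n(\log k+\log\log n))$ running time and $1-1/\poly(k\log n)$ success probability are asserted rather than argued, but these are secondary to the missing decomposition lemma.
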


Note that for the case where $k \geq n^{\epsilon}$, \autoref{thm:ims} represents the best known document exchange protocol, even under edit distance without block moves.  \autoref{thm:de_best} is based on a careful application of the CGK encoding of \cite{chakraborty2016streaming}, which embeds from the edit distance space into Hamming space.  \autoref{thm:de_best} results from applying this encoding $O(\log \log n)$ times, each time matching up common pieces of $a$ and $b$ using the encoding, until at the final level the unmatched strings are of small enough size that applying \autoref{thm:ims} yields the desired bound.

\subsection{Invertible Bloom Lookup Tables}
\label{sec:iblt}
Several of our protocols make use of the Invertible Bloom Lookup Table (IBLT) \cite{gm11}, a data structure representing a set that was designed to solve the set reconciliation problem.  We summarize the structure and its properties here;  more details can be found in \cite{eppstein2011straggler,gm11}.  An IBLT is a hash table with $q$ hash functions and $m$ cells, which stores sets of key-value pairs.  (It can be used to just store keys also.)  We add a key-value pair (each from a universe of size $O(u)$) to the table by updating each of the $q$ cells that the key hashes to.  (We assume these cells are distinct;  for example, one can use a partitioned hash table, with each hash function having $m/q$ cells.)  Each cell has a number of entries: a count of the number of keys hashed to it, an XOR of all of the keys hashed to it, an XOR of a checksum of all of the keys hashed to it, and, if we are using values, an XOR of all the values hashed to it.  The checksum, produced by another hash function, is $O(\log u)$ bits and is meant to guarantee that with high probability, no cells containing distinct keys will have a colliding checksum.  We can also delete a key-value pair from an IBLT through the same operation as adding it, except that now the counts are decremented instead of incremented.

An IBLT is \emph{invertible} because if $m$ is large enough compared to $n$, the number of key-value pairs inserted into it, we can recover those $n$ pairs via a peeling process.  Whenever a cell in the table has a count of 1, the key XOR will be equal to the unique key inserted there, and the value XOR will be equal to the unique value inserted there.  We can then delete the pair from the table, potentially revealing new cells with counts of 1 allowing the process to continue until no key-value pairs remain in the table.  This yields the following theorem. 

\begin{theorem}[Theorem 1 of \cite{gm11}] \label{thm:iblt}
There exists a constant $c > 0$ so that an IBLT with $m$ cells ($O(m \log u)$ space) and at most $cm$ key-value pairs will successfully extract all keys with probability at least $1 - O(1/\poly(m))$.
\end{theorem}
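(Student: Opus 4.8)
The plan is to recast the claim as a statement about the emptiness of the $2$-core of a random hypergraph, and then to establish that emptiness with polynomially small failure probability by treating small and large candidate cores separately.

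First I would expose the combinatorial structure of the IBLT. Fix the number of hash functions to a suitable constant $q \ge 3$ and form the hypergraph $G$ whose vertices are the $m$ cells and whose hyperedges are the at most $cm$ inserted keys, key $x$ being mapped to the $q$-set $\{h_1(x),\dots,h_q(x)\}$ of cells it hashes to; with a partitioned table this is a uniformly random $q$-partite $q$-uniform hypergraph. The IBLT's peeling process --- repeatedly take a cell with count $1$, read off and delete the unique key stored there --- is exactly the hypergraph peeling process that deletes a hyperedge incident to some degree-$1$ vertex and recurses. Hence Bob recovers every key if and only if this process deletes every hyperedge, i.e.\ if and only if the $2$-core of $G$ (the maximal sub-hypergraph in which every vertex has degree at least $2$) is empty. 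For a table built solely by insertions a count-$1$ cell is unambiguous, so the checksum field plays no role in correctness here; it is carried only to avoid false positives when IBLTs are subtracted, a situation outside the scope of this theorem.

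Next I would choose $c$ to be any constant strictly below the $2$-core-emergence threshold $c_q^\star$ for random $q$-uniform hypergraphs and bound $\Pr[G \text{ has a nonempty } 2\text{-core}]$. Any nonempty $2$-core contains a minimal \emph{stopping set}: a nonempty set $S$ of (say) $j$ hyperedges every one of whose incident cells is incident to at least two edges of $S$, and such an $S$ spans at most $\lfloor qj/2 \rfloor$ cells. For $2 \le j \le \epsilon m$ I would union-bound over the choice of the $j$ keys and of the spanned cells, multiplied by the probability that all $qj$ hash values land in those cells with every spanned cell hit at least twice; this yields a term of order $\big(O(\sqrt{j/m})\big)^{j}$ up to constants depending on $q$, which is geometric in $j$ on this range and hence dominated by its $j=2$ term --- two ``parallel'' hyperedges --- contributing $O(1/m)$ once $q \ge 3$. (This is precisely why $q=2$ fails: there a short cycle is a stopping set and is present with constant probability.)

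For a candidate core with more than $\epsilon m$ hyperedges I would run the standard analysis of the peeling process, tracking the number of cells of each current degree as hyperedges are removed so that the trajectory is controlled by Wormald's differential-equation method; because $c < c_q^\star$ the deterministic solution drives the number of surviving hyperedges below $\epsilon m$ before the process can stall, and concentration around it follows from an Azuma--Hoeffding bound on the Doob martingale that exposes the keys' hash values one key at a time (each perturbing the degree counts by $O(1)$), giving failure probability $e^{-\Omega(m)}$ in this regime. Combining the two regimes, the total failure probability is $O(1/\poly(m)) + e^{-\Omega(m)} = O(1/\poly(m))$, and the claimed $O(m \log u)$ space is immediate since each of the $m$ cells stores a count and a constant number of $O(\log u)$-bit XOR entries. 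I expect the main obstacle to be this large-core regime: getting genuine exponential concentration for the peeling trajectory rather than a mere in-expectation statement, and --- the truly delicate point --- verifying that the sub-threshold hypothesis $c < c_q^\star$ actually forces the process to clear all but an $\epsilon$-fraction of the hyperedges before stalling, which is where the precise value of $c_q^\star$ (and hence the choice of $q$) enters. The small-core union bound is, by comparison, essentially bookkeeping.
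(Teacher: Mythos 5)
This theorem is imported verbatim from Goodrich and Mitzenmacher \cite{gm11}; the paper itself gives no proof (only the peeling intuition in Section 2.2), so there is no internal argument to compare against. Your reduction to emptiness of the $2$-core of a random $q$-uniform hypergraph ($q\ge 3$, load $c$ below the peeling threshold), with a stopping-set union bound dominated by the two-parallel-edges event for sublinear cores and a Wormald-style differential-equation plus martingale-concentration analysis ruling out linear-size cores, is exactly the standard route by which the cited result is established, so your proposal is correct in outline and matches the source's approach.
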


A useful property of IBLTs is that we can ``delete'' pairs that aren't actually in the table, by allowing the cells' counts to become negative.  In this case, the IBLT represents two disjoint sets, one for the inserted or ``positive'' pairs and one for the deleted or ``negative'' pairs.
This addition requires a minor modification to the peeling process, which allows us to extract both sets.  Now just as we peeled cells with 1 counts by deleting their pairs from the table, we also peel cells with $-1$ counts by adding theirs pairs to the table.  Unfortunately, a cell with count of 1 or $-1$ might have multiple pairs (some from each set) hashed there, whose counts only add up to $\pm 1$. However, we remedy this issue by using our checksums.  With high probability, a cell with a count of $\pm 1$ will actually represent only a single pair if and only if the checksum of the cell's key XOR equals the cell's checksum XOR.

This property of IBLTs allows us to insert each of the items in a single set into the IBLT, then delete the items from another set from the IBLT.  Inverting the IBLT then reveals the contents of the symmetric set difference of the original two sets, and by \autoref{thm:iblt} will succeed with high probability so long as the size of this set difference is at most $cm$.

In most of our uses of the IBLT, we only have keys, and no associated values.  As such, unless otherwise noted, assume that our IBLTs lack value fields, and when we insert or delete an item from the IBLT, we are treating it as a key.  We sometimes refer to ``encoding'' a set in an IBLT as inserting all of its elements into it.  We similarly ``decode'' a set difference from an IBLT by extracting its keys.  

There is one final nice property of IBLTs that we exploit.  Let $T_1$ and $T_2$ be two IBLTs with the same number of cells and the same hash functions.  Let $S_1$ and $S_2$ be two sets, and we insert the items of $S_1$ into $T_1$ and the items of $S_2$ into $T_2$.  If $S_1$ and $S_2$ are disjoint, then we can ``add'' $T_1$ and $T_2$ together to make a single IBLT encoding $S_1 \cup S_2$.  We do this iterating through $i \in [m]$, adding the $i$th cells of $T_1$ and $T_2$ together by summing their counts and XORing their other fields.  Similarly, we can ``subtract'' $T_1$ and $T_2$ to yield a single IBLT encoding the symmetric set difference of $S_1$ and $S_2$.

\subsection{Notation}

Given a (1-indexed) vector $s$ of length $n$, we will use the notation $s_{i:j}$ to refer to subset of $s$ consisting of indices $i$ through $j$ inclusive.  Similarly, $s_{:i}$ refers to the length $i$ prefix of $s$ and $s_{i:}$ refers to the the length $n-i+1$ suffix of $s$.

We will frequently use $\Theta(\log n)$ bit hashes as identifiers for strings.  We use the property that for a sufficiently large constant in the order notation, we have no collisions among at most $\poly(n)$ such hashes with probability at least $1-1/\poly(n)$.

\section{Multi-Round Document Exchange Protocol}
\label{sec:multiround}
Before we provide our own protocol for document exchange, we show that directory reconciliation can be solved via a straightforward reduction to document exchange under edit distance with block moves.  The current state-of-the-art protocol for this problem is \autoref{thm:ims}, and this reduction provides a baseline against which we compare the rest of our protocols.

\begin{restatable}{theorem}{thmreduction} \label{thm:reduction}
Directory reconciliation can be solved in one round using $O(d \log n \log (n/d))$ bits of communication and $O(n \log(n / d))$ time with probability at least $1-1/\poly(n)$.
\end{restatable}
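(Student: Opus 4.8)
The plan is to reduce directory reconciliation to a single instance of document exchange under edit distance with block moves and then apply \autoref{thm:ims}. Alice and Bob each build a canonical string from their directory by sorting its documents in lexicographic order (a total order, since the elements of a set are distinct) and concatenating them with a reserved separator symbol between consecutive documents; over the binary alphabet we realize the separator by a fixed‑length escape encoding that inflates each document by only a constant factor and never collides with encoded document content. Call the two resulting strings $a$ and $b$; each has length $O(n)$, since the documents contribute at most $n$ symbols and there are at most $s \le n$ separators. Alice and Bob then run the protocol of \autoref{thm:ims} on $(a,b)$ with edit bound $k = \Theta(d)$. Once Bob recovers $a$ he splits it at the separators, decodes each block, and thereby recovers Alice's directory exactly.

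The heart of the argument is the bound $\Delta_{\bar{e}}(a,b) = O(d)$. Fix a sequence of at most $d$ single‑character edits taking Alice's directory to Bob's; each edit modifies one document, so at most $d$ documents differ between the directories and the total number of character edits applied to them is at most $d$. We transform $a$ into $b$ in two stages. First, for each document that changes, apply within its block in $a$ the character insertions, deletions, and substitutions it undergoes (valid single‑character operations on a contiguous region), at total cost at most $d$. The result contains exactly Bob's documents as blocks, but in Alice's (old) sorted order rather than Bob's. Second, note that the documents left unchanged keep the same relative order in both sorted orders, so only the at most $d$ changed blocks are out of position; processing them in their target order and relocating each to its correct slot is one block move apiece, for a total of at most $d$ block moves. (Single‑character edits preserve the number of documents, so no separators are inserted or deleted; allowing empty documents to vanish costs only $O(d)$ extra operations anyway.) Hence $\Delta_{\bar{e}}(a,b) \le 2d = O(d)$.

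Substituting $k = O(d)$ and string length $O(n)$ into \autoref{thm:ims} yields a one‑round protocol using $O(k \log n \log(n/k)) = O(d \log n \log(n/d))$ bits of communication, $O(n \log(n/d))$ time, and success probability $1 - 1/\poly(n)$. The only additional work is Alice's sorting and encoding and Bob's parsing: the documents can be sorted in $O(n)$ time by MSD radix sort over the binary alphabet, and the escape encoding and its inverse are linear time and deterministic, so these steps do not affect the asymptotic bounds, and the claim follows.

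I expect the main obstacle to be the block‑move count in the second paragraph: a naive reordering of the edited blocks into Bob's sorted order appears to need up to $s$ block moves, and the reason it actually costs only $O(d)$ is the (easy once spotted, but essential) observation that unchanged documents retain their mutual order under both sortings and hence never have to move. The remaining details — a clean separator encoding over $\{0,1\}$ and the linear‑time radix sort — are routine.
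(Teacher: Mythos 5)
Your proposal is correct and follows essentially the same route as the paper: concatenate the documents in a canonical order with separators, observe that the unchanged documents keep their relative order so only the at most $d$ edited documents need one block move each (giving block-edit distance $O(d)$), and invoke \autoref{thm:ims} with $k=\Theta(d)$. The only differences are cosmetic — you sort lexicographically and use a deterministic escape-encoded separator, whereas the paper sorts by $\Theta(\log n)$-bit hashes and uses a random delineation string (trading a constant-factor blow-up in length and edit count for removing two probabilistic failure modes) — and neither change affects the stated bounds.
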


The protocol for this starts with each party computing a $\Theta(\log n)$ bit hash of each of their documents.  They concatenate their documents together, with each pair separated by a random $\Theta(\log n)$ bit delineation string, then perform document exchange on their concatenated documents via \autoref{thm:ims}, and finally Bob decomposes Alice's concatenated document into Alice's directory.  As we argue in \autoref{app:reduction}, the edit distance with block moves between the concatenated documents is at most $2d$, and thus \autoref{thm:ims} yields the desired bounds.

Now we develop a protocol for document exchange under edit distance with block moves that achieves a communication cost of $O(k \log n)$.  We do this with a multi-round protocol, inspired by noisy binary search algorithms \cite{emamjomeh2016deterministic}, that identifies the common blocks between Alice and Bob's strings via many rounds of back and forth communication.

\begin{theorem} \label{thm:multiround}
Document exchange under edit distance with block moves can be solved in $O(k \log n)$ rounds using $O(k \log n)$ bits of communication and $O(n^2 \log n)$ time, with probability at least $1 - 2^{-\Theta(k\sqrt{\log n})}-1/\poly(n)$.
\end{theorem}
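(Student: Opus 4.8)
The plan is to have Bob reconstruct $a$ from left to right, one block at a time, where the blocks come from a parse of $a$ into substrings of $b$, locating each block by a multi-round noisy-binary-search subroutine built from short hash exchanges. The approach rests on a structural observation: since $\Delta_{\bar{e}}(a,b)\le k$, the string $b$ can be cut into $O(k)$ pieces which, after rearrangement and $O(k)$ single-character edits, spell out $a$, so $a$ is a concatenation of $O(k)$ substrings of $b$ once each leftover edited character is treated as a length-$1$ block (or a length-$0$ ``block'', with Alice sending that character explicitly, when it does not occur in $b$). The \emph{greedy} parse --- repeatedly peel off the longest prefix of the unrecovered suffix of $a$ that is a substring of $b$ --- uses no more pieces than any such parse, since the set of substrings of $b$ is closed under taking prefixes, so it has $O(k)$ pieces; Bob's task in each iteration is to learn the next greedy piece, after which he advances and repeats.

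For a single block, Bob has already recovered $a_{:t}$ and needs $\ell^{*}$, the length of the longest prefix of $a_{t+1:}$ that occurs as a substring of $b$; given $\ell^{*}$ together with one occurrence in $b$, the block contents are read off $b$ with no further communication. I would find $\ell^{*}$ by a noisy binary search over $\{0,1,\dots,n\}$ driven by the query ``is $a_{t+1:t+\ell}$ a substring of $b$?'', which Alice answers by sending a hash of $a_{t+1:t+\ell}$ that Bob tests against the rolling hashes of the length-$\ell$ windows of $b$. Two features make this tractable: the test has \emph{one-sided} error --- a genuine substring always hash-matches, so there are no false negatives and the only failures are false positives from a non-substring colliding with some window of $b$ --- and the noiseless answers are monotone with threshold $\ell^{*}$, so the search can narrow the interval from above definitively whenever it sees a ``no'' while treating each ``yes'' as only tentative. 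Since the observed (noisy) answers are common knowledge, Alice and Bob both run the multiplicative-weights update and agree on the next queried length with no side communication, so each query costs only one hash from Alice and one bit from Bob; an extra full-size $\Theta(\log n)$-bit hash at the end of the block lets Bob verify the piece he commits to and pin down its contents, keeping his reconstruction correct rather than merely plausible.

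The crux --- and exactly where moving from longest common \emph{prefix} to longest common \emph{substring} hurts --- is meeting the communication bound. The substring test above compares one window of $b$ against as many as $n$ distinct candidate strings, so to keep its false-positive probability below a constant one is forced to $\Theta(\log n)$-bit hashes, and at $\Theta(\log n)$ queries per block this gives $\Theta(k\log^{2}n)$ bits, worse than the IMS sketch of \autoref{thm:ims}. To reach $O(k\log n)$ one must restructure so that the comparisons become one-against-\emph{few} --- for instance by first spending one full-size hash per block to localize the block's occurrence in $b$ to a small set of positions, after which short ($O(1)$-bit) hashes suffice for the remaining search steps despite their constant false-positive rate --- with the delicate case being a highly repetitive $b$, where a single localization hash can still leave many matching windows. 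Granting this, a block costs $O(\log n)$ rounds and $O(\log n)$ bits, and summing over the $O(k)$ blocks yields the $O(k\log n)$ bounds. The two failure terms then arise as: a $1/\poly(n)$ term from collisions of the full-size hashes among the $\poly(n)$ strings they touch (localization, verification, and pinning down block contents), and a $2^{-\Theta(k\sqrt{\log n})}$ term from the short-hash searches misbehaving on more than a small constant fraction of the $O(k)$ blocks --- a large-deviation bound over $O(k)$ essentially independent per-block events, each good except with probability $2^{-\Theta(\sqrt{\log n})}$, with the few bad blocks cleaned up by an inexpensive fallback or reconciliation step in the typical case.

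For the running time, each of the $O(k\log n)$ queries costs Bob one rolling-hash sweep of $b$ plus $O(n)$ bookkeeping and Alice $O(n)$ to hash the queried prefix, for $O(nk\log n)=O(n^{2}\log n)$ overall using $k=O(n)$. The step I expect to be the main obstacle is precisely the one just sketched: designing and analyzing the noisy search so that it is robust enough to be driven by short, nearly information-free hash comparisons against many candidates --- handling in particular the repetitive-$b$ case --- while still costing only $O(\log n)$ bits per block, a difficulty that simply does not arise in the longest-common-prefix setting where every comparison is one-against-one.
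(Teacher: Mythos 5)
Your outer loop is essentially the paper's: greedily peel off the longest prefix of the unrecovered suffix of $a$ that occurs as a substring of $b$, use the cut-and-paste decomposition to argue $O(k)$ pieces suffice, verify each committed piece with a $\Theta(\log n)$-bit hash, and tolerate a constant fraction of failed pieces via a large-deviation bound, which is where the $2^{-\Theta(k\sqrt{\log n})}$ and $1/\poly(n)$ terms come from. (The paper additionally caps each piece at length $n/k$, budgets $10k$ phases against $5k$ needed, and has Alice send $\log_2 n$ bits directly when the found prefix is short, but these are secondary.)

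The genuine gap is the per-piece subroutine, and you say so yourself: you must find the length $\ell^*$ of the longest prefix of the current suffix that is a substring of $b$ using $O(\log n)$ bits total, and your argument stops at ``Granting this.'' Your symmetric line-search --- both parties updating weights over $\{0,\dots,n\}$ from the one-bit answer ``did Alice's hash match any window of $b$'' --- fails on exactly the one-against-many issue you identify: with $O(1)$-bit hashes, a query at $\ell > \ell^*$ can return a false positive with probability close to $1$ when $b$ has many distinct length-$\ell$ windows, so the noise is not bounded below any constant and the noisy-binary-search guarantees do not apply; and your proposed patch (one full-size localization hash, then one-against-few short hashes) is unspecified and, as you note, breaks for repetitive $b$ --- indeed ``the occurrence to localize'' is not well defined before the piece length is known. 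The paper resolves this differently: Bob alone runs multiplicative weights over a suffix-tree-like structure of all substrings of $b$ up to length $n/k$, updating per-node weights from each $O(1)$-bit prefix hash, so collisions corrupt only individual nodes rather than the aggregate answer. That creates a new obstacle --- the next query depth is chosen from Bob's weights, which Alice cannot compute, and naming it outright costs $\Theta(\log n)$ bits per round --- and the paper's main technical device is a shared interval partition of $[n]$ that lets Bob steer Alice's next query with $O(1)$ bits per round, a potential-function argument showing a constant fraction of rounds remain informative despite this restricted steering, and a two-stage refinement (rerun on the surviving candidate set, then disambiguate the final $O(\sqrt{\log n})$ candidates with $\Theta(\sqrt{\log n})$-bit hashes) to reach $O(\log n)$ bits per piece. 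Without this machinery or a working substitute, the headline $O(k\log n)$ communication bound is not established; your sketch as given would cost $\Theta(k\log^2 n)$ bits, which you yourself observe is no better than the IMS baseline.
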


This implies a protocol for directory reconciliation via the same reduction as in \autoref{thm:reduction}.

\begin{corollary} \label{cor:multiround}
Directory reconciliation can be solved in $O(d \log n)$ rounds using $O(d \log n)$ bits of communication and $O(n^2 \log n)$ time, with probability at least $1 - 2^{-\Theta(d\sqrt{\log n})}-1/\poly(n)$.
\end{corollary}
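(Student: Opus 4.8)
The plan is to reuse, essentially verbatim, the reduction from directory reconciliation to document exchange under edit distance with block moves that underlies \autoref{thm:reduction}, only feeding the resulting document-exchange instance to \autoref{thm:multiround} in place of \autoref{thm:ims}. Concretely, Alice and Bob each compute a $\Theta(\log n)$-bit hash of every document in their directory, draw a fresh random $\Theta(\log n)$-bit delineation string from the shared public randomness, and form a single string $a$ (resp.\ $b$) by concatenating their documents in hash order, separating consecutive documents by a copy of the delineation string. They then run the protocol of \autoref{thm:multiround} on $a$ and $b$, after which Bob splits the recovered string $a$ at occurrences of the delineation string to read off Alice's directory, using the per-document hashes to verify each recovered piece.

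Two things then need checking: the parameters of the reduction and the failure probability. For the first, I would invoke the structural claim established in \autoref{app:reduction}, namely that each of the $d$ character edits taking Alice's directory to Bob's is realized in the concatenation by at most one character edit together with at most one block move (the block move handles the case where the edit changes which documents are adjacent in hash order), so that $\Delta_{\bar{e}}(a,b) \le 2d$. Applying \autoref{thm:multiround} with edit bound $k = 2d$ to strings of length $n' = n + O(s\log n)$ yields $O(d\log n')$ rounds, $O(d\log n')$ bits, $O(n'^2\log n')$ time, and success probability at least $1 - 2^{-\Theta(d\sqrt{\log n'})}$; since a directory is a set, at most one of its documents is empty, so $s = O(n)$, and under the same conventions as \autoref{thm:reduction} (where the $O(s\log n)$ directory overhead is treated as $O(n)$) we have $n' = \Theta(n)$ and $\log n' = \Theta(\log n)$, so these bounds become $O(d\log n)$ rounds, $O(d\log n)$ bits, $O(n^2\log n)$ time, and success probability at least $1 - 2^{-\Theta(d\sqrt{\log n})}$. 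The only additional failure modes are a spurious occurrence of the random delineation string inside some document and a collision among the $\poly(n)$ document hashes; each occurs with probability $1/\poly(n)$ by a union bound, and adding this to the failure probability of \autoref{thm:multiround} gives the claimed $1 - 2^{-\Theta(d\sqrt{\log n})} - 1/\poly(n)$.

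I do not expect a real obstacle here: all the technical content lives in \autoref{thm:multiround}, and this corollary is a packaging step identical in spirit to the one proving \autoref{thm:reduction}. The single point that must not be glossed over, exactly as there, is that the document reordering forces the block-move model --- without block moves, a single renaming that sends a document to a far-away position in hash order could cost $\Omega(h)$ plain edits rather than $O(1)$, and the reduction would no longer preserve the edit bound up to a constant factor.
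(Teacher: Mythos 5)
Your proposal is correct and is exactly the paper's argument: the paper proves \autoref{cor:multiround} by invoking the same hash-ordered concatenation with random delineation strings as in \autoref{thm:reduction} (detailed in \autoref{app:reduction}, giving $\Delta_{\bar{e}}(a,b)\le 2d$ and concatenated length $O(n)$ since distinct documents force $s\le n/\log n$), and then applying \autoref{thm:multiround} in place of \autoref{thm:ims}, with the extra $1/\poly(n)$ term absorbing hash collisions and spurious delineation-string occurrences. Your only loose point is justifying $s\log n=O(n)$ via ``at most one empty document''; the cleaner argument, as in the paper, is that distinctness of the documents gives $s=O(n/\log n)$.
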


The main technical work in our protocol comes from the following lemma.

\begin{lemma} \label{lem:prefixprotocol}
Given $\kappa \leq n$, Alice can find the longest prefix of her document up to length $n / \kappa$ that is a contiguous substring of Bob's document in $O(\log n)$ rounds using $O(\log n)$ bits of communication and $O((n^2 / \kappa) \log n)$ time, with probability at least $1 - 2^{-\sqrt{\ln n}}$.
\end{lemma}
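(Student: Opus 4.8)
The plan is to cast the task as a \emph{noisy binary search} for the threshold $L^*$, the length of the longest prefix of Alice's string $a$ (capped at $n/\kappa$) that occurs as a contiguous substring of Bob's string $b$, over the path of candidate lengths $\{0,1,\dots,\lfloor n/\kappa\rfloor\}$. The predicate ``$a_{:\ell}$ occurs in $b$'' is monotone in $\ell$, so $L^*$ is exactly the threshold. A single query at length $\ell$ is implemented cheaply: using public randomness, Alice and Bob agree on a fresh hash function; Alice sends the hash of $a_{:\ell}$ (only a few bits), Bob hashes the length-$\ell$ windows of $b$ restricted to a maintained set of candidate alignment positions, and replies with one bit saying whether any of them collides with Alice's value. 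Crucially this query is \emph{one-sided}: if $\ell \le L^*$ then $a_{:\ell}$ genuinely occurs in $b$, the honest alignment survives every previous hash check, and Bob answers ``yes'' with certainty; the only possible error is a false ``yes'' when $\ell > L^*$, caused by a stray hash collision. Alice then runs a multiplicative-weights / Bayesian noisy binary search in the style of \cite{emamjomeh2016deterministic}: maintain weights over the candidate lengths, query the weighted median, update according to Bob's bit (treating ``no'' as reliable evidence that $L^*<\ell$ and ``yes'' as noisy evidence that $L^*\ge\ell$), and after $\Theta(\log(n/\kappa)) = \Theta(\log n)$ rounds output the length with the most weight together with its surviving alignment. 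Because a fresh independent hash is used each round, a persistent run of false positives above $L^*$ is broken within a constant expected number of queries, so the search cannot get stuck.

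The steps, in order: (1) formalize the noisy-binary-search reduction and the weight-update rule, and invoke the standard guarantee that $T$ rounds with per-query error probability bounded by a constant $\rho<1/2$ locate the threshold except with probability $2^{-\Omega((1/2-\rho)^2 T)}$ plus a term accounting for the search depth; taking $T=\Theta(\log n)$ both completes the search and keeps this contribution below $2^{-\sqrt{\ln n}}$. (2) Verify the one-sidedness claim and bound the false-positive probability of a query: against a candidate set of size $m$ with a $c$-bit hash it is at most $m\,2^{-c}$, so $c=\log m + O(1)$ bits suffice to make $\rho$ a constant. (3) Keep $m$, and hence the amortized per-query bit cost, small: as the search interval shrinks, prune Bob's candidate alignment set (any window that failed a previous query's hash check is discarded), and use only $O(1)$ full $\Theta(\log n)$-bit ``anchoring'' queries at critical points to pin down the alignment set; combined with $O(\log n)$ constant-size queries this yields $O(\log n)$ bits and $O(\log n)$ rounds in total. (4) Account for running time: in each of the $O(\log n)$ rounds Bob (and Alice) hashes $O(n)$ windows of length at most $n/\kappa$, i.e. $O(n^2/\kappa)$ work per round, for $O((n^2/\kappa)\log n)$ overall.

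I expect step (3) to be the main obstacle, and it is precisely the ``expansion to substrings'' difficulty flagged in the introduction. For the longest-common-prefix problem there is a single alignment ($b$ against $a$ from the front), so a constant-length hash already yields a constant error probability and classical noisy binary search applies verbatim; for substrings a prefix may occur — or merely hash-collide — at up to $n$ positions of $b$, so a constant-length hash has false-positive probability close to $1$, and naively restoring a small error probability would cost $\Theta(\log n)$ bits per query and hence $\Theta(\log^2 n)$ bits over the whole search. The resolution must interleave the length-search with pruning of the alignment set and a parsimonious use of long hashes, and must show the noisy search still converges under the residual one-sided error; repetitive or near-periodic $b$, where alignment sets stay large no matter how far the search has progressed, is the stress case, and is the reason the success guarantee degrades from $1/\poly(n)$ to $1-2^{-\Theta(\sqrt{\log n})}$. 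A clean accounting that simultaneously meets the $O(\log n)$-bit, $O(\log n)$-round, and $2^{-\sqrt{\ln n}}$-error targets is the technical heart of the proof.
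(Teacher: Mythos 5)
There is a genuine gap, and it sits exactly where you predicted: step (3). In your scheme the weights live on Alice's side over the candidate lengths, and Bob must compress his evidence into a single ``does any window match?'' bit. The false-positive probability of that bit is (number of candidate windows)$\times 2^{-c}$, so with $O(1)$-bit hashes and up to $\Theta(n)$ windows it is essentially $1$, not a constant $\rho<1/2$; your claim that ``a persistent run of false positives above $L^*$ is broken within a constant expected number of queries'' presupposes precisely the constant per-query error rate that has not been established. The proposed repair does not close this. Pruning is only \emph{monotone downward}: a hash mismatch for window $j$ at length $\ell''$ certifies $b_{j:j+\ell''-1}\neq a_{:\ell''}$ and hence rules $j$ out only for queries at lengths $\geq \ell''$; it says nothing about lengths $<\ell''$. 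But the noisy search must repeatedly query lengths \emph{below} earlier query points (whenever it receives ``no''-type evidence), and for those queries the legitimately usable candidate set is essentially unpruned, so the false-positive rate is again near $1$ unless Alice spends $\Theta(\log n)$ bits per query --- giving $\Theta(\log^2 n)$ total. Worse, you cannot instead keep only ``windows that survived all previous checks'': the true alignment $j^*$ of $a_{:L^*}$ is \emph{supposed} to fail every check at lengths $>L^*$, so that rule discards $j^*$ and destroys the one-sidedness you rely on. The ``$O(1)$ anchoring queries'' are not specified concretely enough to rescue this, and it is unclear any small number of long hashes can, since the problematic regime recurs every time the search dips below a previous query point while still above $L^*$ (your own stress case of near-periodic $b$).

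The paper takes a structurally different route that avoids this bottleneck: the multiplicative weights are maintained by \emph{Bob}, over the tree of all substrings of $b$ of length at most $n/\kappa$ (noisy search in a graph in the sense of Emamjomeh-Zadeh et al.), and Alice's $O(1)$-bit hash of $a_{:\ell}$ updates the weights of every depth-$\ell$ subtree at once; one-sidedness plus the drift analysis (\autoref{lem:ybounded}, \autoref{lem:ymean}) tolerates constant per-node collision probability, so no query ever needs $\log(\#\text{candidates})$ bits. The technical crux is then the \emph{reverse} communication problem --- Bob must steer Alice's next query depth using only $O(1)$ bits --- which the paper solves with the shared interval partition, the $\pm 3$ index moves, and the potential-function argument of \autoref{lem:informative}, followed by a two-stage reduction to a candidate list $M_2$ of size $O(\log\log n + \sqrt{\log n})$ verified with $\Theta(\sqrt{\log n})$-bit hashes. (Incidentally, the $1-2^{-\sqrt{\ln n}}$ bound comes from this deliberate choice of short verification hashes to stay within $O(\log n)$ total bits, not from repetitive-$b$ pathologies.) To make your length-search approach work you would need a mechanism playing the role of these ingredients; as written, the argument fails at the per-query error bound.
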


We prove \autoref{lem:prefixprotocol} later, building off of techniques from noisy binary search in graphs \cite{emamjomeh2016deterministic}.  We use this lemma as a subroutine in our protocol for \autoref{thm:multiround} by incrementally building up a larger and larger prefix of Alice's document that is known to Bob.  Along the way \autoref{lem:prefixprotocol} may fail due to its own internal randomness, but we make no assumptions on what mode that failure takes.  \autoref{lem:prefixprotocol}'s protocol may abort and report failure, it may report a prefix of Alice's document that is too long, and thus is not a substring of Bob's document, or it may report a prefix that is a substring of Bob's document but is not the longest possible one.  We show that so long as most of our applications of \autoref{lem:prefixprotocol} succeed, no matter what form the failures take, our resulting protocol for document exchange will succeed.

\begin{proof}[Proof of \autoref{thm:multiround}]
Let $a \in \{0,1\}^n$ be Alice's document and $b \in \{0,1\}^n$ be Bob's document.

Our protocol will happen in $t\leq 10k$ phases.  After each phase, Bob will have recovered a progressively larger prefix of $a$.  After the $i$th phase he will have recovered $a^{(i)}$, where $a^{(i)}$ is a prefix of $a$ and $a^{(t)} = a$ with probability at least $1 - 2^{-\Theta(k\sqrt{\log n})}-1/\poly(n)$.  We use $a - a^{(i)}$ to refer to the string $a$ after removing the prefix $a^{(i)}$.  Given a string $s$, we use $|s|$ to refer to the length of $s$.

In the $i$th phase, Alice and Bob uses \autoref{lem:prefixprotocol} (setting $\kappa = k$) to find the largest prefix $s$ (up to length $n / k$) of $a - a^{(i-1)}$ contained in $b$.  If $s$ is shorter than $\log_2 n$, Alice directly transmits the first $\log_2 n$ bits of $a - a^{(i-1)}$, thus $|a^{(i)}| = |a^{(i-1)}| + \log_2 n$.  Otherwise, Alice transmits $|s|$, along with a $\Theta(\log n)$ bit hash of $s$ to Bob, who compares it to to the hash of each length $|s|$ substring in $b$.  With probability at least $1 - 1/\poly(n)$, the hash of $s$ matches only one unique substring of $b$, and that is $s$, thus Bob has recovered $a^{(i)} = a^{(i-1)} + s$.  If at any point a failure occurs, either one detected in \autoref{lem:prefixprotocol} or if the hash of $s$ does not have a unique match, we move on letting $a^{(i)} = a^{(i-1)}$.

First we argue that assuming no hashing failures or failures in \autoref{lem:prefixprotocol}, the protocol succeeds with $t \leq 5k$.  The argument follows that of Lemma 3.1 of \cite{irmak2005improved}.  We imagine that $b$ is written on a long piece of paper, and we perform each edit operation to transform $b$ into $a$ by cutting the paper, rearranging pieces and inserting individual characters for insert and substitution operations.  Each operation requires at most 3 cuts, so $a$ consists of the concatenation of at most $3k$ substrings of $b$ plus up to $k$ newly inserted characters.  In each phase of the algorithm, we either recover up to the end of one of the substrings, at least one of the inserted characters, or $n / k$ characters.  The first case can happen at most $3k$ times, the second and third cases can each happen at most $k$ times, so at most $5k$ phases are required to recover $a$ in its entirety.

There are two ways that a phase can fail; either the hash of $s$ can match with a substring of $b$ that does not equal $s$, or \autoref{lem:prefixprotocol} can fail.  Union bounding over the $t \leq 10k$ phases, with probability at least $1 - 1/\poly(n)$, none of the hashes mismatch.  \autoref{lem:prefixprotocol} can fail one of three ways: it can report an $s$ which is too short, report an $s$ that is too long, or it can fail to find an $s$ at all.  If it fails to find an $s$, we have case where $a^{(i)} = a^{(i-1)}$, basically meaning we try again with fresh randomness.  If $s$ is too long, then assuming no hash mismatches, Bob will not find a substring of $b$ that matches $s$'s hash so we once again have $a^{(i)} = a^{(i-1)}$.  Finally if $s$ is too short, $|a^{(i)}| > |a^{(i-1)}|$ but we won't have completed one of the at most $3k$ substrings of $b$ of which $a$ is comprised.

So long as \autoref{lem:prefixprotocol} succeeds $5k$ times in at most $10k$ phases, the protocol succeeds.  Since each application of \autoref{lem:prefixprotocol} succeeds with probability at least $1 - 2^{\sqrt{\ln n}}$, by a Chernoff bound, we succeed at least $5k$ times in $10k$ attempts with probability at least $1 - 2^{-k(\sqrt{\ln n}-2)}$ so our overall success probability is at least $1 - 2^{-\Theta(k\sqrt{\log n})}-1/\poly(n)$ as desired.

Each phase uses $O(\log n)$ rounds totaling $O(\log n)$ bits of communication and takes $O((n^2 / k) \log n)$ time, thus the whole protocol takes $O(k \log n)$ rounds, uses $O(k \log n)$ bits of communication and takes $O(n^2 \log n)$ time. 
\end{proof}

Now we sketch the protocol for \autoref{lem:prefixprotocol}, with the details presented in Algorithm \ref{alg:prefixprotocol}.  In our protocol, Bob will represent all contiguous substrings of his document of length at most $n/\kappa$ using a tree $T$.  Each node in $T$ represents a substring of $b$.  The root of $T$ corresponds to the empty string, and each node at depth $r$ corresponds to a unique substring of length $r$.  A node at depth $r$'s parent is the node corresponds to its length $r-1$ prefix.  $T$ is essentially an uncompressed suffix tree \cite{gusfield1997algorithms} formed from the reverse of each of the length $n/\kappa$ substrings of $b$.

\begin{algorithm}[h!]
\caption{$\textsc{MultWeightsProtocol}(a,I,t,T)$: Alice inputs $(a, I, t)$. Bob inputs $(T, I, t)$.}
\label{alg:mwp}
\begin{itemize}
\item Alice and Bob: $t \gets 1$ and $q_1 \gets 1$.  
\item Bob: $w(u) \gets 1$ for all $i \in t$.  $M \gets \emptyset$.
\item For $i = 1$ to $t$:
\begin{itemize}
\item Bob: if $\exists u \in T$ such that $w(u) > w(T)/2$: 
\begin{itemize}
\item Bob: $w(u) \gets 0$, $M \gets M \cup \{u\}$ and $\eta \gets 0$.
\item Bob: send $\eta$ to Alice.
\end{itemize}
\item Bob: otherwise:
\begin{itemize}
\item Bob: $\rho \gets \argmin_r \max\left(w(T_{-r}), \max_{u \in D_r} w(T_u)\right)$. Sets $o$ such that $\rho \in I_o$.
\item Bob: if $o \geq q_1$, $q_1' \gets \min(o, q_1+3)$.  Otherwise, $q_1' \gets \max(o, q_1-3)$.
\item Bob: $\ell_1' \gets I_{q_1',\left\lceil|I_{q_1'}|/2\right\rceil}$ (the midpoint of $I_{q_1'}$) and $\eta \gets 1$.
\item Bob: if $q_1' = o$ and $\ell_1'$ is not balanced with respect to $w$, $\eta \gets 2$, we set either $q_2' \gets q_1' +1$ or $q_2' -1$ so that $\ell_1'$ and $\ell_2' = I_{q_2',\left\lceil|I_{q_2'}|/2\right\rceil}$ straddle $\rho$, meaning that $\min(\ell_1',\ell_2') < \rho < \max(\ell_1',\ell_2')$.
\item Bob: send $\eta$ and $q_j' - q_j$ for $j \in [\eta]$ to Alice.
\end{itemize}
\item Alice and Bob: for $j \in [\eta]$: $q_j \gets q_j'$, $m_j \gets \left\lceil|I_{q_j}|/2\right\rceil$, $\ell_j \gets I_{q_j,m_j}$.
\item Alice: for $j \in [\eta]$, send $h(a_{:\ell_j})$ to Bob.
\item For $j \in [\eta]$: for $u \in D_{\ell_j}$:
\begin{itemize}
\item Bob: if $h(a_{:{\ell_j}}) = h(u)$, for all $v \in T_u$, $w(v) \gets w(v) \cdot p$ and for all $v \in T - T_u$, $w(v) \gets w(v) \cdot (1-p)$.
\item Bob: if $h(a_{:{\ell_j}}) \neq h(u)$, for all $v \in T_u$, $w(v) \gets w(v) \cdot (1-p)^2$ and for all $v \in T - T_u$, $w(v) \gets w(v) \cdot p^2$.
\end{itemize}
\item Alice and Bob: split $I$ around the queried points.  Specifically, for $j \in [\eta]$ in descending order of $q_j$, if $|I_{q_j}| > 1$ then $I \gets I_{:q_j-1} + I_{q_j,:m_j-1} + I_{q_j,m_j:} + I_{q_j+1:}$
\end{itemize}
\item Bob: send to Alice a $2t$-bit vector $z$ indicating for each phase whether any nodes at depth $\ell_1$ or $\ell_2$ are in $M$.
\item Alice and Bob: construct $I'$ such that $|I'|=1$ and $I'_1$ is the ordered list of unique depths of nodes in $M$.
\item Alice: return $I'$.
\item Bob: return $I'$ and $M$.
\end{itemize}
\end{algorithm}

\begin{algorithm}[h!]
\caption{Alice inputs $a$, $n$ and $\kappa$. Bob inputs $b$, $n$ and $\kappa$.}
\label{alg:prefixprotocol}
\label{alg:}
\begin{itemize}
\item Bob: construct $T_1$, the tree representing all contiguous substrings of $b$ up to length $n/\kappa$.
\item Alice and Bob: $\delta \gets 2^{-\sqrt{\ln n}} / 3$ and $t_1 \gets 85 \ln n + 63 \ln(1/\delta)$.
\item $I_1,M_1 \gets \textsc{MultWeightsProtocol}(a,[n],t_1,T_1)$
\item Bob: construct $T_2$, the tree representing the hierarchy of the nodes in $M_1$.
\item Alice and Bob: $t_2 \gets 82 \ln t_1 + 63 \ln(1/\delta)$.
\item $I_2,M_2 \gets \textsc{MultWeightsProtocol}(a,I_1,t_2,T_2)$
\item For $r \in I_2$ (in decreasing order):
\begin{itemize}
\item Alice: send $\Theta(\log t_2 + \log (1/\delta))$-bit hash $h'(a_{:r})$ to Bob
\item Bob: if $\exists u \in M_2 \cap D_r$ such that $h'(u) = h'(a_{:r})$, send $1$ to Alice and return $u$
\item Alice: if received $1$, return $r$
\end{itemize}
\item Alice and Bob: return failure
\end{itemize}
\end{algorithm}

We use $D_r$ to refer to the set of nodes in $T$ at depth $r$.  Given a node $u$, $T_u$ refers to the subtree rooted at $u$ in $T$.  We use $T_{-r}$ to refer to the tree above depth $r$.  In other words $T_{-r} = T - \cup_{u\in D_r} T_u$.

Our protocol will consist of $\Theta(\log n)$ phases.  In the $i$th phase, we draw an $O(1)$ bit hash function $h$, and Alice sends $h(a_{:\ell})$ to Bob.  Bob then compares this hash to the depth $\ell$ nodes in $T$ (the length $\ell$ substrings of $b$).  Bob then responds with information to determine what $\ell$ to use in the next phase.  After $\Theta(\log n)$ rounds of this, we hope to gain enough information from these hashes for Bob to confidently determine the length of the largest prefix of $a$ which is contained in $b$.

The set of prefixes of $a$ which are contained in $b$ correspond to a path from the root in $T$.  We are searching for a single node $g$ in $T$, which is the last node in this path.  Whenever $h(a_{:\ell})$ does not match the hash of a node $u$ in $T$, we know that $g$ is not in the subtree rooted at $u$.  When the hash does match, then $g$ is more likely to be in the subtree rooted at $u$, and less likely to be in the rest of $T$.  This is essentially the kind of feedback used in noisy binary search in graphs \cite{emamjomeh2016deterministic}.  If we were unconstrained in the amount of communication Bob sends to Alice to determine the next $\ell$, and Bob could simply send $\Theta(\log n)$ bits to exactly specify the optimum $\ell$, then we could reduce directly to the algorithm of \cite{emamjomeh2016deterministic}.  However, in order to achieve the lemma we must restrict ourselves to $O(1)$ bits of communication per phase.  We emphasize that this is where most of our technical contribution lies, as it requires new techniques and analysis to handle this restriction.

Both Alice and Bob maintain a partitioning of the range $[n]$ into intervals $I_1,\ldots,I_{t}$.  Before the first phase, they just have one interval containing all of $[n]$.  In each phase, Alice and Bob agree to query in the $q$th interval.  Alice then chooses $\ell$ to be the midpoint of $I_{q}$.  After transmitting the $h(a_{:\ell})$, the interval containing $\ell$ will be split in two around $\ell$.  Bob then tells Alice which interval to choose the next $\ell$ from, by transmitting $O(1)$ bits specifying how much to add or subtract from $q$.

We choose the size of $h_i$ so that the probability of two unequal strings having matching hash values is at most $1-p = 1/3$.  For each node $u \in T$, Bob maintains a weight $w(u)$.  All the weights are initialized to $1$, and Bob updates them in response to the hashes he receives from Alice.  For each $u \in D_{\ell}$, if $h(a_{:\ell}) = h(u)$, we multiply each of the weights in $T_u$ by $p$ and we multiply the weights of all other nodes in $T$ by $1-p$.  If $h(a_{:\ell}) \neq h(u)$, we multiply each of the weights in $T_u$ by $(1-p)^2$ and we multiply the weights of all other nodes in $T$ by $p^2$.  Here we are using the one-sidedness of comparisons, in that if the hashes do not match then we know that the strings do not match with probability 1.  As a result, we could instead zero the weights of the nodes in $T_u$ when $h(a_{:\ell}) \neq h(u)$, but doing so would not improve our asymptotic result and would complicate the analysis.

Whenever a node's weight is at least half of the total tree weight, we zero its weight and add the node to a list $M$.  We show that after $\Theta(\log n)$ rounds, our goal $g$ is likely to be in $M$.  We then repeat this protocol with a new tree $T_2$ consisting only of nodes in $M$.  $T_2$ is formed by taking the transitive closure of $T$, then removing all nodes not in $M$, then taking the transitive reduction of the result.  In other words, in $T_2$, $u$ is a parent of $v$ if $u$ was $v$'s most recent ancestor in $M$.  Note that unlike $T$, $T_2$ need not be a binary tree.  When running the protocol on $T_2$, we now have that $I$ no longer consists of intervals, but is instead contiguous subsets of the node depths present in $M$, and when we select the midpoint of an element of $I$, we choose the median.  We show that by running the  protocol on $T_2$ for $\Theta(\log \log n)$ rounds, $g$ is likely to be in the new candidate list of nodes $M_2$.  $|M_2| = \Theta(\log \log n)$, so at this point, we directly determine which node in $M_2$ is $g$ by sending a $\Theta(\sqrt{\log n})$ bit hash for each node in $M_2$.

Now we argue that \autoref{alg:prefixprotocol} satisfies \autoref{lem:prefixprotocol}. First we note that all of the steps in \textsc{MultWeightsProtocol} are achievable, with the only non-trivial part being provided by the following lemma.

\begin{lemma} \label{lem:mwpcorrect}
In \textsc{MultWeightsProtocol}, Alice can construct $I'$ using $z$.
\end{lemma}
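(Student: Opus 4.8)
The claim is that in \textsc{MultWeightsProtocol}, Alice can reconstruct the set $I'$ (the ordered list of unique depths of nodes in $M$) using only the $2t$-bit vector $z$ that Bob sends at the end. Let me think about what Alice already knows. Alice knows $I$ and its evolution through all $t$ phases: in each phase she and Bob agree on the query intervals $I_{q_j}$ for $j \in [\eta]$, she learns $\eta$ and the shifts $q_j' - q_j$, so she knows exactly which depths $\ell_1, \ell_2$ (for $j \in [\eta]$) were queried in each phase. She does NOT know $w$, the weights Bob maintains, nor does she know for which phases Bob found a heavy node and added it to $M$ — that information is not transmitted during the main loop (Bob only sends $\eta$, which is $0$ exactly when a heavy node is peeled). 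Wait — actually $\eta = 0$ precisely when Bob adds a node to $M$ in a given phase, so Alice DOES know which phases contributed to $M$... but no: re-reading, when $\eta \geq 1$ Bob may still later add depth-$\ell_j$ nodes to $M$ in subsequent phases, and the vector $z$ is explicitly "indicating for each phase whether any nodes at depth $\ell_1$ or $\ell_2$ are in $M$." So $z$ is what tells Alice which of the queried depths ended up being depths of nodes in $M$.

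**The plan.** I would proceed as follows. First, catalog what Alice knows deterministically: the full interval-splitting history, hence for every phase $i$ and every $j \in [\eta_i]$ the exact depth $\ell_j^{(i)}$ that was queried. Second, observe that $M$ is built up one node at a time, and a node is added to $M$ only at a phase where either $\eta = 0$ (the explicit heavy-node step) — but in that step no query is made, so that node's depth is... hmm, this needs care. Let me reconsider: in the $\eta = 0$ branch, Bob picks $u$ with $w(u) > w(T)/2$ and adds it to $M$; Alice does not learn the depth of this $u$ from the phase itself. So the key realization must be that the $\eta=0$ branch either doesn't happen in a way that matters, or the depths of nodes added this way are recoverable. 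Actually — I suspect the intended reading is that heavy nodes are ONLY extracted inside the query branch (the $\eta=0$ branch is a degenerate case), and that $z$'s job is precisely to flag, per phase, whether the queried depth(s) $\ell_1, \ell_2$ are depths of nodes currently in $M$. Then $I'_1$ is the sorted list of all such flagged depths across all $t$ phases.

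**Constructing $I'$.** Given that interpretation, the reconstruction is direct: for phase $i$, read off bits $z_{2i-1}$ and $z_{2i}$; if $z_{2i-1} = 1$ then depth $\ell_1^{(i)}$ is a depth of some node in $M$, and similarly for $z_{2i}$ with $\ell_2^{(i)}$ (when $\eta_i = 2$). Alice collects the multiset $\{\ell_j^{(i)} : z\text{-bit is }1\}$, deduplicates, sorts in decreasing order, and sets $I' = (I'_1)$ with $|I'| = 1$. I would verify that this matches Bob's construction of $I'$ bit-for-bit, and that the $2t$ bits suffice because there are $t$ phases and at most two queried depths per phase. The one subtlety worth stating carefully: a node at a given depth can be added to $M$ at a phase later than the one where that depth was first queried, but the final vector $z$ is computed at the end, so it correctly reflects the final state of $M$; Alice should not try to infer $M$ incrementally but should just use the final $z$.

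**Main obstacle.** The genuine difficulty is the bookkeeping around the $\eta = 0$ branch and whether every node of $M$ sits at a depth that was queried in some phase — i.e., showing the map from "phases with a flagged query depth" to "depths present in $M$" is onto. If the $\eta=0$ branch can add a node whose depth was never among any $\ell_j$, then $z$ as described is insufficient and one needs an additional argument (perhaps that such a node's depth already appears in $M$ from an earlier query, or that the algorithm's structure forbids this). I expect the proof to argue that the $\eta=0$ case arises only after the relevant depth has already been narrowed down to a singleton interval and queried, so its depth is already recorded; failing that, one enlarges $z$ by a constant factor to also flag $\eta=0$ phases and sends $O(\log n)$ extra bits, which does not affect the asymptotics of \autoref{lem:prefixprotocol}. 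I would make this case-split explicit and confirm the communication budget of \textsc{MultWeightsProtocol} still closes.
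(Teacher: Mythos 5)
You correctly identify both the easy part (Alice knows every queried depth $\ell_1^{(i)},\ell_2^{(i)}$ from the interval bookkeeping, so the final vector $z$ lets her read off which of those depths carry nodes of $M$) and the crux: one must show that \emph{every} node in $M$, including those added in an $\eta=0$ phase where no query is made, lies at a depth that was queried in some phase. But you leave exactly that crux unproven. Your conjectured route (``the $\eta=0$ case arises only after the relevant depth has already been narrowed down to a singleton interval and queried'') is not established and is not how the argument goes, and your fallback does not close the gap either: flagging which phases had $\eta=0$ tells Alice nothing about the \emph{depth} of the node added in such a phase, and transmitting those depths explicitly would cost $\Theta(\log n)$ bits per node, which is not obviously compatible with the $O(\log n)$ total communication of \autoref{lem:prefixprotocol} --- and in any case redesigning the protocol is not a proof of the stated lemma about the protocol as written.

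The missing argument is a short structural one about the weights. Fix a node $u$ at depth $r$ with parent $v$, and suppose depth $r$ is never queried. Every weight update is induced by a query at some depth $r'\neq r$, which partitions $T$ into $T_{-r'}$ and the subtrees $T_{u'}$, $u'\in D_{r'}$; since the only way a depth-$r$ node and its depth-$(r-1)$ parent can land in different parts is $r'=r$, each such update multiplies $w(u)$ and $w(v)$ by the same factor. Hence $w(u)=w(v)$ throughout the protocol, which makes $w(u)>w(T)/2$ impossible (it would force $w(u)+w(v)>w(T)$ while both are counted in $w(T)$). So a node can be added to $M$ --- in either branch, including $\eta=0$ --- only if its depth was queried in some earlier phase, which is precisely the ``onto'' statement you needed; with it, your reconstruction of $I'$ from $z$ goes through as you describe.
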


\begin{proof}
Consider a node $u \in D_r$ and its parent $v$.  If $r$ has not been queried in any phase, then $w(u) = w(v)$.  This is because any weight update performed by a query at a depth $r' \neq r$ will update $w(u)$ and $w(v)$ identically.  If $r' > r$ then either $w(u)$ and $w(v)$ will be unchanged or they will both be multiplied by $1-p$.  If $r < r$ then either $w(u)$ and $w(v)$ will both be multiplied by $p$ or they will both be set to $0$.

In order for $u$ to be added to $M$, $w(u) > w(T)/2$.  This is impossible if $w(u) = w(v)$, therefore $r$ must have been queried in at least one phase.  Thus, every depth $r$ that could be in $I'$ will be included in $z$ so Alice can construct $I'$.
\end{proof}

We say that a depth $r$ is \emph{queried} in a phase $i$ of Algorithm \ref{alg:mwp} if $\ell_1 = r$ or $\ell_2 = r$ in that phase.  A query of depth $r$ partitions the the tree into $|D_r|+1$ components: $T_{-r}$ along with $T_u$ for each $u \in D_r$.  A query of depth $r$ is \emph{balanced} if each component has at most $3/4$ of the total weight in the tree.  In other words, $\max\left(w(T_{-r}),\max_{u \in D_r} w(T_u)\right) \leq (3/4)w(T)$.  A query of depth $r$ is \emph{informative} if the component containing the target node $g$ has total as most $3/4$ of the total mass.  Specifically, if $g \in T_{-r}$ then $w(T_{-r}) \leq (3/4)w(T)$ and if $g \in T_u$ for $u \in D_r$, then $w(T_u) \leq (3/4)w(T)$.  A balanced query is guaranteed to be informative, but an informative query need not be balanced.  We call phase $i$ of Algorithm \ref{alg:mwp} informative if either an informative query is performed or if a node is added to $M$.

\begin{lemma} \label{lem:balanced}
In \textsc{MultWeightsProtocol}, $\rho = \argmin_r \max\left(w(T_{-r}), \max_{u \in D_r} w(T_u)\right)$ is guaranteed to be a balanced query.
\end{lemma}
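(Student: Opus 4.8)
The statement to prove is that $\rho = \argmin_r \max\left(w(T_{-r}), \max_{u \in D_r} w(T_u)\right)$ always yields a balanced query, i.e.\ that this minimum value is at most $(3/4)w(T)$. The plan is to exhibit a particular depth $r^\star$ for which the partition it induces has every component of weight at most $(3/4)w(T)$; since $\rho$ minimizes the max-component-weight, $\rho$ will be at least as good and hence also balanced.

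First I would recall the structure of the partition: a query at depth $r$ splits $T$ into the "top part" $T_{-r}$ (everything strictly above depth $r$) together with one piece $T_u$ for each node $u \in D_r$; these are disjoint and cover $T$, so their weights sum to $w(T)$. I would track the quantity $f(r) = w(T_{-r})$, the weight above depth $r$. This is monotone nondecreasing in $r$ (going deeper only adds more nodes to the top part), with $f(0) = 0$ (or the root's weight alone, which is small) and $f(r_{\max}+1) = w(T)$ once $r$ exceeds the maximum depth. The key is that $T$ is a binary tree, so when we go from depth $r$ to depth $r+1$ the top part grows by exactly $\sum_{u \in D_r} w(u)$, and each $T_u$ for $u \in D_r$ splits into $u$'s own weight plus the subtrees of its (at most two) children at depth $r+1$.

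The main argument: let $r^\star$ be the smallest depth with $f(r^\star) \ge (1/4)w(T)$ — equivalently $w(T_{-r^\star}) \ge (1/4)w(T)$ but $w(T_{-(r^\star-1)}) < (1/4)w(T)$. I claim the query at $r^\star$ is balanced. The top component $T_{-r^\star}$ has weight at least $(1/4)w(T)$, so the remaining pieces together have weight at most $(3/4)w(T)$; I must still bound each individual $T_u$ with $u \in D_{r^\star}$. For this I use that $w(T_{-r^\star}) = w(T_{-(r^\star-1)}) + \sum_{v \in D_{r^\star-1}} w(v) < (1/4)w(T) + \sum_{v \in D_{r^\star-1}} w(v)$. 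Each $u \in D_{r^\star}$ is a child of some $v \in D_{r^\star-1}$, and in the binary tree $T_u$ is contained in $T_v \setminus \{v\}$; moreover the weights are maintained so that along any path weights are nonincreasing by the multiplicative updates, which forces $w(v) \le w(T_v)$ and (crucially) each child-subtree weight to be comparable. I would argue that if some $T_u$ had weight exceeding $(3/4)w(T)$ then already $w(T_v) > (3/4)w(T)$ for its parent $v \in D_{r^\star - 1}$, so $w(T_{-(r^\star-1)}) < (1/4)w(T)$ forces the \emph{sibling} subtree and $v$ itself to have negligible mass; combined with $w(T_{-r^\star}) \ge (1/4)w(T)$ we would get $w(T_u) \le w(T) - w(T_{-r^\star}) \le (3/4)w(T)$, a contradiction. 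The honest accounting here — using only $w(T_u) = w(T) - w(T_{-r^\star}) - \sum_{u' \in D_{r^\star}, u' \ne u} w(T_{u'}) \le w(T) - w(T_{-r^\star}) \le (3/4)w(T)$ — in fact already gives the bound on each $T_u$ directly, so the only real content is choosing $r^\star$ to make $w(T_{-r^\star}) \ge (1/4)w(T)$ while ensuring $w(T_{-r^\star}) \le (3/4)w(T)$.

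The main obstacle is therefore showing that this $r^\star$ also satisfies $w(T_{-r^\star}) \le (3/4)w(T)$, i.e.\ that the top part does not jump from below $(1/4)w(T)$ to above $(3/4)w(T)$ in a single level. This is where binariness of $T$ is essential but not quite sufficient on its own — a single level can contain many nodes. The resolution is the weight-zeroing rule together with the invariant maintained throughout \textsc{MultWeightsProtocol}: whenever any node reaches weight $> w(T)/2$ it is immediately zeroed out and removed into $M$ at the start of that phase, so at the moment we compute $\rho$ \emph{no} node (in particular no $u \in D_{r^\star-1}$) has weight exceeding $w(T)/2$. Hence the jump $f(r^\star) - f(r^\star-1) = \sum_{v \in D_{r^\star-1}} w(v) \le \sum_v w(T_v)$, and since the $T_v$ for $v \in D_{r^\star-1}$ are disjoint with total weight $w(T) - w(T_{-(r^\star-1)}) \le w(T)$, while each single $w(v) \le w(T)/2$, one gets $f(r^\star) = f(r^\star-1) + (\text{jump})$; bounding the jump by $w(T)/2$ (no single $w(v)$ exceeds it) plus $f(r^\star - 1) < (1/4) w(T)$ gives $f(r^\star) < (3/4)w(T)$ as needed. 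I would write this out carefully, being explicit that the "no node above $w(T)/2$" invariant holds at the point in the phase where $\rho$ is chosen (it is enforced by the preceding \texttt{if} branch in Algorithm \ref{alg:mwp}), and then conclude that $\rho$, minimizing the largest component, is balanced.
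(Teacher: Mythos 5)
Your overall strategy (exhibit one balanced depth and use the minimality of $\rho$ to conclude) is sound, and your accounting for the subtree components is fine: once $w(T_{-r^\star}) \geq (1/4)w(T)$, disjointness gives $w(T_u) \leq w(T) - w(T_{-r^\star}) \leq (3/4)w(T)$ for every $u \in D_{r^\star}$. The gap is in the other half, where you bound the level jump $f(r^\star) - f(r^\star-1) = \sum_{v \in D_{r^\star-1}} w(v)$ by $w(T)/2$ ``since no single $w(v)$ exceeds it.'' That step conflates a sum over an entire level with a single node's weight: the zeroing rule bounds each $w(v)$ by $w(T)/2$, but says nothing about the total weight of a level, which can be almost all of $w(T)$ (already at initialization, when all weights are $1$, the deepest level of a complete binary tree carries more than half the total weight; in the second call the tree $T_2$ is not even binary). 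Concretely, take root and depth-one nodes of negligible weight and put essentially all the weight on the depth-two leaves: then $f$ jumps from below $(1/4)w(T)$ directly to $w(T)$, so your witness $r^\star$ is not balanced, even though the lemma itself is true (depth $2$ is a balanced query there). So the flaw is not only in the justification of the jump bound; the chosen witness can genuinely fail, and no choice of threshold for $f$ alone fixes this, because balancedness of a depth depends on the individual subtrees below it, not just on $f$.

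The paper's proof avoids this by arguing by contradiction and using the minimality of $\rho$ itself rather than constructing an explicit balanced depth: if, say, $w(T_{-\rho}) > (3/4)w(T)$, take the shallowest depth $\rho'$ with $w(T_{-\rho'}) = w(T_{-\rho})$; since $\rho$ is the argmin and the top part at $\rho'-1$ is strictly lighter, some single subtree $T_{u'}$ with $u' \in D_{\rho'-1}$ must carry weight $\geq w(T_{-\rho}) > (3/4)w(T)$. But $T_{u'}$ is the single node $u'$ (weight $\leq w(T)/2$ by the zeroing rule) together with subtrees rooted at depth $\rho'$, whose total weight is $w(T) - w(T_{-\rho'}) < (1/4)w(T)$, giving $w(T_{u'}) \leq (3/4)w(T)$, a contradiction; the case of a heavy subtree at depth $\rho$ is symmetric. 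Note that here the per-node bound $w(T)/2$ is applied to exactly one node, which is where your argument needed it to apply to a whole level. If you want to salvage your write-up, you should replace the jump bound with this comparison between $\rho$ and the level just above where the top weight concentrates.
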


\begin{proof}
If $\rho$ is not balanced, then either $w(T_{-\rho}) > (3/4)w(T)$ or there exists a $u \in D_\rho$ such that $w(T_u) > (3/4)w(T)$.  Additionally, by the definition of $\rho$, if it is not balanced then no query is.

Suppose $w(T_{-\rho}) > (3/4)w(T)$.  Let $\rho'$ be the smallest (least deep) query such that $w(T_{\rho'}) = w(T_\rho)$.  By the definition of $\rho$, there must exist a $u' \in D_{\rho'-1}$ for which $w(T_{u'}) \geq w(T_{-\rho}) > (3/4)w(T)$.  $T_{u'}$ consists of $u'$ together with a subset of $\cup_{v \in D_{\rho'}} T_v$.  Since $w(T_{-\rho'}) > (3/4)w(T)$, $\sum_{v \in D_{\rho'}} w(T_v) < (1/4)w(T)$.  Furthermore, every node in the tree has weight at most $w(T)/2$, or it would have been zeroed in a previous step.  Therefore, $w(T_{u'}) \leq w(T)/2 + (1/4)w(T) = (3/4)w(T)$, which is a contradiction so we cannot have that $w(T_{-\rho}) > (3/4)w(T)$.

The argument showing that there cannot exist a $u \in D_\rho$ such that $w(T_u) > (3/4)w(T)$ is essentially identical.
\end{proof}

\begin{lemma} \label{lem:informative}
If $g \in T$, $\textsc{MultWeightsProtocol}(a,I,t,T)$ performs at least $(2t - 8\log_{3/2} \tau)/7$ informative phases, where $\tau$ is the height of $T$.
\end{lemma}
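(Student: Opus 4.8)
The goal is to lower-bound the number of informative phases in \textsc{MultWeightsProtocol}. The plan is to track, as a potential, the number of intervals in the partition $I$, and to charge non-informative phases against the creation of new intervals.

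First I would recall what happens in a phase. In a non-informative phase, no node is added to $M$, so the \texttt{otherwise} branch runs: Bob computes the balanced query depth $\rho$ via \autoref{lem:balanced}, finds the interval index $o$ with $\rho \in I_o$, and then, because Bob may only send $O(1)$ bits, he is constrained to query an interval index $q_1'$ within distance $3$ of the current $q_1$ rather than $o$ itself. If $q_1' = o$ and the midpoint $\ell_1'$ of $I_o$ happens to already be balanced (hence informative, since balanced implies informative), the phase is informative. The only way a phase is \emph{non}-informative is therefore: either $q_1' \ne o$ (Bob could not reach the good interval in one step), or $q_1' = o$ but $\ell_1'$ is unbalanced — in which case Bob issues a second query $\ell_2'$ chosen so that $\ell_1', \ell_2'$ straddle $\rho$; I would argue (this is the crux of the straddling trick, presumably justified in the surrounding text or a companion lemma) that when two queried midpoints straddle the balanced point $\rho$, at least one of the two queries is informative — so actually a two-query ($\eta = 2$) phase is always informative. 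Hence a non-informative phase must be a single-query phase with $q_1' \ne o$.

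Next I would set up the charging argument. Each queried interval of length $> 1$ gets split into two, so a phase with $\eta$ queries increases $|I|$ by at most $\eta$ (and the final consolidation step into $I'$ does not occur until the protocol ends). Over $t$ phases the total number of queries is at most $2t$, so $|I| \le 1 + 2t$ at the end, and in particular the index $q_1$ only ever moves by at most $3$ per phase. The key point is the "pursuit" structure: $q_1$ is chasing the target index $o$, moving at most $3$ steps toward it each phase but possibly having $o$ itself shift because intervals get inserted before it. I would argue that whenever $q_1' \ne o$ (a possibly-non-informative phase), $q_1$ strictly decreases its distance to $o$ by at least... — here is where I need to be careful, because the insertion of new intervals to the left of $o$ (from other queries) can push $o$ up. Since each phase creates at most $2$ new intervals, and each such interval can increase the index gap by at most the number of those insertions landing left of $o$, I would bound the net drift and conclude: after $O(\log_{3/2}\tau)$ consecutive non-informative phases, $q_1$ must have caught up to $o$, and once $q_1 = o$ for a non-informative phase the midpoint is unbalanced, triggering the straddling second query, which makes that phase informative — contradiction. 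So non-informative phases cannot occur more than $O(\log_{3/2}\tau)$ times in a row, and more quantitatively the total count of non-informative phases is at most something like $\tfrac{1}{7}(8\log_{3/2}\tau) + \tfrac{5}{7}t$, giving at least $(2t - 8\log_{3/2}\tau)/7$ informative phases after accounting that informative phases with two queries "use up" two of the $2t$ query budget.

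The main obstacle will be pinning down the exact accounting that yields the constants $2t$, $7$, and $8\log_{3/2}\tau$: I need a clean invariant that simultaneously controls (i) how fast $|I|$ grows, (ii) how the target index $o$ can drift relative to $q_1$ as intervals are inserted, and (iii) the fact that the balanced point $\rho$, and hence $o$, is recomputed every phase against freshly reweighted $w$. The cleanest route is probably a combined potential $\Phi = 7 \cdot (\text{number of non-informative phases so far}) + 8\log_{3/2}(\text{something like the index-distance or the number of intervals between } q_1 \text{ and } o)$, show $\Phi$ cannot exceed $2t$ because each query (of which there are $\le 2t$) contributes a bounded amount, and read off the bound on non-informative phases. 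I would then translate "non-informative phase count $\le (2t + 8\log_{3/2}\tau)/7$... " — actually "informative phase count $\ge (2t - 8\log_{3/2}\tau)/7$" — directly. The requirement $g \in T$ is used only to ensure the target node exists so that "the component containing $g$" is well-defined in the definition of informative; it plays no other role.
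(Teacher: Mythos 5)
Your argument hinges on the claim that a two-query ($\eta=2$) straddling phase is always informative, and that claim is false; it is also the step that lets you reduce all non-informative phases to the ``$q_1'\neq o$'' pursuit case. Straddling only guarantees $\min(\ell_1',\ell_2') < \rho < \max(\ell_1',\ell_2')$, and a query lying \emph{between} $\rho$ and the target depth $r^*$ (the depth of $g$) is informative because $\rho$ is balanced (\autoref{lem:balanced}). But when $r^*$ itself lies between the two queried depths, neither query need be informative: for the shallower query the component containing $g$ is the subtree of $g$'s ancestor at that depth, which can hold far more than $3/4$ of the weight, and for the deeper query the component containing $g$ contains $T_{-\rho}$ strictly, so the balancedness of $\rho$ gives no $3/4$ bound on it. The paper's proof treats exactly this situation (its cases with $\min(\ell_1',\ell_2') \le r^* \le \max(\ell_1',\ell_2')$) not by claiming informativity but by showing that the interval containing $r^*$ (or its successor) is split and shrinks by a constant factor --- which is precisely why its potential carries the terms $\log_2|I_{q^*}|$ and $\log_2|I_{q^*+1}|$. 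Without a substitute for this case, ``non-informative phases are rare'' does not follow from your charging scheme.

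The second structural problem is the anchor of your potential. You charge against the distance from $q_1$ to $o$, the interval containing the balanced point $\rho$; but $\rho$ is recomputed every phase from the freshly reweighted $w$ and can jump arbitrarily, so that distance is not a controllable quantity --- you flag this difficulty yourself but do not resolve it, and the constants $2t$, $7$, $8\log_{3/2}\tau$ are left as targets rather than derived. The paper resolves it by anchoring to the \emph{fixed} target: with $r^*$ the depth of $g$ and $q^*$ the index of the interval containing $r^*$, it uses $\Phi = |q_1 - q^*| + \alpha\bigl(\log_2|I_{q^*}| + \log_2|I_{q^*+1}|\bigr)$ with $\alpha = 4/\log_2(3/2)$, and shows each phase either adds a node to $M$, is informative with $\Phi$ increasing by at most $5$, or decreases $\Phi$ by at least $2$ (using that when $q_1'\neq o$, moving $3$ indices toward $o$ also moves at least $2$ net toward $q^*$ despite splits, and that a query landing between $\rho$ and $r^*$ is informative); since $\Phi$ starts at $2\alpha\log_2\tau$ and stays nonnegative, the bound $(2t-8\log_{3/2}\tau)/7$ follows. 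This also shows your closing remark is off: $g\in T$ is not used merely to make ``the component containing $g$'' well-defined --- its depth $r^*$ is the anchor of the whole potential argument.
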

\begin{proof}
Let $r^*$ be the depth of $g$, and $q^*$ be the interval containing $r^*$.  We introduce the potential function $\Phi = |q_1 - q^*| + \alpha (\log_2 |I_q^*| + \log_2 |I_{q^*+1}|)$, where $\alpha = 4 / \log_2(3/2)$.  If $q^*$ is the final interval, then instead $\Phi = |q_1 - q^*| + \alpha (\log_2 |I_q^*| + \log_2 \tau)$.  We argue that in each phase, either a node is added to $M$, an informative query is performed and $\Phi$ increases by at most 5, or $\Phi$ decreases by at least 2.

No phase can ever increase $\Phi$ by more than 5 since $|q'_1 - q_1| \leq 3$ and at most two intervals are split in two, and $|I_q^*|$ and $|I_{q^*+1}|$ are non-increasing.  

If $q'_1 \neq o$ then we have two cases:
\begin{enumerate}
\item $\ell_1'$ is in between $\rho$ and $r^*$.  In this case $\ell_1'$ is informative because $\rho$ is balanced (by \autoref{lem:balanced}) and $\ell_1'$ is closer to $r^*$ than $\rho$ so the component containing $g$ formed by partitioning around $\ell_1'$ must have at most $3/4$ of the total weight. 
\item $\rho$ is in between $\ell_1'$ and $r^*$ or $r^*$ is in between $\ell_1'$ and $\rho$.  Since $q_1' \neq o$, $q_1'$ must be 3 steps closer to $o$ than $q_1$, which in either ordering means it must also be 3 steps closer to $q^*$.  Thus $|q_1 - q^*|$ must decrease by at least 2 (not 3 because splitting an interval can increase the distance by 1).
\end{enumerate}

If $q'_1 = o$, then we have a few cases:
\begin{enumerate}
\item $\ell_1'$ is balanced.  In this case our query is guaranteed to be informative.
\item $r^* < \min(\ell_1',\ell_2')$.  $\min(\ell_1',\ell_2')$ is in between $r^*$ and $\rho$ so it is guaranteed to be informative.
\item $r^* > \max(\ell_1',\ell_2')$.  $\max(\ell_1',\ell_2')$ is in between $r^*$ and $\rho$ so it is guaranteed to be informative.
\item $|I_q^*| \geq 2$ and $\min(\ell_1',\ell_2') \leq r^* \leq \max(\ell_1',\ell_2')$.  Either $q_1' = q^*$ or $q_2' = q^*$ so $|I_q^*|$ will shrink by a factor of at least $2/3$.  $|q_1 - q^*|$ afterwards will be at most $2$, thus in total $\Phi$ will decrease by at least $\alpha \log_2(3/2) - 2 = 2$.
\item $|I_q^*| = 1$ and $\min(\ell_1',\ell_2')=r^*$.  Either $\max(q_1',q_2') = q^*+1$ so $|I_{q^*+1}|$ will shrink by a factor of at least $2/3$.  $|q_1 - q^*|$ afterwards will be at most $2$, thus in total $\Phi$ will decrease by at least $\alpha \log_2(3/2) - 2 = 2$.
\item $|I_q^*| = 1$ and $\max(\ell_1',\ell_2')=r^*$.  $\max(\ell_1',\ell_2')$ is guaranteed to be an informative query since $\rho < \max(\ell_1',\ell_2')$ so $w(T_g) < (3/4) w(T)$.
\end{enumerate}

Initially, $\Phi = 2 \alpha \log_2 \tau$ since there is only one interval.  $\Phi$ is also always non-negative.  Thus, over $t$ phases, if $\iota$ is our number of informative phases then
\begin{equation*}
2 \alpha \log_2 \tau + 5 \iota - 2 (t - \iota) \geq 0,
\end{equation*}
and so $\iota \geq (2 t - 2 \alpha \log_2 \tau) / 7 = (2t - 8\log_{3/2} \tau)/7$.
\end{proof}

Let $X_i$ be $w(g)/w(T)$ at the beginning of phase $i$ in \textsc{MultWeightsProtocol}.  Let $Y_i = \ln(X_{i+1} / X_i)$. We will use these random variables to bound the probability that at the end of \textsc{MultWeightsProtocol}, $w(g) < w(T)/2$ and thus $g$ is never added to $M$.  First we provide some bounds on them, whose proofs appear in \autoref{app:multiproofs}.

\begin{restatable}{lemma}{lemybounded} \label{lem:ybounded}
If no node is added to $M$ in phase $i$, then $-2\ln\left(\frac{p}{1-p}\right) \leq Y_i \leq 6\ln\left(\frac{p}{1-p}\right)$.  If a node other than $g$ is added to $M$ in phase $i$, then $Y_i \geq \ln 2$.
\end{restatable}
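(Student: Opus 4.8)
The plan is to track $X = w(g)/w(T)$ across a single phase of \textsc{MultWeightsProtocol} and bound its multiplicative change in each of the two scenarios of the lemma.

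The scenario where a node $u \ne g$ is added to $M$ is immediate: the entire phase then consists of nothing but zeroing $w(u)$ and moving $u$ into $M$ (that branch sets $\eta \gets 0$, so every subsequent per-$j$ loop is empty). Since $w(u) > w(T)/2$ at that moment and $u \ne g$, we have $w(g) \le w(T) - w(u) < w(T)/2$; zeroing $w(u)$ leaves $w(g)$ unchanged but reduces the total to $w(T) - w(u) < w(T)/2$, so $X_{i+1} = w(g)/(w(T)-w(u)) > 2X_i$ and hence $Y_i > \ln 2$.

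For the other scenario, note that a phase in which no node is added performs $\eta \in \{1,2\}$ ``queries,'' one at each queried depth, where a query at depth $\ell$ runs the inner loop over all $u \in D_\ell$. As $Y_i$ is the sum of the log-ratios of $X$ over these queries, it suffices to bound the change of $X$ across one query. I would first record a structural fact: if $g$ has an ancestor $v_\ell$ at depth $\ell$ in the current tree (for the first call to \textsc{MultWeightsProtocol} this is exactly when $\ell$ does not exceed the depth of $g$), then the substring attached to $v_\ell$ is precisely $a_{:\ell}$, so the hash comparison for the loop iteration $u = v_\ell$ matches with probability $1$, $g \in T_{v_\ell}$, and every other $u \in D_\ell$ has $g \notin T_u$ and matches $a_{:\ell}$ only on a hash collision; otherwise $g \in T_{-\ell}$, no iteration is guaranteed to match, and every $u \in D_\ell$ still has $g \notin T_u$. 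Now a query at depth $\ell$ partitions $T$ into the parts $T_{-\ell}$ and $\{T_u : u \in D_\ell\}$, and each individual weight update of the query multiplies all weights in one part by a fixed factor and all other weights by a second fixed factor, so after composing the updates each part $P$ has its weight multiplied by some cumulative factor $f_P$. A short computation (bookkeeping which iterations multiply which part by $p$, $1-p$, $(1-p)^2$, or $p^2$) expresses all of these in terms of one common quantity $R$, a product of factors $1-p$ and $p^2$ coming from the individual loop iterations: when the loop contains the guaranteed-match iteration $u = v_\ell$, one gets $f_{T_{v_\ell}} = pR$, $f_{T_{-\ell}} = (1-p)R$, and $f_{T_u} \in \{pR,\,(1-p)^3R/p^2\}$ for the other $u$, so every $f_P$ lies in $[(1-p)^3R/p^2,\,pR]$ with the $g$-containing part $T_{v_\ell}$ attaining the maximum; when there is no such iteration, one gets $f_{T_{-\ell}} = R$ and $f_{T_u} \in \{pR/(1-p),\,(1-p)^2R/p^2\}$, so every $f_P$ lies in $[(1-p)^2R/p^2,\,pR/(1-p)]$ with the $g$-containing part $T_{-\ell}$ having factor exactly $R$. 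Since $w(g)$ is scaled by the factor of its own part while $w(T)$ is scaled by $\sum_P f_P\,w(P)/w(T)$, a convex combination of the $f_P$, the ratio $X_{\mathrm{new}}/X_{\mathrm{old}}$ equals the factor of the $g$-containing part divided by that convex combination, hence lies in $[1,\,(p/(1-p))^3]$ in the first case and in $[(1-p)/p,\,(p/(1-p))^2]$ in the second. Summing these bounds over the at most two queries of the phase gives $-2\ln(p/(1-p)) \le Y_i \le 6\ln(p/(1-p))$.

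The main obstacle is the bookkeeping that produces the formulas for the $f_P$ and, in particular, the verification that the $g$-containing part is always extremal (guaranteed-match case) or else has its factor pinned between the extremes (other case), uniformly over the adversarial pattern of collisions among the $u \in D_\ell$. Two minor points must also be handled: the order in which the loop visits the $u \in D_\ell$ is irrelevant to the $f_P$ (each $u$ contributes a fixed pair of factors depending only on whether $h(u)$ collides with $h(a_{:\ell})$, not on the running weights), and the degenerate situations — an empty $D_\ell$, or a run on the coarsened tree $T_2$ in which $g$'s depth-$\ell$ ancestor in $T$ has been dropped — all fall into the ``$g \in T_{-\ell}$'' case and are absorbed by the same inequality.
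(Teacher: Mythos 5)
Your proof is correct and follows essentially the same route as the paper's: handle the $\eta=0$ case by noting that zeroing a weight exceeding $w(T)/2$ at least doubles $w(g)/w(T)$, and otherwise bound the per-query multiplicative change of $w(g)/w(T)$ by splitting on whether $g$ lies in a queried subtree (where the hash match at $g$'s depth-$\ell$ ancestor is guaranteed) or in $T_{-\ell}$, then sum over the at most two queries, yielding the same per-query extremes $-\ln(p/(1-p))$ and $3\ln(p/(1-p))$. Your explicit bookkeeping of the cumulative factors $f_P$ and the convex-combination bound is just a more detailed justification of the step the paper states directly, namely that $Y_i$ is extremized when no hashes collide (maximum) or all of them collide (minimum).
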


\begin{restatable}{lemma}{lemymean} \label{lem:ymean}
If $p \geq 2/3$, and $g$ is not in $M$ by the end of phase $i$, then $E[Y_i] \geq 0$.  Additionally, if phase $i$ is informative, then $E[Y_i] \geq \ln(4p/(2p^2+p+1))$.
\end{restatable}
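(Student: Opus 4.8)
The plan is to fix a phase $i$, condition on the entire state (weights, intervals, and $M$) at its start, prove the two bounds on $E[Y_i]$ conditionally, and then average over the history. First dispose of the case where a node is added to $M$ at the start of phase $i$: by hypothesis $g\notin M$ throughout the phase, so that node is not $g$, hence $w(g)$ is unchanged while $w(T)$ drops below $w(T)/2$, giving $Y_i\ge\ln 2$ deterministically; since $4p^2-2p+2>0$ this exceeds $\ln\bigl(4p/(2p^2+p+1)\bigr)$, and it is $\ge 0$, so both bounds hold (this also covers the sub-case where a phase is ``informative'' because of an addition to $M$). So assume the phase is in its ``otherwise'' branch and performs one or two queries. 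The key structural fact is that every weight update multiplies the weights of some subtree $T_u$, with $u$ a node at a queried depth, by one factor and those of $T\setminus T_u$ by another; the subtrees rooted at nodes of a fixed depth are disjoint and, together with $T_{-\ell}$, partition $T$. Hence the per-node updates within the phase commute, so a node's final weight equals its initial weight times a product of per-comparison factors independent of the processing order. This buys us (i) a closed form for $Y_i$ and (ii) the freedom, whenever the phase contains an informative query, to process that query first in the analysis, so the subtree-weight fractions it sees are exactly the start-of-phase ones against which informativeness is defined.

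Now compute the effect of the comparisons of a single query at depth $\ell$ processed from weights $w$. Let $r^*=|g|$, let $\beta_u=w(T_u)/w(T)$ for $u\in D_\ell$, and let $\rho_u$ be the ratio of the in-factor to the out-factor of the comparison with $u$, so $\rho_u=p/(1-p)$ if that comparison matches and $\rho_u=(1-p)^2/p^2$ otherwise. Summing the updated weights over the partition and factoring out the product of out-factors gives $w'(T)/w(T)=\bigl(\prod_u\text{out-factor}\bigr)\bigl(1+\sum_{u\in D_\ell}(\rho_u-1)\beta_u\bigr)$. For $g$: if $\ell\le r^*$ then $g$ lies in $T_{u^\star}$, where $u^\star\in D_\ell$ is $g$'s ancestor and has string $a_{:\ell}$, so that comparison matches with probability $1$ and $w'(g)/w(g)=\tfrac{p}{1-p}\prod_u\text{out-factor}$; if $\ell>r^*$ then no node of $D_\ell$ has string $a_{:\ell}$, $g\in T_{-\ell}$, and $w'(g)/w(g)=\prod_u\text{out-factor}$. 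Dividing, the out-factor product cancels and $X'/X=c_\ell/\bigl(1+\sum_{u\in D_\ell}(\rho_u-1)\beta_u\bigr)$ with $c_\ell=p/(1-p)$ when $\ell\le r^*$ and $c_\ell=1$ otherwise; the $\beta_u$ are fixed by the pre-query state, only the $\rho_u$ are random, and when $\ell\le r^*$ the term $u=u^\star$ is deterministically $p/(1-p)$.

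To bound the conditional expectation, use convexity of $-\ln$ and linearity: $E[\ln(X'/X)]\ge\ln c_\ell-\ln\bigl(1+\sum_{u\in D_\ell}\beta_u\,E[\rho_u-1]\bigr)$. For any $u$ that is not $g$'s ancestor, the string of $u$ differs from $a_{:\ell}$, so the match probability $\pi_u\le 1-p$, and $E[\rho_u-1]=\pi_u\tfrac{2p-1}{1-p}-(1-\pi_u)\tfrac{2p-1}{p^2}$ is increasing in $\pi_u$, hence at most its value at $\pi_u=1-p$, namely $-\tfrac{(2p-1)(1-p)}{p}\le 0$. If $\ell\le r^*$, the inner quantity is at most $1+\tfrac{2p-1}{1-p}\beta_{u^\star}$, which is $\le p/(1-p)=c_\ell$ since $\beta_{u^\star}\le 1$, so $E[\ln(X'/X)]\ge 0$; and if the query is informative then $\beta_{u^\star}\le 3/4$, giving $E[\ln(X'/X)]\ge\ln\tfrac{4p}{1+2p}\ge\ln\tfrac{4p}{2p^2+p+1}$ (using $1+2p\le 2p^2+p+1$). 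If $\ell>r^*$, the inner quantity is at most $1-\tfrac{(2p-1)(1-p)}{p}\sum_u\beta_u\le 1=c_\ell$, so $E[\ln(X'/X)]\ge 0$; and if the query is informative then $w(T_{-\ell})\le\tfrac34 w(T)$ forces $\sum_u\beta_u\ge\tfrac14$, and since $1-\tfrac{(2p-1)(1-p)}{4p}=\tfrac{2p^2+p+1}{4p}$ we get $E[\ln(X'/X)]\ge\ln\tfrac{4p}{2p^2+p+1}$.

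Finally assemble the phase. If $\eta=1$ this is the claim with $w$ the start-of-phase weights. If $\eta=2$, write $Y_i$ as the log-ratio change of the first-processed query plus that of the second and take expectations, conditioning on the outcomes of the first; commutativity lets me choose which depth is processed first. Processing either first and applying the general bound to each query gives $E[Y_i]\ge 0$; and if the phase contains an informative query, processing it first gives $\ge\ln\bigl(4p/(2p^2+p+1)\bigr)$ from that query and $\ge 0$ from the other, hence $E[Y_i]\ge\ln\bigl(4p/(2p^2+p+1)\bigr)$. (The conditional bound for the second query needs $\pi_u\le 1-p$ to survive the conditioning, which holds because the hash values on the length-$\ell_2$ strings involved are independent of those on the length-$\ell_1$ strings, e.g.\ in the public-randomness model.) The step I expect to be the main obstacle is this structural reduction: recognizing that the many sequential, per-node, per-$u$ multiplicative updates collapse into the single clean formula for $X'/X$, and realizing that their commutativity is exactly what lets the two-query case sidestep the mismatch between ``informative'' (defined against the phase's starting weights) and the weights a second query actually sees. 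After that, the monotone estimate $E[\rho_u-1]\le-(2p-1)(1-p)/p$ and the two algebraic simplifications yielding $4p/(2p^2+p+1)$ are routine, provided one is careful with the boundary case $\ell=r^*$ and with both orderings of the two queried depths.
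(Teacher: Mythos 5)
Your proposal is correct and follows essentially the same route as the paper's proof: Jensen's inequality applied to the reciprocal ratio $X_i/X_{i+1}$, the bound of $1-p$ on each non-ancestor collision probability, a case split according to whether $g$ lies below or above the queried depth, substitution of the $3/4$ informativeness constraint, and an additive decomposition over the two queries when $\eta=2$. Your product-form identity for $X'/X$ and the explicit commutativity argument (processing the informative query against start-of-phase weights, and noting the conditional collision-probability requirement for the second query) are streamlined refinements of the same argument rather than a different approach, and your constant $\ln(4p/(2p+1))$ in the ancestor case indeed dominates the claimed $\ln(4p/(2p^2+p+1))$.
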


Using these bounds on the $Y_i$s, together with \autoref{lem:informative} we can apply a Chernoff bound to get the following lemma (whose proof also appears in \autoref{app:multiproofs}).

\begin{restatable}{lemma}{lemmwpprob} \label{lem:mwpprob}
If $p=2/3$, $g \in T$, and $t \geq 79 \ln \tau + 3 \ln |T| + 63 \ln(1/\delta)$, where $\tau$ is the height of $T$, then $\textsc{MultWeightsProtocol}(a,I,t,T)$ will output an $M$ which includes $g$ with probability at least $1-\delta$.
\end{restatable}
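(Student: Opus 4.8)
The plan is to track the random variable $X_i = w(g)/w(T)$ and show that after $t$ phases it is likely to have crossed the threshold $1/2$ (at which point $g$ is added to $M$). Equivalently, writing $\ln X_i = \sum_{j<i} Y_j + \ln X_1$ with $Y_j = \ln(X_{j+1}/X_j)$, I want to show that with probability at least $1-\delta$ the partial sums $\sum_{j} Y_j$ grow large enough that $X_i \geq 1/2$ for some $i \leq t$ (or $g$ is added to $M$ for another reason, which only helps). The initial value satisfies $X_1 \geq 1/|T|$, so $\ln X_1 \geq -\ln|T|$, and I need the drift of the $Y_j$'s to overcome this deficit of $\ln |T|$ plus get to $\ln(1/2)$ of the threshold, i.e. a total gain of about $\ln|T|$.

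The key steps, in order. First, by \autoref{lem:informative}, at least $\iota \geq (2t - 8\log_{3/2}\tau)/7$ of the $t$ phases are informative (conditioned on $g$ never entering $M$, since otherwise we are already done). On informative phases, \autoref{lem:ymean} gives $E[Y_i \mid \mathcal{F}_{i-1}] \geq \ln(4p/(2p^2+p+1))$, which at $p = 2/3$ is a fixed positive constant $\mu > 0$; on all other phases $E[Y_i \mid \mathcal{F}_{i-1}] \geq 0$. So the conditional expectation of $\sum_j Y_j$ is at least $\iota\mu$. Second, \autoref{lem:ybounded} gives a deterministic two-sided bound $|Y_i| = O(1)$ on every phase (the constant involves $\ln(p/(1-p)) = \ln 2$ at $p=2/3$). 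This bounded-difference structure lets me apply an Azuma/Hoeffding-type concentration inequality to the martingale $\sum_j (Y_j - E[Y_j\mid\mathcal{F}_{j-1}])$: the sum concentrates around its mean $\geq \iota\mu$ with deviation $O(\sqrt{t})$ except with probability $e^{-\Omega(\iota^2/t)}$, and since $\iota = \Theta(t)$ once $t \gg \log_{3/2}\tau$, this is $e^{-\Omega(t)}$. Third, I choose the constant in the hypothesis $t \geq 79\ln\tau + 3\ln|T| + 63\ln(1/\delta)$ so that $\iota\mu$ minus the $O(\sqrt{t})$ slack is at least $\ln|T| + \ln 2$, and the failure probability $e^{-\Omega(t)}$ is at most $\delta$. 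Then with probability $\geq 1-\delta$ we have $\ln X_i \geq -\ln|T| + (\ln|T| + \ln 2) = \ln(1/2)$ at the final phase (or earlier), forcing $g$ into $M$. A small subtlety to handle cleanly: once $g$'s weight is zeroed and $g$ is added to $M$ the process should be stopped, so the concentration argument should really be run on the stopped process / under the event that $g$ has not yet been added, and one observes that the only way the run ends with $g \notin M$ is if the partial sums never reach the threshold, whose probability we have bounded.

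The main obstacle I expect is bookkeeping the constants and the conditioning, not the conceptual structure. Concretely: (i) making sure the martingale concentration is applied to the right filtration — the $Y_j$ depend on Bob's adaptive choice of query depth and on the hash outcome, and \autoref{lem:ymean} is a statement about the conditional mean given the history, so I need the Azuma bound for martingales with conditionally-bounded, conditionally-biased increments (a standard variant); (ii) pinning down that the coefficient $79$ on $\ln\tau$ is exactly what falls out of $\frac{8}{7}\log_{3/2}\tau$ scaled by $1/\mu$ together with the Hoeffding slack, the $3$ on $\ln|T|$ from the initial deficit, and the $63$ on $\ln(1/\delta)$ from the tail — these are just the particular numbers produced by plugging $p=2/3$ into $\mu = \ln(4p/(2p^2+p+1)) = \ln(8/3 \cdot 3/ (8/9+2/3+1))$ wait, $\mu = \ln\big(\tfrac{8/3}{8/9+5/3}\big)$, and into the increment bounds $-2\ln 2 \leq Y_i \leq 6\ln 2$, so the algebra is routine but must be done carefully to match the stated constants; (iii) verifying that the "other ways $g$ can fail to be in $M$" (namely a different node hitting weight $> w(T)/2$ first and being removed) are already accounted for — this is fine because \autoref{lem:ybounded} says such an event only increases $Y_i$, so it can only help $X_i$ cross the threshold, and \autoref{lem:informative}'s count of informative phases also counts phases where a node is added to $M$.
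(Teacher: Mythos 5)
Your proposal follows essentially the same route as the paper's proof: both reduce the failure event ``$g$ never enters $M$'' to the event $\sum_{i=1}^t Y_i \leq \ln(|T|/2)$ (since $X_1 = 1/|T|$ and failure forces $w(g)\le w(T)/2$ at the end), both get the drift from \autoref{lem:ymean} combined with the count of informative phases from \autoref{lem:informative}, both use \autoref{lem:ybounded} for boundedness, and both finish with a concentration inequality for bounded adaptive increments. The only structural differences are cosmetic: the paper rescales to $[0,1]$-valued variables $Z_i = 1/4 + Y_i/(8\alpha)$ and applies a multiplicative Chernoff lower-tail bound, explicitly splitting off the phases in which some node joins $M$ (your point (iii)), and it sidesteps stopping-time issues by looking only at the final sum on the event that $g$ was never added, which is equivalent to your stopped-process remark.

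The one genuine concern is your item (ii): the specific constants $79$, $3$, $63$ do not ``fall out'' of a generic Azuma/Hoeffding bound. At $p=2/3$ the guaranteed drift per informative phase is only $\beta = \ln\bigl(4p/(2p^2+p+1)\bigr) = \ln(24/23) \approx 0.043$, while the conditional increment range from \autoref{lem:ybounded} is $8\ln 2 \approx 5.5$. A bounded-difference bound on the centered $Y_i$ must absorb a downward deviation of order $t\beta$ against a variance proxy of order $t(8\ln 2)^2$, giving a failure exponent of order $t\beta^2/\ln^2 2$, i.e.\ a coefficient on $\ln(1/\delta)$ on the order of $10^4$--$10^5$ rather than $63$, and similarly inflated coefficients on $\ln|T|$ (roughly $4/\beta$, not $3$) and $\ln\tau$. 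So your argument does prove the lemma in the form ``$t \geq C(\ln\tau + \ln|T| + \ln(1/\delta))$ for a sufficiently large absolute constant $C$,'' which suffices for the asymptotic statements downstream, but it does not recover the constants in the statement as written (which \autoref{alg:prefixprotocol} hard-codes via $t_1$ and $t_2$); matching those numbers requires reproducing the paper's particular rescaling and threshold bookkeeping (itself quite delicate) rather than treating the constant-chasing as routine.
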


Now we put all of these pieces together.

\begin{proof}[Proof of \autoref{lem:prefixprotocol}]
We prove that Algorithm \ref{alg:prefixprotocol} achieves the lemma.  First we argue correctness.

The only step of \textsc{MultWeightsProtocol} where it is not immediate that Alice or Bob has the required information to perform the protocol is whether Bob can construct $I'_1$, which we proved in \autoref{lem:mwpcorrect}.  By \autoref{lem:mwpprob} and our choice of $t_1$ (since $|T_1| \leq n^2$), $M_1$ will contain $g$ with probability at least $1-\delta$.  Since $|M_1| \leq t_1$, together with \autoref{lem:mwpprob}, our choice of $t_2$, and a union bound we have that $M_2$ will contain $g$ with probability at least $1-2\delta$.

With our choice of hash size for $h'$, we can make the probability of hash collisions at most $\delta / t_2 \leq \delta / |M_2|$.  Thus, by a union bound, with probability at least $1-3\delta$ there are no hash collisions and the Alice returns $g$ and Bob returns $|g|$.  Thus by our choice of $\delta$, the protocol succeeds with probability at least $1-2^{-\sqrt{\ln n}}$.

Each phase of \textsc{MultWeightsProtocol} uses $O(1)$ rounds of communication and transmits $O(1)$ bits.  At the end of \textsc{MultWeightsProtocol} there is a single message of size $2t$, thus \textsc{MultWeightsProtocol} performs $O(t)$ rounds of communication and $O(t)$ bits of communication.  Thus, between our two applications of \textsc{MultWeightsProtocol} and the hashes transmitted at the end our rounds of communication are bounded by
\[O(t_1 + t_2 + |I_2|) = O(t_1 + t_2) = O(\log n + \log (1/\delta)) = O(\log n).\]
Our bits of communication are bounded by
\[O(t_1 + t_2 (\log t_2 + \log(1/\delta))) = O(\log n + \log^2(1/\delta)) = O(\log n).\]

The computation bottleneck is the first application of \textsc{MultWeightsProtocol}.  Each phase takes $O(n)$ time compute all of the hash evaluations (using a rolling hash function) and $O(|T_1|)$ time to update all the weights and select the next interval to query, so the total computation time is $O((n+|T_1|) t_1) = O((n^2 / \kappa) \log n)$.
\end{proof}

\section{Sets of Sets Based Protocols} \label{sec:sos_based}

\autoref{cor:multiround} is aimed at minimizing the number of bits of communication, but it does so at the expense of both rounds of communication and computation time.  In this section we develop a single round protocol that is much faster and still outperforms \autoref{thm:reduction} in communication cost for most cases.

\begin{restatable}{theorem}{thmbestsos} \label{thm:best_sos}
Directory reconciliation can be solved in one round using 
\[O(d\log s + d \log^3 \min(d,h) + d \log h \log \min(d,h))\]
bits of communication and 
\[O(nd \log \log h + n \q \log^2 \min(d, h) + n \q \log \min(d,h) \log \log h)\] 
time with probability at least $2/3$.
\end{restatable}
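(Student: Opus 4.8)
The plan is to reduce directory reconciliation to a set-of-sets reconciliation instance and solve that instance by combining the IBLT-based protocols of \cite{mitzenmacher2017reconciling} with a per-document document exchange subroutine. The key structural point, in contrast to \autoref{thm:reduction}, is that we must \emph{not} concatenate the documents: kept separate, Alice's and Bob's directories admit a matching $\pi$ under which at most $\q$ documents change, each changed pair is within \emph{ordinary} edit distance (no block moves are ever needed), and the per-document edit distances $k_j$ together with the lengths of the unmatched documents sum to at most $d$ --- so in particular each $k_j \le \min(d,h)$. The protocol begins with Alice and Bob reconciling the set of $\Theta(\log s)$-bit fingerprints of their documents via an IBLT with $\Theta(\q)$ cells (\autoref{thm:iblt}); this uses $O(\q\log s) = O(d\log s)$ bits and, with constant probability, tells Bob which of his documents are already correct and isolates the changed documents on each side.

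To recover the $\le\q$ changed documents I would invoke the set-of-sets machinery of \cite{mitzenmacher2017reconciling}, with its inner-set reconciliation subroutine replaced by a document exchange protocol: each of Alice's changed documents is represented by the sketch of \autoref{thm:de_best} when its edit bound is below $h^{\epsilon}$, and by the IMS sketch of \autoref{thm:ims} otherwise, and these sketches are placed into the $O(\log\min(d,h))$-level hierarchical IBLT structure that \cite{mitzenmacher2017reconciling} uses to match inner sets. The hierarchy localizes each of Alice's changed documents to the correct document on Bob's side, or diagnoses it as new, in which case it is transmitted outright at cost proportional to its length (charged against $d$ as insertions); once matched, Bob decodes it from its sketch. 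Bounding the communication then amounts to adding the $O(d\log s)$ fingerprint cost to the total size of the document-exchange sketches over all changed documents and all $O(\log\min(d,h))$ hierarchy levels; a routine calculation using $\sum_j k_j \le d$, $k_j \le \min(d,h)$, and the fact that the \autoref{thm:ims} fallback is active only when $\min(d,h) \ge h^{\epsilon}$ (so that $\log h = \Theta(\log\min(d,h))$ there) yields the stated $O(d\log s + d\log^3\min(d,h) + d\log h\log\min(d,h))$ bound. Success with constant probability follows from the IBLT and $\Theta(\cdot)$-bit-hash guarantees together with a union bound over the $O(\log\min(d,h))$ levels.

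The time bound likewise comes from accounting the invoked subroutines: Alice builds her sketches with the CGK-embedding protocol of \autoref{thm:de_best} (construction time $O(\ell(\log k+\log\log h))$ on a length-$\ell$ string), Bob applies rolling hashes and tentative pairwise matching --- an $O(n)$ pass --- at each of the $O(\log\min(d,h))$ hierarchy levels and each of the $O(\log\log h)$ CGK levels, and finally runs the document-exchange decoders on the $\le\q$ changed documents (total length $\le n$), which, together with the $O(\q)$ candidate matches a changed document may be tried against, produces the claimed $O(nd\log\log h + n\q\log^2\min(d,h) + n\q\log\min(d,h)\log\log h)$ time.

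The main obstacle is precisely the matching step: with only $O(d\log s + \poly\log\min(d,h))$ communication and constant failure probability, Bob must decide which of Alice's changed documents corresponds to which of his own when both file names and contents may have changed. Identity matching fails outright, and comparing all $\Theta(\q^2)$ pairs of sketches is too expensive in communication. The resolution, following \cite{mitzenmacher2017reconciling}, is a hierarchical IBLT in which documents sharing a long common portion already collide at an early level, pinning each changed document to a short candidate list before any document-exchange decoding is attempted; calibrating the per-level IBLT sizes so that the failure probability survives a union bound over all $O(\log\min(d,h))$ levels while keeping communication at $O(d\cdot\poly\log\min(d,h))$ rather than $O(d\log n\cdot\poly\log)$ is the delicate part, and it is also what forces the fallback to \autoref{thm:ims} for documents whose edit distance is atypically large.
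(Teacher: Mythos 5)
Your skeleton is essentially the paper's (its \autoref{lem:ibltofdocs} instantiated with \autoref{thm:de_best}, falling back to \autoref{thm:ims} when the level's edit bound reaches $h^{\epsilon}$): per-document encodings consisting of a document-exchange sketch with bound $k=2^i$ plus a short hash, inserted into a level-$i$ IBLT of $\Theta(d/2^i)$ cells. But your account of the central matching step is not how the protocol (or any IBLT) works. You assert that the hierarchy ``pins each changed document to a short candidate list'' because ``documents sharing a long common portion already collide at an early level,'' and you reject all-pairs sketch comparison as ``too expensive in communication.'' An IBLT keyed on these encodings provides no similarity collisions: only \emph{identical} documents cancel when Bob subtracts his own encodings, so the structure gives no localization of near-matches whatsoever. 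In the actual protocol the matching \emph{is} the brute-force step you dismiss: at each level Bob attempts the document-exchange decoding for every pair of (one of Alice's extracted sketches, one of his $O(\q)$ differing documents), and a pair is matched exactly when decoding succeeds and the output agrees with Alice's hash. This costs zero communication (it is entirely Bob-local); its cost is time, which is precisely why the theorem's time bound carries factors $nd$ and $n\q$ rather than $\tilde{O}(n)$, and why Section 4.1 then works to remove it. Your own time accounting quietly assumes this all-pairs decoding (``the $O(\q)$ candidate matches a changed document may be tried against''), so the proposal is internally inconsistent on the mechanism it emphasizes most.

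Two further points are missing. First, unmatched/new documents: in a one-round protocol Alice cannot ``transmit outright'' a document after Bob ``diagnoses it as new''; the paper instead includes, when $h\le d$, a bottom-level IBLT $T_*$ with $\Theta(d/h)$ cells whose values are full $\Theta(h)$-bit documents, from which Bob peels the new documents after deleting everything he already holds or has recovered (and when $d<h$ every differing document of Alice is argued to be recoverable from a sketch at some level $\le \log_2 d$). Second, the probability analysis needs more than ``a union bound over the $O(\log\min(d,h))$ levels.'' The level-$i$ IBLT has only $\Theta(d/2^i)$ cells, so one must show that the number of still-unrecovered differing documents entering level $i$ is $O(d/2^i)$. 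That requires both the grouping bound $|S_j|\le d/2^{j-1}$ (documents at edit distance in $[2^{j-1},2^j-1]$) and a Markov-type argument that documents which should have decoded at earlier levels but failed do not accumulate and overload later IBLTs; this in turn needs the document-exchange subroutine to succeed with probability $1-O(1/k)$ per attempt (which \autoref{thm:de_best} supplies), and conditioning on these per-level events is where the final constant $2/3$ comes from. Without this cascading argument the stated IBLT sizes, and hence the communication bound, are not justified.
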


The protocol here is an adaptation of Theorem 3.7 of \cite{mitzenmacher2017reconciling}, using document exchange as a subroutine instead of set reconciliation.  The basic idea is that we represent each document as a pair consisting of a hash of that document, together with the message that \autoref{thm:de_best} (the best known one round document exchange protocol) would send to allow the other party to recover the document.  For each of $\Theta(\log d)$ different levels, we make a different version of this representation for each document, varying the bound $k$ on the edit distance used in the document exchange protocol.  In the $i$th level, starting at $i=1$, $k=2^i$.  Then, we encode all of the document representations in level $i$ into an IBLT of size $\Theta(d/2^i)$.  These IBLTs are the message from Alice to Bob.

Bob constructs analogous representations and IBLTs.  Bob is able to decode all of the level 1 representations corresponding to differing documents, since the IBLT is of size $\Theta(d)$.  Bob consider all combinations of his documents and Alice's extracted representations.  For each of these pairs, he attempts to perform his side of the document exchange protocol used in \autoref{thm:de_best}.  Since the bound $k$ Alice used at this level was 2, the document exchange protocol will succeed (with high probability) whenever he pairs one of his documents with Alice's representation of a document that differs by at most 2 edits.  For those pairs differing in more than 2 edits, the protocol will generally fail, either reporting failure or by yielding an incorrect document, which he can detect using Alice's document's hash.  He then generates representations for those documents he recovered from Alice, and removes them from the level 2 IBLT so that he can recover most of Alice's documents that have at most 4 edits, and so on, until he has recovered all of Alice's documents.  The full details of the protocol and the proof of the theorem appear in \autoref{app:sosdetails}.

If the only bound on $h$ and $s$ we have is that they are both $O(n)$ (as they must be) then \autoref{thm:best_sos} uses $O(d \log n \log d + d \log^3 d)$ bits of communication, which already generally outperforms our previous best one round protocol, \autoref{thm:reduction}, which uses $O(d \log n \log(n/d))$ bits of communication.  The difference really shows when we have a tight bound on $h$, such as $h = \poly \log n$.  In such a case \autoref{thm:best_sos} uses only $O(d \log n \min(\log d, \log \log n))$ bits of communication, outperforming \autoref{thm:reduction} by at least a factor of $\Theta(\log n / \log \log n)$.  This $h$ regime is not an unnatural one, as it corresponds to the setting where our directory consists of a large number of small files.

\subsection{Speeding Up}

The main weakness of \autoref{thm:best_sos} is the running time, which is $\tilde{\Omega}(nd)$, rather than a more desirable $\tilde{O}(n)$.  The original protocol for reconciling sets of sets on which this was based has a running time of $\tilde{O}(n+\poly(d))$.  The reason that \autoref{thm:best_sos} takes more time is that the document exchange protocols take time linear in the length of the documents to determine if protocol will succeed, while the underlying set reconciliation protocol used in \cite{mitzenmacher2017reconciling} only takes time linear in the specified difference bound.  This is a bottleneck because each document exchange message is compared to many documents to see if they can decode it, but each attempt takes linear time.  In other words, if we can make a document exchange protocol that determines failure within time $O(\poly(k))$, we can reduce the computation time of \autoref{thm:best_sos} to $\tilde{O}(n+\poly(d))$.  This holds even if Bob is allowed $\tilde{O}(n+\poly(k))$ precomputation time before receiving Alice's message, since that would be a one time cost for each document.  We develop such a document exchange protocol here, but it is primarily a proof of concept as it has an increased communication cost and so the resulting directory reconciliation protocol is very rarely superior to both \autoref{thm:reduction} and \autoref{thm:best_sos} simultaneously.  Further progress is necessary to yield an improved directory reconciliation protocol, but we believe that our approach here is a valuable starting point.

Our document exchange protocol is a modification of \autoref{thm:ims}.  We sketch our changes here, and present the full details and proof in \autoref{app:fastims}.  First, we limit ourselves to edit distance, rather than block edit distance, which means that rather than having to consider all $\Theta(n)$ possible blocks as matches, we only have to consider the $k$ shifts of each corresponding block.  For each of his possible blocks, of which there are $k^2$ in the first level, $2 k^2$ in the next and so forth, Bob precomputes the encoding of that block at that, and all lower levels.  So for example the encoding of a block from the first level at the second level would be the combination of the encoding of the first half of the block and the second half of the block.  For our error correcting code, we use IBLTs, which have the advantage that they can combine two of these precomputed pieces efficiently.  This will progress only down to the level where there are $\Theta(n/k^2)$ blocks of size $\Theta(k^2)$, at which point we transmit a direct encoding of the $O(k)$ blocks of size $\Theta(k^2)$, blowing up our communication to $(k \log n \log(n/k) + k^3)$.  There are $O(\log(n/k))$ levels, so the blocks from the first level require $O(k^2 \log(n/k))$ precomputed encodings and the bottom level requires $O(n/k)$ precomputed encodings.  Each encoding takes $O(k)$ time to generate (once we have the lower level encodings and all of the hashes, which take $O(n \log (n/k))$) so the total precomputation time takes roughly $O(n \log(n/k) + k^3 \log (n/k))$.  Once we have these precomputations, we can decode in roughly $O(k^2 \log n)$ time, since at each level we just have to combine $O(k)$ precomputed encodings of size $O(k)$, and decode.  This yields the following theorem.

\begin{restatable}{theorem}{fastims} \label{thm:fastims}
There is a one round protocol for document exchange under edit distance using 
\[O((k \log(n/k) \log (n/\delta)+k^3) \lceil \log_k \log n\rceil)\] 
bits of communication, which succeeds with probability at least $1-1/\poly(k)$ if $\Delta_e(a,b) \leq k$.  Alice takes $O(n \log (n /k))$ computation time, and Bob has an 
\[O\left(n  \log(n/k) \log_n(n/\delta) + k^3 \log(n/k) \log_n(n/\delta) \lceil \log_k \log n\rceil\right)\]
time precomputation step, before receiving Alice's message.  After receiving Alice's message, Bob has an 
\[O(k^2 \log(n/k) \log_n(n/\delta) \lceil \log_k \log n\rceil)\]
time \emph{failure check} step.  If the message passes the failure check, then the protocol will succeed with probability at least $1-\delta$ (independent of $\Delta_e(a,b)$) after a final $O(n \lceil \log_k \log n\rceil)$ time step from Bob.
\end{restatable}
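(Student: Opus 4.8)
The plan is to modify the IMS protocol of \autoref{thm:ims} in two ways: (i) restrict to edit distance, so that when a block of $a$ occurs in $b$ it occurs within $k$ positions of its aligned location, leaving Bob only $O(k)$ candidate offsets per block instead of all $\Theta(n)$ substrings; and (ii) replace the systematic error-correcting code by IBLTs (\autoref{thm:iblt}), so that block encodings are additive and can be assembled from precomputed pieces. Concretely I would build a hierarchy of $\Theta(\log(n/k))$ levels in which level $i$ partitions $a$ into $2^i k$ blocks of length $n/(2^i k)$; Alice hashes each block with a $\Theta(\log(n/\delta))$-bit hash, inserts the (position, hash) pairs into an IBLT with $\Theta(k)$ cells, and sends all these IBLTs as her single message (an $O(n\log(n/k))$-time hashing task on Alice's side, exactly as in \autoref{thm:ims}). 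The hierarchy is stopped once blocks reach length $\Theta(k^2)$, at which point Alice additionally transmits the $O(k)$ still-differing blocks directly, at cost $O(k^3)$; iterating this whole construction to drive block length from $n$ down to $\poly(k)$ contributes the extra $\lceil\log_k\log n\rceil$ factor in every stated bound. Correctness of the level structure is the IMS argument specialized to edit distance: since $\Delta_e(a,b)\le k$, at every level all but $O(k)$ of Alice's blocks equal a length-$(n/2^ik)$ substring of $b$ starting within $k$ of the aligned position, so each decoded IBLT symmetric difference has size $O(k)$ and \autoref{thm:iblt} succeeds with probability $1-1/\poly(k)$ per table.

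For Bob's precomputation, the point is that the blocks of $b$ Bob ever considers at level $i$ are only the $O(2^i k)$ aligned positions shifted by at most $k$, i.e. $O(2^i k^2)$ candidates, so $O(k^2\log(n/k))$ candidates at the top level and $O(n/k)$ at the bottom. Before Alice's message arrives, Bob runs a rolling hash over $b$ to obtain the $\Theta(\log(n/\delta))$-bit hashes of all candidate blocks at all levels (time $O(n\log(n/k))$), stores them in a lookup table, and — crucially — for each candidate precomputes the one-entry IBLT of its (position, hash) pair at each lower level, building the IBLT of a block from the IBLTs of its two halves in $O(k)$ time per combination; summing over candidates, levels, and the $\lceil\log_k\log n\rceil$ iterations gives the stated $O(n\log(n/k)+k^3\log(n/k)\lceil\log_k\log n\rceil)$ bound. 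Since Bob's level-$i$ reconstruction differs from his level-$(i{-}1)$ reconstruction only in the $O(k)$ blocks touched at the previous level, when Alice's message arrives Bob can, at each level, assemble his side's IBLT from $O(k)$ precomputed pieces, subtract it from Alice's, decode the $O(k)$ surviving entries, and resolve each by looking up its hash among the $O(k)$ shifted candidates — $O(k^2)$ work per level, hence the $O(k^2\log(n/k)\lceil\log_k\log n\rceil)$ failure-check bound. Reassembling $a$ from the recovered blocks is a final $O(n\lceil\log_k\log n\rceil)$-time pass.

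The failure check itself is the verification, carried out during the level sweep, that every IBLT peeled cleanly, that no level left more than $ck$ unresolved blocks, and that blocks recovered at consecutive levels are mutually consistent (each level-$i$ block splits into level-$(i{+}1)$ blocks whose hashes match the level-$(i{+}1)$ data), ending with a check of a $\Theta(\log(n/\delta))$-bit hash of $a$ itself. For the ``$1-\delta$ independent of $\Delta_e(a,b)$'' guarantee I would argue that, conditioned on the event that no two of the $O(n)$ distinct strings arising in the protocol collide under their $\Theta(\log(n/\delta))$-bit hashes — probability at least $1-\delta$ by a union bound — passing every check forces the reconstruction to equal $a$, regardless of the true distance. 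The ``$1-1/\poly(k)$'' guarantee when $\Delta_e(a,b)\le k$ then follows by additionally union-bounding the IBLT peeling failures of \autoref{thm:iblt} over the $O(\log(n/k)\lceil\log_k\log n\rceil)$ tables (taking the cell count a large enough constant times $k$ to absorb the polylogarithmic union bound).

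The step I expect to be the main obstacle is making the precomputation genuinely reusable and the failure check genuinely sound simultaneously. For reusability, Bob's level-$i$ IBLT must be a sum of $O(k)$ precomputed per-candidate IBLTs only \emph{after} he learns the correct offsets from Alice's message, so the precomputed pieces have to be keyed so that a candidate's position can be changed in $O(1)$ — which is delicate because changing a block's position changes exactly which cells it touches, so the clean route is to key IBLT entries by content hash and recover positions separately, which then must be reconciled with the positional information needed to stitch $a$ back together. For soundness, one must fix precisely which cross-level consistency checks are run and prove that, on the no-collision event, a spurious IBLT ``decoding'' — a set that peels cleanly yet is not the true symmetric difference — cannot survive them; this is what makes the $1-\delta$ bound hold uniformly in $\Delta_e(a,b)$ and is where the analysis will need the most care.
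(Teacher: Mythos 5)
Your protocol skeleton is essentially the paper's: restrict to edit distance so each of Alice's blocks has only the $O(k)$ shifts $m\in\{-k,\ldots,k\}$ of its aligned position as candidates, use $\Theta(k)$-cell IBLTs of (block index, $\Theta(\log(n/\delta))$-bit hash) pairs as the code at each of the $\Theta(\log(n/k^3))$ levels, stop at blocks of size $\Theta(k^2)$ and ship those directly, and have Bob precompute per-candidate IBLT contributions bottom-up by summing the two halves in $O(k)$ time, so that after Alice's message he only subtracts $O(k)$ precomputed tables per level and decodes. However, there are genuine gaps. First, the $\lceil\log_k\log n\rceil$ factor does not come from ``iterating the construction to drive the block length down'' (the single hierarchy already does that); in the paper it is a probability-amplification factor: each IBLT is replicated $\Theta(\lceil\log_k\log n\rceil)$ times so that, after a union bound over the $\Theta(\log(n/k))$ levels, the decoding still succeeds with probability $1-1/\poly(k)$. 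Your proposed fix --- take the cell count a large enough constant times $k$ to absorb the union bound --- does not work: by \autoref{thm:iblt} a $\Theta(k)$-cell table fails with probability $1/\poly(k)$, and for small $k$ (say $k=O(1)$) the quantity $\log(n/k)\cdot 1/\poly(k)$ is not $1/\poly(k)$, and no constant-factor enlargement fixes this; only the $\Theta(\log_k\log n)$-fold replication does, which is exactly why that factor appears in every bound of \autoref{thm:fastims}. Second, your failure check ends with ``a check of a $\Theta(\log(n/\delta))$-bit hash of $a$ itself,'' but verifying such a hash requires reconstructing (and reading) $a'$, which is $\Omega(n)$ work and breaks the stated $O(k^2\log(n/k)\log_n(n/\delta)\lceil\log_k\log n\rceil)$ failure-check time. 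The paper's failure check is simply whether every level IBLT decodes; the ``$1-\delta$ independent of $\Delta_e(a,b)$'' guarantee then follows because, conditioned on all tables decoding, the only remaining failure mode is a collision among the $\Theta(\log(n/\delta))$-bit hashes of the $O(n\log n)$ strings hashed, no terminal verification hash being needed.

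The obstacle you flag about position-keying of the precomputed pieces is real only in your content-hash-keyed variant; the paper sidesteps it entirely. The IBLT keys are pairs $(j,x_{\ell,j})$ where $j$ is the block index in \emph{Alice's} partition, and Bob precomputes a separate table $T_{i,j,m,\ell}$ for every aligned index $j$ and every shift $m\in\{-k,\ldots,k\}$, with the key's index component fixed to $j$ at precomputation time; the shift $m$ only determines which substring of $b$ is hashed, not which cells are touched. There are $O(2^i k\cdot k)$ such tables per level pair, which fits the stated precomputation budget, so no ``recover positions separately'' mechanism is needed, and the positional information for stitching $a'$ together comes for free from the decoded indices $j$ together with the matched shifts stored in the lookup tables $H_{\ell,j}$.
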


Using this document exchange protocol as our subroutine instead of \autoref{thm:de_best} in \autoref{thm:best_sos} yields the following directory reconciliation protocol.  

\begin{restatable}{theorem}{thmfastsos} \label{thm:fastsos}
Directory reconciliation with can be solved in one round using 
\[O(d\log s+ d \log n \log h \log \min(d,h) \log \log h + d \min(d,h)^2 \log \log h)\]
bits of communication and 
\[O(n \log h \log \min(d,h) + d^3 \log h \log \min(d, h) \log \log h+d^2 \q^2 \log n \log \log h)\]
time with probability at least $3/5$.
\end{restatable}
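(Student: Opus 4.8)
The plan is to run the level-by-level sets-of-sets protocol that proves \autoref{thm:best_sos} (spelled out in \autoref{app:sosdetails}) with a single substitution: wherever that protocol builds a document's representation from, or decodes it with, the one-round document exchange protocol of \autoref{thm:de_best}, I would use the protocol of \autoref{thm:fastims} instead. Recall the structure. For each of $\Theta(\log\min(d,h))$ levels $i$, with edit-distance bound $k_i=2^i$, every document $x$ is represented by a $\Theta(\log s)$-bit identifier together with the message \autoref{thm:fastims} would send for $x$ under bound $k_i$; Alice inserts her level-$i$ representations into an IBLT with $\Theta(d/2^i)$ cells, and her single message is the collection of these IBLTs. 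Bob forms the analogous IBLTs, subtracts them, peels off the level-$i$ symmetric differences in increasing order of $i$, and at each level pairs Alice's newly decoded representations with his still-differing documents, runs his side of the \autoref{thm:fastims} decoder on each pair (first its cheap failure check, then the recovery step only if that passes), and keeps any output whose identifier matches Alice's. As in \autoref{thm:best_sos}, after level $i$ he has recovered, with high probability, every differing document whose two versions are within $k_i$ edits, so after the last level he holds all of Alice's directory.

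For the communication bound I would redo the per-level sum with \autoref{thm:fastims}'s message size, namely $O((k\log(h/k)\log(h/\delta)+k^3)\lceil\log_k\log h\rceil)$ for documents of length at most $h$, in place of the size used for \autoref{thm:de_best}, with the failure parameter $\delta$ chosen as discussed below (small enough for a union bound but still $\log(1/\delta)=O(\log n)$). The identifiers contribute $\sum_i O((d/2^i)\log s)=O(d\log s)$. Level $i$ costs $O(d/2^i)$ cells times the message size, i.e.\ $O\big((d\log(h/2^i)\log n + d\,2^{2i})\lceil\log_{2^i}\log h\rceil\big)$; summing the first part and using $\sum_i\lceil\log_{2^i}\log h\rceil=O(\log\min(d,h)\log\log h)$ gives $O(d\log n\log h\log\min(d,h)\log\log h)$, while the $d\,2^{2i}$ part is dominated by the top level and gives $O(d\min(d,h)^2\log\log h)$. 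Together these are the three claimed communication terms.

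For the running time, the whole point of the substitution is that \autoref{thm:fastims} furnishes a failure check running in $O(k^2\cdot\mathrm{polylog})$ time, so Bob no longer pays $\Theta(|x|)$ per candidate pair. I would break the cost into: Alice's message construction, $\sum_x\sum_i O(|x|\log(|x|/2^i))=O(n\log h\log\min(d,h))$, which also subsumes the near-linear part of Bob's \autoref{thm:fastims} precomputations; Bob's precomputation of the failure-check structures at level $i$ for each of his $O(\min(\q,d/2^{i-1}))$ still-differing documents, at cost $O(k_i^3\cdot\mathrm{polylog})$ each, which summed over levels (dominated by the top one) is $O(d\min(d,h)^2\cdot\mathrm{polylog})=O(d^3\log h\log\min(d,h)\log\log h)$; the failure checks themselves, at most roughly $(d/2^i)\cdot\q$ per level and each costing $O(k_i^2\cdot\mathrm{polylog})$, which summed over levels is $O(d^2\q^2\log n\log\log h)$; and the final recovery step, run only for the $O(\q)$ documents that actually pass, at cost $O(|x|\cdot\mathrm{polylog})$ and hence subsumed. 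Collecting these gives the stated time bound.

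The part I expect to be delicate --- and what pushes the success probability from the $2/3$ of \autoref{thm:best_sos} down to $3/5$ --- is the probabilistic accounting for the failure checks. For the $O(\q\log\min(d,h))$ ``honest'' pairs (a differing document reaching a level whose bound exceeds its edit distance), \autoref{thm:fastims} succeeds except with probability $1/\poly(k_i)$, and these are summable. The real hazard is a false positive: a pair of strings that are actually far apart whose failure check nevertheless passes, causing Bob to ``recover'' a spurious document. \autoref{thm:fastims} bounds each such event by $\delta$, independently of the true edit distance, so I must choose $\delta$ to balance two competing demands --- small enough that the union bound over all $\poly(s,d)$ pairs ever checked is $o(1)$, yet large enough that the $\log(h/\delta)$ and $\log_h(h/\delta)$ factors inside \autoref{thm:fastims}'s message size and precomputation time do not inflate the communication or time beyond the claimed bounds --- most likely a moderately small $\delta$ together with Alice's $\Theta(\log s)$-bit per-document identifiers serving as a secondary check that catches any false positive that slips through. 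Getting this interplay right, and then folding it together with the IBLT peeling failure probability inherited from \autoref{thm:best_sos} and the usual hashing union bounds (with a constant of slack left over), is the crux of the argument and yields overall success probability at least $3/5$.
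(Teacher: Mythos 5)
Your proposal matches the paper's proof in essentially every respect: it plugs \autoref{thm:fastims} into the IBLT-cascade framework of \autoref{lem:ibltofdocs} (the protocol behind \autoref{thm:best_sos}), performs Bob's precomputation once up front, charges only the cheap failure checks to the $\Theta(\q)\times\Theta(\q)$ candidate pairs and the $\tilde{O}(n)$ recovery step only to pairs that pass, and absorbs the extra failure probability to go from $2/3$ to $3/5$. The interplay you flag as delicate is settled in the paper exactly as your own communication calculation already implicitly assumes: take $\delta = 1/\poly(n)$ (the claimed bounds already carry the resulting $\log n$ factors) and union bound over the $O(d^2 \log d)$ decoding attempts so that, with probability at least $14/15$, no attempt fails after passing its failure check, giving $2/3 - 1/15 = 3/5$.
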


This degrades our communication bound in \autoref{thm:best_sos} from $\tilde{O}(d \log n)$ to $\tilde{O}(d \log^2 n + d^3)$, but speeds it up from $\tilde{O}(nd)$ to $\tilde{O}(n + d^4)$.  The proof of this theorem is also in \autoref{app:fastims}.

\subsection{Unknown $d$}
So far all of our protocols have assumed that we know $d$ (or a good bound on it) in advance.  Any of them can be extended to handle the case where $d$ is unknown by the repeated doubling method.  First we try the protocol with $d=1$ and send a hash of Alice's directory along with it.  If the protocol succeeds and the directory Bob recovers matches the hash, then we are done, otherwise we proceed to $d=2,4,8,...$ until we eventually find the right value of $d$ and succeed with good probability.  This strategy will not increase any of our asymptotic communication costs and will only increase the computation times by at most a factor of $\log d$, however all of our one round protocols become $\Theta(\log d)$ round protocols.  For some applications, it is important to minimize the number of rounds of communication, so in this section we develop a protocol that allow for unknown/unbounded values of $d$ while still using a constant number of rounds of communication.

Our key techniques here are the CGK embedding \cite{chakraborty2016streaming} and set difference estimators \cite{mitzenmacher2017reconciling}.  The CGK embedding is a randomized mapping from edit distance space to Hamming space, such that with constant probability the Hamming distance between embedded strings will be $\Omega(k)$ and $O(k^2)$, where $k$ was their original edit distance.  Set difference estimators are $O(\log n)$ bit sketches that allow the computation of a constant factor approximation to the size of the difference between two sets, with constant probability.  They can equivalently be used to estimate the Hamming distance between two strings.  By combining these tools with \autoref{thm:reduction} we get the following result.

\begin{restatable}{theorem}{multiroundnod}
Directory reconciliation with unknown $d$ can be solved in 4 rounds using 
\[O(\q \log n \log s / \log \q + \q \log n \log \max(\q, h) \log h + d^2 \log n)\]
bits of communication and 
\[O((sh + \q^2)\log n \log \max(\q, h)+d^2 \log n)\]
time with probability at least $1-1/\poly(n)$.
\end{restatable}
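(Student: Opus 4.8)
The plan is to remove the assumption that $d$ is known by spending the first round‑trip \emph{estimating} the two quantities that control the message size, and the second round‑trip running a fixed‑$d$ reconciliation with those estimates plugged in, with a verification hash piggy‑backed on the last message. The two quantities that matter are (i) the number $\q$ of documents that actually differ, and (ii) the total amount of content that must be exchanged, which, after embedding each document into Hamming space via the CGK embedding \cite{chakraborty2016streaming}, is between $\Omega(d)$ and $O(d^2)$. Since only four rounds are allowed we cannot run the usual detect‑and‑retry loop, so every parameter must be guessed up front and guessed \emph{high}: a constant‑factor overestimate costs only a constant factor in communication, whereas an underestimate would break correctness.

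\textbf{Estimation (rounds 1--2).} Fix, using the public randomness, $O(\log n)$ independent CGK embeddings. Each party (a) hashes each of its documents to $\Theta(\log n)$ bits and forms $O(\log n)$ independent set‑difference‑estimator sketches \cite{mitzenmacher2017reconciling} of the resulting set of hashes, and (b) for each CGK embedding, embeds each of its documents and forms $O(\log n)$ set‑difference‑estimator sketches of the resulting (document, position, symbol) triples. Alice sends all $O(\log^2 n)$ of these $O(\log n)$‑bit sketches to Bob, which is $O(\log^3 n)$ bits and is absorbed by the stated bound. Combining with his own sketches and taking medians, Bob recovers, with probability $1-1/\poly(n)$, a constant‑factor overestimate $\widehat q$ of $\q$ and, for each CGK trial, a reliable constant‑factor estimate of that trial's total Hamming distance. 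Because the CGK upper bound $O(k^2)$ holds with constant probability on each trial, a Chernoff bound over the $O(\log n)$ trials shows that at least one has total Hamming distance $O(d^2)$; Bob picks the trial $j^{*}$ minimizing the estimate, and sends $\widehat q$, the chosen estimate $\widehat H = O(d^2)$, and $j^{*}$ back to Alice (round 2).

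\textbf{Reconciliation (rounds 3--4).} Knowing $\widehat q$ and $j^{*}$, Alice and Bob first match up the differing documents using the set‑of‑sets machinery of \cite{mitzenmacher2017reconciling} sized by $\widehat q$: Bob learns which of his documents pairs with which of Alice's, trying the $O(\q)$‑by‑$O(\q)$ candidate pairs (the source of the $\q^2$ time term, and, together with the recursive hashing of the documents to depth $\Theta(\log_{\q} s)$, of the $\q\log n\log s/\log\q$ communication term). For the matched pairs the parties reconcile the embedded content in one shot: under embedding $j^{*}$ the total Hamming distance is $O(d^2)\le c\widehat H$, so a single IBLT with $\Theta(\widehat H)$ cells over the (matched‑pair, position, symbol) triples recovers all differences with probability $1-1/\poly(n)$ by \autoref{thm:iblt}, yielding the $d^2\log n$ term. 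The remaining $\q\log n\log\max(\q,h)\log h$ term is the overhead of the per‑document document‑exchange sub‑messages (applying \autoref{thm:ims}/\autoref{thm:reduction} document‑by‑document when that is cheaper, and shipping a direct copy of any unmatched document) and of the per‑document estimators. Finally Alice appends a $\Theta(\log n)$‑bit hash of her whole directory; Bob reconstructs and checks it, so the protocol fails only if some estimator, CGK bound, IBLT, or hash fails, each with probability $1/\poly(n)$.

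\textbf{Main obstacle.} The delicate part is making everything work in a \emph{fixed} number of rounds. Without that constraint one simply doubles $d$ and retries, paying $\Theta(\log d)$ rounds; here the second round‑trip is committed before any feedback about whether the guesses were good, so the analysis must show that the guesses are simultaneously (w.h.p.) large enough to guarantee success and small enough to keep communication at the stated bound. Concretely I expect the work to lie in: boosting the constant success probabilities of the CGK embeddings and of the set‑difference estimators to $1-1/\poly(n)$ while keeping the sketch payload in the $\poly\log n$ regime; arguing that the best of $O(\log n)$ CGK trials gives an embedding whose total Hamming distance is $O(d^2)$ rather than $O(d^2\cdot\poly\log n)$; and scheduling the document‑matching exchange and the content‑reconciliation exchange so that both, plus the verification, fit inside four rounds instead of spilling into five or six.
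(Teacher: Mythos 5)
There is a genuine gap, and it sits exactly where you put the main technical weight: the claim that some single common CGK trial $j^{*}$ has \emph{total} matched-pair Hamming distance $O(d^2)$. The guarantee you have per pair is only that $\Delta_H \le O(k_i^2)$ with constant probability $2/3$; with $\q$ differing pairs all forced to share the same trial, the probability that a given trial is simultaneously good for every pair is $(2/3)^{\q}$, and taking the best of $O(\log n)$ trials does not rescue this (your ``Chernoff bound over the trials'' argument is only valid for a single pair, or would need an expectation bound on $\Delta_H$ plus Markov, which is not what the cited theorem gives). The paper avoids this by never committing to one global embedding: it keeps $\Theta(\log(\q/\delta))$ embeddings per document, lets Alice pick, \emph{per pair} $i$, an embedding index $c_i$ and a distance estimate $d_i$ via a median over the embeddings (so each pair's chosen embedding satisfies its own $O(k_i^2)$ bound w.h.p.), and then sends a separate Hamming sketch with parameter $\Theta(d_i)$ for each pair; the $d^2\log n$ term comes from $\sum_i d_i = O(d^2)$, not from one IBLT sized by a global estimate.

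A second, related problem is your round-1 estimation of $\widehat H$: a global set-difference estimator over (document, position, symbol) triples cannot estimate the matched-pair Hamming distance, because Alice and Bob have no common key for a differing pair before the matching is known. If the document field is the document's content or hash, every differing pair contributes $\Theta(h)$ to the symmetric difference, so $\widehat H = \Theta(\q h)$ and your content IBLT would cost $\Theta(\q h\log n)$ bits, not $d^2\log n$. The paper therefore estimates distances only \emph{after} the differing documents are identified (round 3: Bob sends per-document estimator vectors of his CGK encodings; round 4: Alice merges them pairwise, computes the matching $b_i$, the indices $c_i$, and the estimates $d_i$, and sends the per-pair sketches). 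This ordering also fixes the last structural issue in your plan: you have Bob compute the matching, but then Alice cannot build her side of any pair-keyed structure without a further message, which threatens the four-round budget; in the paper the matching is computed by Alice, who holds both estimator lists, and the final message to Bob carries everything he needs to decode and invert the CGK encodings.
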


This theorem is inspired by the multi-round approach to reconciling sets of sets in \cite{mitzenmacher2017reconciling}.  In the first round, we estimate the number of differing documents via set difference estimators.  In the second round we perform set reconciliation on sets of hashes of the documents, to determine which among them differ.  In the third round we exchange set difference estimators for the CGK embeddings of the differing documents, to determine which documents have close edit distance to which others, and what that edit distance is.  In the final round we use that information to reconcile the differing documents by using Hamming distance sketches applied to the CGK embeddings used in the previous round.  The fact that we use the same CGK embedding for reconciliation as for the estimate is what allows us to end up with only a final $O(d^2 \log n)$ term.  The full details and proof are presented in \autoref{app:unknownd}.

\section{Conclusion}
Directory reconciliation considers the reconciliation problem in a natural practical setting.  While directory reconciliation is closely related to document exchange and document 
exchange with block edits variations, it has its own distinct features and challenges; also, while the problem is closely related to practical tools such as rsync, rsync works on a file-by-file basis that may be inefficient in some circumstances.  Theoretically, we have found a document exchange scheme with $O(k \log n)$ bits of communication to handle $k$ edits with block moves, at the expense of a number of rounds of interaction.  The natural question is whether this result can be achieved with a single round.  On the more practical side, we believe using the ``set of sets'' paradigm based on IBLTs may provide mechanisms that, besides being of theoretical interest, may also be useful in some real-world settings, where files may not be linked by file names but are otherwise closely related.

\bibliographystyle{plainurl}
\bibliography{approx_set_recon}

\appendix

\section{Reducing to Document Exchange with Block Edits}
\label{app:reduction}
We show here that directory reconciliation can be solved via a straightforward reduction to document exchange under edit distance with block moves.  

\thmreduction*

\begin{proof}
First we describe the protocol, and then we will argue its correctness.  Alice and Bob compute a $\Theta(\log n)$ bit hash of each of their documents, then sort their hashes in $O(s)$ time using radix sort.  They then concatenate all of their documents into a single document, ordering them by the sorted order of their hashes.  They choose a random $\Theta(\log n)$ bit delineation string and insert it in between each pair of documents in the concatenation.  They then engage in document exchange with their concatenated documents so that Bob recovers Alice's concatenated document.  They use \autoref{thm:ims} with $k = 2d$.  Bob then converts the concatenated document into a directory by splitting along the delineation strings, thus recovering Alice's directory.

With probability at least $1-1/\poly(n)$, no two of Alice and Bob's documents that are not equal will have equal hash values.  Assuming this is the case, \autoref{thm:ims} will allow Bob to recover Alice's concatenated document with probability at least $1-1/n$.  This is because without hash collisions, the documents will be ordered such that at most $\q$ block moves are required to order Alice's documents so that Alice and Bob's document orders correspond to the minimum edit distance matching between their documents.  After making these block moves, the edit distance between Alice's and Bob's concatenated documents would be at most $d$.  The optimal block edit distance between the original concatenated documents is therefore seen to be at most $2d$, and thus Bob can recover Alice's concatenated document using the algorithm of \autoref{thm:ims} (for $k=2d$) with probability at least $1-1/\poly(n)$.  With probability at least $1-1/\poly(n)$ the delineation string will not appear in any of the documents, so splitting Alice's concatenated document using it will produce Alice's directory.

The concatenated documents are of size $O(n+s\log(n))=O(n)$ since the directories cannot have duplicate documents (since they are sets) so $s \leq n / \log n$.  Thus, by \autoref{thm:ims} the communication cost is $O(d \log n \log(n/d))$. The running time is dominated by the $O(n \log(n/d))$ used by \autoref{thm:ims} since it only takes $O(n)$ time to construct the concatenated strings and then extract the directory.
\end{proof}

\section{Missing Proofs for the Multi-Round Document Exchange Protocol}
\label{app:multiproofs}
Here we providing the missing proofs from the lemmas used to prove that \autoref{alg:prefixprotocol} satisfies \autoref{lem:prefixprotocol}.  Recall that $X_i$ is $w(g)/w(T)$ at the beginning of phase $i$ in \textsc{MultWeightsProtocol}, and $Y_i = \ln(X_{i+1} / X_i)$.

\lemybounded*

\begin{proof}
We consider three cases, based on the value of $\eta$ in phase $i$.  First, let $\eta = 0$ and so no queries are performed and a single vertex $v \neq g$ is added to $M$, such that $w(v) > w(T) / 2$.  In this case,
\[Y_i = \ln(w(T)/(w(T)-w(V))) \geq \ln 2,\]
as desired.

Now let $\eta = 1$ so there is a single query performed in phase $i$.  Let $r$ be the query's depth.  We consider two subcases: $g \in T_{-r}$ and $g \in T_v$ for $v \in D_r$.

$Y_i$ is maximized when there are no hash collisions.  That is, $h(a_{:r}) = h(u)$ if and only if $a_{:r} = u$.  If $g \in T_{-r}$, and there are no hash collisions, then
\begin{align*}
Y_i &= \ln\left(\frac{p^{2|D_r|} w(g)}{p^{2|D_r|} w(T_{-r}) + p^{2|D_r|-2}(1-p)^2(w(T)-w(T_{-r}))} / \frac{w(g)}{w(T)} \right) \\
&= \ln\left(\frac{p^2}{(2p-1) w(T_{-r}) / w(T) + (1-p)^2} \right) \\
&\leq 2\ln\left(\frac{p}{1-p} \right).
\end{align*}

If $g \in T_v$ for $v \in D_r$ and there are no hash collisions, then
\begin{align*}
Y_i &= \ln\left(\frac{p^3}{(p^3-(1-p)^3) w(T_v)/w(T) + (1-p)(2p-1)w(T_{-r})/w(T)+(1-p)^3}\right) \\
&\leq 3\ln\left(\frac{p}{1-p} \right).
\end{align*}

$Y_i$ is minimized when every single hash collides with $h(a_{:r})$.  $h(a_{:r}) = h(u)$ for all $u \in D_r$.  If $g \in T_{-r}$, and all hashes collide, then
\begin{align*}
Y_i &= \ln\left(\frac{(1-p)^{|D_r|} w(T)}{(1-p)^{|D_r|} w(T_{-r}) + p(1-p)^{|D_r|-1}(w(T)-w(T_{-r}))}\right) \\
&= \ln\left(\frac{1-p}{p-(2p-1) w(T_{-r})/w(T)}\right) \\
&\geq -\ln\left(\frac{p}{1-p} \right).
\end{align*}

Finally we consider the case when $g \in T_v$ for $v \in D_r$, and all hashes collide.  Note that in this case we are still guaranteed that $h(a_r) = h(v)$ since $a_r = v$.
\begin{align*}
Y_i &= \ln\left(\frac{p}{p-(2p-1)w(T_{-r})/w(T)}\right) \\
&\geq 0.
\end{align*}

Finally, if $\eta = 2$ then 2 queries are performed in phase $i$.  By our previous arguments about one query, we immediately have that
\[-2\ln\left(\frac{p}{1-p}\right) \leq Y_i \leq 6\ln\left(\frac{p}{1-p}\right).\]
\end{proof}

\lemymean*

\begin{proof}
If $\eta = 0$ then by \autoref{lem:ybounded}, $Y_i  \geq \ln 2$, so clearly $\E[Y_i] \geq 0$.

If $\eta = 1$, then phase $i$ performs a single query at depth $r$.  Let $U_i$ be a random variable taking the value of the set of nodes $u \in D_r$ such that $h(u) \neq h(a_{:r})$.  

First, let $g \in T_{-r}$.  For each $u \in D_r$, $\Pr[h(u)\neq h(a_{:r})] \geq p$.  Thus,
\begin{align}
E[Y_i] &= E[\ln(X_{i+1}/X_i)] \nonumber \\
&\geq \ln(1/E[X_i/X_{i+1}]) \; \text{ (Jensen's inequality)} \nonumber \\
&= \ln\left(1/E\left[\frac{p^2(1-p)w(T_{-r})+(1-p)^3\sum_{u\in U_i}w(T_u) + p^3\sum_{u \in D_r \setminus U_i}w(T_u)}{p^2(1-p)w(T)}\right]\right) \nonumber \\
&= \ln\left(\frac{p^2(1-p)w(T)}{p^2(1-p)w(T_{-r})+(1-p)^3E\left[\sum_{u\in U_i}w(T_u)\right] + p^3E\left[\sum_{u \in D_r \setminus U_i}w(T_u)\right]}\right) \nonumber \\
&\geq \ln\left(\frac{p^2(1-p)w(T)}{p^2(1-p)w(T_{-r})+p(1-p)^3\sum_{u\in D_r}w(T_u) + p^3(1-p)\sum_{u \in D_r}w(T_u)}\right) \\
&= \ln\left(\frac{p^2(1-p)w(T)}{p^2(1-p)w(T_{-r})+(p^3(1-p)+p(1-p)^3)(w(T)-w(T_{-r}))}\right) \nonumber \\
&= \ln\left(\frac{p}{(1-p)(2p-1)w(T_{-r})/w(T)+1-2p(1-p)}\right) \label{eq:topg}\\
&\geq 0 \; \text{ (since $w(T_{-r}) \leq w(T)$)},\nonumber
\end{align}
where the second inequality used linearity of expectations and then convexity.

For informative queries, we have that $w(T_{-r}) \leq (3/4)w(T)$.  Plugging this into \autoref{eq:topg} we get\footnote{We note that this bound approaches approaches 0 as $p$ approaches 1, which is an unintuitive artifact of our analysis.  However, since we choose $p=2/3$, this is bounded away from 0.}
\[E[Y_i] \geq \ln(4p/(2p^2+p+1)).\]

Now we address the case when $g \in T_v$ for $v \in D_r$.
\begin{align}
&E[Y_i] \geq \ln(1/E[X_i/X_{i+1}]) \nonumber \\
&= \ln\left(1/E\left[\frac{p^3w(T_v)+ p(1-p)w(T_{-r})+(1-p)^3\sum_{u\in U_i}w(T_u) + p^3\sum_{u \in D_r \setminus (U_i \cup \{v\})}w(T_u)}{p^3w(T)}\right]\right) \nonumber \\
&= \ln\left(\frac{p^3w(T)}{p^3w(T_v)+ p(1-p)w(T_{-r})+(1-p)^3\E\left[\sum_{u\in U_i}w(T_u)\right] + p^3\E\left[\sum_{u \in D_r \setminus (U_i \cup \{v\})}w(T_u)\right]}\right) \nonumber \\
&\geq \ln\left(\frac{p^3w(T)}{p^3w(T_v)+ p(1-p)w(T_{-r})+(p^3(1-p)+p(1-p)^3)(w(T)-w(T_v)-w(T_{-r}))}\right) \nonumber \\
&= \ln\left(\frac{p^2}{(2p-1)(1-p(1-p))w(T_v)/w(T)+2p(1-p)^2w(T_{-r})/w(T)+(1-p)(1-2p(1-p))}\right) \label{eq:botg}.
\end{align}

Since $p \geq 2/3$, $(2p-1)(1-p(1-p)) > 2p(1-p)^2$.  Thus, using only the constraint that $w(T_v)+w(T_{-r})\leq w(T)$,  \autoref{eq:botg} is minimized when $w(T_v) = w(T)$ and $w(T_{-r})=0$, which yields
\[\E[Y_i] \geq 0.\]
For informative queries, we also have the constraint that $w(T_v) \leq (3/4)w(T)$, in which case \autoref{eq:botg} is minimized when $w(T_v) = (3/4)w(T)$ and $w(T_{-r}) = (1/4)w(T)$.  This gives us
\[\E[Y_i] \geq \ln(4p^2/(3p^2-p+1)).\]
Since $p \geq 2/3$, $\ln(4p^2/(3p^2-p+1)) > \ln(4p/(2p^2+p+1))$.

Finally we consider cases where $\eta = 2$ and thus two queries are performed.  Since $E[Y_i] \geq 0$ for one query, then by linearity of expectations $E[Y_i] \geq 0$ for two queries as well.  Furthermore, if phase $i$ is informative then at least one of the queries is informative so by linearity of expectations and our analysis of the second queries cases, $E[Y_i] \geq \ln(4p/(2p^2+p+1)).$
\end{proof}

\lemmwpprob*

\begin{proof}
In order for $M$ to not include $g$, at no point in the protocol can $w(g) > w(T)/2$.  In particular, after the final phase we must have $w(g) \leq w(T)/2$.  Since initially $w(T)=|T|$ and $w(g) = 1$, this means that $X_{t+1}/X_1 \leq |T|/2$.  We will prove the lemma by showing that if $g$ is never added to $M$, then $\Pr[X_{t+1}/X_1 \leq |T|/2] \leq \delta$.

Let $Z_i = 1/4 + Y_i/(8 \alpha)$, where $\alpha = \ln(p/(1-p))$.  Let $H \subseteq [t]$ be the set of phases in which a node is added to $M$.  By \autoref{lem:ybounded}, for $i \notin H$, $0 \leq Z_i \leq 1$ and for $i \in H$, $Z_i \geq 1/4 + \ln(2)/(8 \alpha)$.  By \autoref{lem:ymean}, $\E[Z_i] \geq 1/4$ and if an informative query is performed in phase $i$, then $\E[Z_i] \geq 1/4 + \beta/(8\alpha)$, for $\beta = \ln(4p/(2p^2+p+1))$.  By \autoref{lem:informative}, out of $t$ queries, at least $(2t - 8\log_{3/2} \tau)/7$ of them are informative.  Assuming $t \geq 32 \log_{3/2} \tau$, 
\[\mu = \E\left[\sum_{i \in [t] \setminus H} Z_i\right] \geq  t(1+\beta / (8 \alpha))/4-|H|(1+\beta/(2\alpha))/4.\]

Putting these pieces together we have
\begin{align}
&\Pr\left[X_{t+1}/X_1 \leq |T|/2\right] = \Pr\left[\sum_{i=1}^t Y_i \leq \ln(|T|/2)\right] \nonumber \\
&= \Pr\left[\sum_{i=1}^t Z_i \leq \frac{\ln(|T|/2)+t\alpha}{8\alpha}\right] \nonumber \\
&\leq \Pr\left[\sum_{i\in[t]\setminus H} Z_i \leq \frac{\ln(|T|/2)+t\alpha-|H|(\ln 2+2\alpha)}{8\alpha}\right] \label{eq:mwpprobbound} \\
&\leq \Pr\left[\sum_{i\in[t]\setminus H} Z_i \leq \left(1-\epsilon\right)\mu\right] \; \text{ (where $\epsilon = \frac{t (4\alpha + \beta) + |H|(\ln 16 - 4\beta) - 4 \ln(|T|/2)}{t(8\alpha + \beta)-|H|(8\alpha + 4\beta)}$)} \nonumber \\
&\leq e^{-\epsilon^2 \mu /2} \; \text{ (Chernoff bound, assuming $\epsilon \in [0,1]$)} \nonumber \\
&\leq e^{-\frac{(t(4\alpha+\beta)+8|H|(\ln 2 - \beta))^2}{128\alpha(t(8\alpha+\beta)-4|H|(2\alpha+\beta))}} \; \text{ (assuming $t \geq 8 \ln(|T|/2)/(4\alpha+\beta)$)} \nonumber \\
&\leq e^{-\frac{t(4\alpha+\beta)^2}{128\alpha(8\alpha+\beta)}} \; \text{ (maximized at $|H|=0$)} \nonumber \\
&\leq \delta\; \text{ (assuming $t \geq \ln(1/\delta)(128\alpha(8\alpha+\beta)/(4\alpha+\beta)^2)$)}. \nonumber 
\end{align}

Now let's address our assumption that $\epsilon \in [0,1]$.  If $\epsilon > 1$ then 
\[\Pr\left[\sum_{i\in[t]\setminus H} Z_i \leq \left(1-\epsilon\right)\mu\right] \leq \Pr\left[\sum_{i\in[t]\setminus H} Z_i < 0\right] = 0 < \delta.\]
For $p=2/3$, $\ln 16 - 4 \beta > 0$ so with our assumption that $t \geq 8 \ln(|T|/2)/(4\alpha+\beta)$, the numerator of $\epsilon$ must be $> 0$, thus in order to have $\epsilon < 0$ we must have $|H| > t(8\alpha + \beta)/(8\alpha+4\beta)$.  Plugging this into \autoref{eq:mwpprobbound} we have
\begin{align*}
\Pr&\left[\sum_{i\in[t]\setminus H} Z_i \leq \frac{\ln(|T|/2)+t\alpha-|H|(\ln 2+2\alpha)}{8\alpha}\right] \\
&\leq \Pr\left[\sum_{i\in[t]\setminus H} Z_i \leq \frac{\ln(|T|/2)+t\left(\alpha-\frac{(8\alpha+\beta)(\log 2 + 2\alpha)}{8\alpha+4\beta}\right)}{8\alpha}\right] \\
&\leq \Pr\left[\sum_{i\in[t]\setminus H} Z_i \leq \ln(|T|/2)\frac{1+\frac{8}{4\alpha+\beta}\left(\alpha-\frac{(8\alpha+\beta)(\log 2 + 2\alpha)}{8\alpha+4\beta}\right)}{8\alpha}\right] \\
&\leq \Pr\left[\sum_{i\in[t]\setminus H} Z_i < 0\right] \; \text{ (plugging in $p = 2/3$)}\\
&= 0 < \delta.
\end{align*}

Putting all of our assumptions about $t$ together, we have
\[t \geq 32 \log_{3/2} \tau + \frac{8}{4\alpha+\beta} \ln(|T|/2) + \frac{128\alpha(8\alpha+\beta)}{(4\alpha+\beta)^2}\ln(1/\delta).\]
Plugging in $p=2/3$, this is satisfied when
\[t \geq 79 \ln \tau + 3 \ln |T| + 63 \ln(1/\delta)\]
as desired.
\end{proof}

\section{Set of Sets Based Protocol Details}
\label{app:sosdetails}

Here we present the details of our protocol for directory reconciliation based on a protocol for reconciling set of sets of \cite{mitzenmacher2017reconciling}.  First we present a general protocol in which one can plug in any one round document exchange protocol, and then we go on to apply it with specific protocols.  Let $n_j$ be the size of $j$th largest document in the union of Alice and Bob's directories.  $h \geq n_1 \geq n_2 \geq \ldots \geq n_m$, for $m \leq 2s$ and $\sum_{j=1}^m n_j \leq 2n$.  

\begin{lemma}
\label{lem:ibltofdocs}
Given a one-round document exchange protocol with time $g(k,n')=O(\poly(k,n'))$, communication cost $f(k,n')=O(\poly(k,n'))$, and success probability at least $1 - 1/(100k)$, directory reconciliation can be solved in one round using 
\[O\left(d\log s +  d \log \log \min(d,h) + \sum_{i=1}^{\log \min(d,h)} (d/2^i) g(2^i, h)\right)\]
bits of communication and 
\[O\left(\sum_{i=1}^{\log \min(d,h)} \left(\sum_{j=1}^m f(2^i, n_j)+\sum_{j=1}^{\min(\q, d/2^i)} (\q + \min(\q, d/2^i) - 2j) f(2^i,n_j)\right)\right)\] 
time with probability at least $2/3$.
\end{lemma}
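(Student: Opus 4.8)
The plan is to describe the protocol and then analyze its correctness, communication, and running time. First I would set up the \emph{document representation}: for a document $x$ of length $n'$ and a level $i$, define $\mathrm{rep}_i(x)$ to be the pair consisting of a $\Theta(\log n)$-bit hash of $x$ together with the message that the given document exchange protocol would send with edit bound $k = 2^i$; this has size $O(\log n + f(2^i, n'))$. Alice, for each level $i \in \{1, \dots, \lceil \log \min(d,h) \rceil\}$, builds an IBLT $T^A_i$ of size $\Theta(d/2^i)$ and inserts $\mathrm{rep}_i(x)$ for every document $x$ in her directory; by Theorem \ref{thm:iblt}, choosing the hidden constant appropriately, this IBLT decodes any symmetric difference of size $O(d/2^i)$ with probability $1 - O(2^i/\poly(d))$. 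Alice's single message is the tuple of all these IBLTs, plus $O(\log s)$ bits to communicate the number of documents. Bob builds analogous IBLTs $T^B_i$, subtracts $T^A_1 - T^B_1$, and decodes to recover the representations $\mathrm{rep}_1(x)$ for all documents $x$ that are in Alice's directory but not Bob's (and vice versa), provided the number of differing documents is $O(d)$, which holds since at most $d$ edits can change at most $d$ documents and $\q \le \min(d,s)$. Then Bob iterates: at level $i$, he has already recovered a subset $R$ of Alice's differing documents; he forms $\mathrm{rep}_i(x)$ for each $x \in R$, deletes them from the subtracted IBLT $T^A_i - T^B_i$, decodes the remainder (now of size $O(d/2^i)$), and for every decoded representation from Alice's side pairs it against every one of his own documents, running his side of the document exchange protocol with bound $2^i$; a run is accepted iff the recovered string matches Alice's transmitted hash. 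After the last level $i = \lceil \log \min(d,h) \rceil$ the bound $2^i \ge \min(d,h) \ge$ (the true edit distance on every differing document), so every differing document is recovered.

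Next I would argue \textbf{correctness with probability $\ge 2/3$}. There are three failure modes to union-bound over: (a) an IBLT fails to decode; (b) a document exchange subroutine that \emph{should} succeed (i.e., is run on a pair of documents at edit distance $\le 2^i$, where $2^i$ is the first power of two exceeding the true distance) instead fails; (c) a hash collision causes Bob to accept a wrong string or to misidentify which documents differ. For (b): each document is correctly reconciled at the level $i$ where $2^i$ first dominates its edit distance $k_j$, and the subroutine there succeeds with probability $\ge 1 - 1/(100 \cdot 2^i) \ge 1 - 1/(100 k_j)$; summing the failure probabilities over the $\le \q \le d$ differing documents, and noting the doubling structure means each document contributes once, gives total failure $O(1)$ which we can drive below, say, $1/10$ by the constants (this mirrors Theorem 3.7 of \cite{mitzenmacher2017reconciling}). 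For (a): summing $O(2^i/\poly(d))$ over the $O(\log \min(d,h))$ levels is $o(1)$. For (c): with $\poly(n)$ hashes of $\Theta(\log n)$ bits, no collision occurs with probability $1 - 1/\poly(n)$. All told the protocol succeeds with probability $\ge 2/3$.

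Then the \textbf{communication bound}: Alice sends $O(\log s)$ bits for the count; level $i$'s IBLT has $\Theta(d/2^i)$ cells each holding a representation of size $O(\log n + f(2^i,h))$, but the $O(\log n)$-per-cell hash and IBLT overhead summed over levels telescopes to $O(d \log \log \min(d,h))$ (geometric in $d/2^i$, with an extra $\log\log$ from bottom-level cells of size $\Theta(\log n)$), while the document-exchange payload contributes $\sum_i (d/2^i) g(2^i,h)$ — wait, more precisely the \emph{communicated} size of each representation's exchange message is $f(2^i,h)$, giving $\sum_i (d/2^i) f(2^i, h)$; I would state it as in the lemma, absorbing the representation's exchange-message size into the stated $g$-term or keeping $f$, matching the paper's intended form. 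For the \textbf{time bound}: Alice spends $\sum_i \sum_{j=1}^m f(2^i, n_j)$ building all representations. Bob's cost is dominated by the pairwise document-exchange attempts: at level $i$ he has $O(\q)$ recovered documents and $O(d/2^i)$ freshly decoded ones, and he pairs each decoded Alice-representation against each of his own $O(\q)$ documents, each attempt costing $g(2^i, \cdot)$ time proportional to the document lengths; charging attempt $j$ (in decreasing length order) the factor $(\q + \min(\q, d/2^i) - 2j)$ as in the lemma statement accounts for the combinatorics of which sizes get paired, and summing over $i$ yields the claimed expression. The \textbf{main obstacle} I anticipate is the careful bookkeeping in this last step — precisely justifying the $(\q + \min(\q,d/2^i) - 2j)$ weighting on the time bound, i.e., arguing that when we sort documents by decreasing length, the $j$-th largest document participates in at most that many subroutine calls across the relevant level (because at most $\min(\q, d/2^i)$ of Alice's representations survive to be decoded at level $i$, each paired with Bob's $\le \q$ documents, but pairing two large documents together is "double-counted" and must be subtracted), exactly paralleling the accounting in \cite{mitzenmacher2017reconciling}; the rest is a routine assembly of geometric sums and union bounds.
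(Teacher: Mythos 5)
The protocol you describe is essentially the paper's cascading-IBLT construction, but your correctness analysis has a genuine gap at failure mode (b). You require every document-exchange subroutine call to succeed at the document's ``natural'' level and union-bound the failures, claiming the total is $O(1)$. It is not: at level $1$ the subroutine is run with $k=2$ and is only guaranteed to succeed with probability $1-1/200$, and there can be $\Theta(d)$ differing documents whose natural level is $1$ (e.g., $d$ edits spread over $d$ distinct documents, which is consistent with $\q \le \min(d,s)$), so your union bound only yields failure probability $\Theta(d)$. The hypothesis $1-1/(100k)$ is deliberately weak at small $k$, and no union bound over per-document calls can fix this. The paper's proof instead tolerates subroutine failures: conditioned on the level-$i$ IBLT decoding, the \emph{expected} fraction of still-unrecovered differing documents that fail to be recovered while processing level $i$ is at most $1/(100\cdot 2^i)$; Markov's inequality gives that at least $9/10$ of them are recovered except with probability $1/(10\cdot 2^i)$, and an inductive computation shows that the carried-over failures plus the documents whose natural level exceeds $i$ number at most $(9/4)\,d/2^i$, so they still fit in the $\Theta(d/2^i)$-cell IBLT at level $i$; chaining these conditional events over the levels is what produces the constant success probability. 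Relatedly, because stragglers can survive every level, when $\min(d,h)=h$ the paper adds a bottom-level IBLT $T_*$ with $\Theta(d/h)$ cells containing $\Theta(h)$-bit encodings of the documents themselves as a safety net; your protocol omits this, and your assertion that everything is recovered by the last hash level is exactly the part your union bound cannot support.

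A second, smaller discrepancy is the hash size. Your $\Theta(\log n)$-bit per-document hashes contribute $\sum_i \Theta(d/2^i)\cdot\Theta(\log n)=\Theta(d\log n)$ bits of communication; this does not ``telescope to $O(d\log\log\min(d,h))$'' and exceeds the claimed $O(d\log s + d\log\log\min(d,h))$ term whenever $s\ll n$ (few, large documents). The paper uses $\Theta(\log(st))$-bit hashes with $t=\log\min(d,h)$, which only need to avoid collisions among the $O(st)$ encodings with constant probability, and this is precisely where the $d\log s + d\log\log\min(d,h)$ term comes from. (The swapped roles of $f$ and $g$ that you stumbled over are an inconsistency in the lemma's prose; in the displayed bounds, and in your eventual usage, $g$ is the per-document communication cost and $f$ the per-document time.)
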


\begin{algorithm}[h]
\caption{Cascading IBLTs of Document Exchange Protocols}
\label{alg:ibltofdocs}
\begin{enumerate}
\item For $i = 1, \ldots, t = \log_2 \min(d,h)$, Alice creates a (document exchange message with $k=\Theta(2^i)$, $\Theta(\log (st))$ bit hash) document encoding for each of her documents and inserts it into a $\Theta(d / 2^i)$ cell IBLT $T_i$.
\item If $t = \log_2 h$, Alice creates a $\Theta(d/h)$ cell IBLT $T_*$ and inserts a $\Theta(h)$ bit encoding of each of her documents into it.
\item Alice sends $T_1,\ldots,T_t$ and $T_*$ to Bob.
\item Bob deletes (document exchange message with $k=O(1)$, hash) encodings of each of his documents from $T_1$, and then extracts all of the different document encodings from it.  He uses the hashes of his extracted documents to recover $D_B$, the set of his documents that differ from any of Alice's.  
\item Bob tries performing the document exchange protocol using every combination of one Alice's extracted document encodings and one of his documents in $D_B$, trying to recover Alice's documents.  Each time the protocol succeeds, and the resulting document matches Alice's hash, he inserts the recovered document into the set $D_A$.  Going forward, he will recover more and more of Alice's documents and $D_A$ will be the set he has recovered so far.
\item For each $i = 2,\ldots,t$, Bob performs the following procedure.  He first deletes the level $i$ document encoding of each of his documents from $T_i$, except for those in $D_B$. He also deletes the level $i$ document encoding of each document in $D_A$ from $T_i$.  He then decodes $T_i$ and extracts all of the different document encodings, which correspond exactly to Alice's differing documents that aren't yet in $D_A$.  He tries to to decode each of Alice's extracted document encodings with each of the documents in $D_B$, adding Alice's documents that he recovers to $D_A$.
\item If Bob received $T_*$, he deletes all of his documents from it.  He also deletes each document in $D_A$ from it.  He then decodes $T_*$ and adds all of the decoded documents to $D_A$.
\item Bob deletes $D_B$ from his directory, and adds $D_A$.
\end{enumerate}
\end{algorithm}

\begin{proof}
We analyze the protocol given in Algorithm \ref{alg:ibltofdocs}.   Our proof is essentially identical to that of \cite{mitzenmacher2017reconciling}.  Going forward we will condition on the event that there are no collisions among the $\Theta(\log (st))$ bit hashes in the document encodings.  There are at most $2s$ documents per round that can collide, so union bounding over the $t$ rounds we have no collisions with probability at least $1-4s^2t/\poly(st)\geq 1-1/30$.  

Let us divide Alice's documents into groups according to how many edits they differ by under the minimum difference matching.  $S_j$ is the set of Alice's documents whose edit distance with its match is in $[2^{j-1}, 2^j - 1]$.  First, observe that every one of Alice's differing documents is included in some $S_j$ for $j \leq t$.  Second, observe that $|S_j| \leq d / 2^{j-1}$ since the total number of edits is at most $d$.

Consider the IBLT $T_i$. Let $Y_i$ be the event that IBLT $T_i$ successfully decodes.  Conditioned on $Y_i$, when processing to match up the documents within $T_i$, in expectation Bob fails to recover at most $\frac{1}{100 \cdot 2^i}$ of Alice's documents from $\cup_{j=1}^i S_j$ that he has not yet recovered.  By Markov's inequality, Bob recovers fewer than 9/10 of the Alice's documents in $\cup_{j=1}^i S_j$ that he has not already decoded with probability at most $\frac{1}{10 \cdot 2^i}$.  We use $X_i$ to refer to the event that processing $T_i$ results in Bob recovering at least $9/10$ of $\cup_{j=1}^i S_j$ that he had not previously recovered, so we have argued that
\[\Pr[X_i | Y_i] \geq 1 - \frac{1}{10 \cdot 2^i}.\]

Since there are at most $2d$ differing documents in $T_1$, we can choose the IBLT's parameters so that $Y_1$ occurs with probability at least $1 - \frac{2}{10d}$.  For $i > 1$, conditioned on $\cap_{j=1}^{i-1}X_j$, the number of Alice's documents left to be recovered after $T_i$ is processed is at most
\begin{align*}
\sum_{j=i+1}^t &|S_j| + \sum_{j=1}^i |S_j| 10^{j-i-1} \\
&\leq \sum_{j=i+1}^t d / 2^{j-1} + \sum_{j=1}^i d 10^{j-i-1} / 2^{j-1} \\
&\leq d / 2^{i-1} + d / 10^i \sum_{j=1}^i 5^{j-1} \\
&\leq d / 2^{i-1} + d / 2^{i+2} = (9 / 4) (d / 2^i).
\end{align*}
Since $T_i$ has $\Theta(d / 2^i)$ cells, we can choose the constant factors in the order notation so that $Y_i$ occurs with probability at least $1 - \frac{2^i}{10 d}$ conditioned on $\cap_{j=1}^{i-1}X_j$.

If $t < \log_2 h$, and therefore there is no $T_*$, Bob successfully recovers all of Alice's documents so long as all $X_i$ and $Y_i$ occur.  The probability of this is
\begin{align*}
\Pr&[\cap_{i=1}^t (X_i \cap Y_i)] \\
&= \Pr[Y_1] \Pr[X_1 | Y_1] \ldots \Pr\left[Y_t | \cap_{j=1}^{t-1}X_j\right] \Pr[X_t | Y_t] \\
&= \Pr[Y_1] \prod_{i=2}^t \Pr\left[Y_i | \cap_{j=1}^{i-1}X_j\right] \prod_{i=1}^t \Pr[X_i | Y_i]  \\
&= \prod_{i=1}^t \left(1 - \frac{2^i}{10d}\right) \prod_{i=1}^t \left(1 - \frac{1}{10 \cdot 2^i}\right) \\
&\geq 1 - \sum_{i=1}^t \left(\frac{2^i}{10d} + \frac{1}{10 \cdot 2^i}\right) \\
&\geq 4/5.
\end{align*}
If $t = \log_2 h$, then the protocol will succeed so long as $T_*$ successfully decodes.  $T_*$ has $\Theta(d/h)$ cells and if all $X_i$ and $Y_i$ occur then there are at most $(9/4)d/h$ elements to extract from $T_*$, so we can choose the constants such that $T_*$ decodes with probability at least $1/10$.  We have thus proved that by the end of the procedure, Bob recovers Alice's directory with probability at least $4/5-1/10-1/30=2/3$.

The time for Alice and Bob to construct their document encodings and insert or delete them from the $T_{i}$ is \[O\left(\sum_{i=1}^t \sum_{j=1}^m f(2^i, n_j)\right).\]
The remaining time is what Bob takes to attempt to decode Alice's document encodings.  When processing $T_i$, Bob extracts $O(\min(\q, d/2^i))$ of Alice's document encodings, and compare each one against each of his $O(\q)$ differing documents' encodings.  Each document encoding has $k = \Theta(2^i)$, so the total processing time is at most 
\[O\left(\sum_{j=1}^{\min(\q, d/2^i)} \sum_{\ell=1}^{\q} f(2^i,\max(n_j,n_\ell))\right) = O\left(\sum_{j=1}^{\min(\q, d/2^i)} (\q + \min(\q, d/2^i) - 2j) f(2^i,n_j)\right).\]
Summing over $i$, we get
\[O\left(\sum_{i=1}^t \sum_{j=1}^{\min(\q, d/2^i)} (\q + \min(\q, d/2^i) - 2j) f(2^i,n_j)\right).\]

The communication cost of transmitting $T_{1},\ldots,T_{t}$ and $T_*$ is
\begin{align*}
O&\left(\sum_{i=1}^t (d/2^i) \cdot (\log (st) + g(2^i, h))+d\right) \\
&= O\left(d \log s + d \log \log \min(d,h) + \sum_{i=1}^t (d/2^i)g(2^i, h)\right).
\end{align*}
\end{proof}

As a first attempt, we combine this lemma with \autoref{thm:ims}.  Since \autoref{thm:ims} works under edit distance with block moves, this protocol has the advantage that it allows block moves within the files as edit operations, just as \autoref{thm:reduction} and \autoref{cor:multiround} do.

\begin{theorem} \label{thm:ibltofims}
Directory reconciliation can be solved in one round using 
\[O(d\log s+ d\log^2 h \log \min(d,h))\]
bits of communication and 
\[O(n d \log h +n \q \log h \log d)\]
time with probability at least $2/3$.
\end{theorem}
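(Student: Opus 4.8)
The plan is to obtain \autoref{thm:ibltofims} as a direct instantiation of \autoref{lem:ibltofdocs}, taking the IMS sketch of \autoref{thm:ims} as the one-round document exchange subroutine. First I would check that the IMS sketch satisfies the hypotheses of \autoref{lem:ibltofdocs}: on length-$n'$ strings with an edit distance with block moves bound $k$ it succeeds with probability $1-1/\poly(n')$, and since every invocation in the reduction uses $k=2^i\le\min(d,h)\le n$ and there are only $\poly(n)$ invocations in total (at most $2s$ documents across at most $\log\min(d,h)$ levels), the per-invocation failure probability can be pushed below $1/(100k)$ by choosing the hidden polynomial large enough; in the degenerate case where $2^i$ exceeds a document's length, document exchange is trivial and the hypothesis holds vacuously. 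Thus I may set the communication cost to $g(k,n')=O(k\log n'\log(n'/k))$ and the running time to $f(k,n')=O(n'\log(n'/k))=O(n'\log n')$ and substitute these into the two bounds of \autoref{lem:ibltofdocs}.

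For the communication, in the term $(d/2^i)\,g(2^i,h)$ the factor $d/2^i$ cancels the leading $2^i$ of $g$, leaving $O(d\log h\log(h/2^i))$; summing over $i\in[\log\min(d,h)]$ and using $\log(h/2^i)\le\log h$ gives $O(d\log^2 h\log\min(d,h))$. Adding the $O(d\log s)$ term and noting that $O(d\log\log\min(d,h))$ is dominated yields the claimed $O(d\log s+d\log^2 h\log\min(d,h))$.

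For the running time I would bound $f(2^i,n_j)=O(n_j\log(n_j/2^i))=O(n_j\log h)$ uniformly. The encoding-construction term $\sum_i\sum_{j=1}^m f(2^i,n_j)$ is then at most $O(\log h)\sum_i\sum_j n_j=O(n\log h\log\min(d,h))$, using $\sum_j n_j\le 2n$. For the decoding term I would bound the weight $\q+\min(\q,d/2^i)-2j$ by $\q+\min(\q,d/2^i)$, pull it out of the inner sum, and again use $\sum_j n_j\le 2n$, so that the $i$-th summand is $O((\q+\min(\q,d/2^i))\,n\log h)$; then $\sum_i\q=O(\q\log\min(d,h))$ and the geometric series $\sum_i\min(\q,d/2^i)\le\sum_i d/2^i=O(d)$ give $O(n\q\log h\log d+nd\log h)$ for the decoding term. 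Combining and absorbing $O(n\log h\log\min(d,h))$ into $O(nd\log h)$ gives the stated $O(nd\log h+n\q\log h\log d)$, and the success probability $2/3$ carries over verbatim from \autoref{lem:ibltofdocs}.

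There is no deep obstacle: the theorem is essentially a corollary of \autoref{lem:ibltofdocs}, and the only points needing care are (i) verifying that the IMS sketch's success probability clears the $1-1/(100k)$ bar across all levels and document lengths, and (ii) tracking which of $\q$ and $d/2^i$ dominates each geometric sum so that the $nd\log h$ and $n\q\log h\log d$ terms emerge with the right exponents. I would also remark, as in the paragraph preceding the statement, that because \autoref{thm:ims} handles edit distance with block moves, this protocol (unlike the set-of-sets protocols built on \autoref{thm:de_best}) additionally tolerates block moves within individual documents.
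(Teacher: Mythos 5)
Your proposal is correct and takes essentially the same route as the paper: it instantiates \autoref{lem:ibltofdocs} with the IMS sketch of \autoref{thm:ims}, plugging in $g(k,n')=O(k\log n'\log(n'/k))$ for communication and $f(k,n')=O(n'\log n')$ for time, and then evaluates the same sums. Your bounding of the time term (uniform bound on the weights plus a geometric series) differs only cosmetically from the paper's split of the sum at $i=\log(d/\q)$, and your remark on verifying the $1-1/(100k)$ success-probability hypothesis is a harmless extra check.
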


\begin{proof}
We use \autoref{lem:ibltofdocs} with \autoref{thm:ims} as our document exchange protocol, giving us $f(k,n')=n'\log n'$ and  $g(k,n')=k\log n' \log (n'/k)$.  Our communication bound is
\begin{align*}
O&\left(d\log s +  d \log \log \min(d,h) + \sum_{i=1}^{\log \min(d,h)} (d/2^i)g(2^i, h)\right) \\
&= O\left(d\log s +  d \log \log \min(d,h) + \sum_{i=1}^{\log \min(d,h)} d \log h \log(h/2^i)\right) \\
&= O(d\log s+ d\log^2 h \log \min(d,h)).
\end{align*}

We bound the time in two pieces.  First,
\begin{align*}
O\left(\sum_{i=1}^{\log \min(d,h)} \sum_{j=1}^m f(2^i, n_j)\right) &= O\left(\sum_{i=1}^{\log \min(d,h)} \sum_{j=1}^m n_j \log n_j\right) \\
&= O\left(n \log h \log \min(d,h)\right).
\end{align*}

Finally,
\begin{align*}
O&\left(\sum_{i=1}^{\log \min(d,h)} \sum_{j=1}^{\min(\q, d/2^i)} (\q + \min(\q, d/2^i) - 2j) f(2^i,n_j)\right) \\
&= O\left(\sum_{i=1}^{\log d} \sum_{j=1}^{\min(\q, d/2^i)} (\q + \min(\q, d/2^i) - 2j) n_j \log n_j \right) \\
&= O\left(\sum_{i=1}^{\log (d/\q)} \sum_{j=1}^{d/2^i} (\q + d/2^i - 2j) n_j \log n_j +\sum_{i=\log (d/\q)}^{\log d} \sum_{j=1}^{\q} (2\q - 2j) n_j \log n_j \right)\\
&= O\left(\sum_{i=1}^{\log (d/\q)} (\q + d/2^i) n \log h +\sum_{i=\log (d/\q)}^{\log d} 2\q n \log h \right)\\
&= O(n \log h (d + \q\log d)).
\end{align*}
\end{proof}

Now we instead use the state-of-the-art one round document exchange protocol \autoref{thm:de_best} to achieve the following result.

\thmbestsos*
\begin{proof}

We use \autoref{lem:ibltofdocs} with \autoref{thm:de_best} as our document exchange protocol when our chosen $k < h^\epsilon$.  In this case $f(k,n')=n'(\log k + \log \log n')$ and $g(k,n')=k\log n' +k\log^2 k$.  When $k \geq h^\epsilon$ we use \autoref{thm:ims} as in \autoref{thm:ibltofims}.  Our communication cost is
\begin{align*}
O&\left(d\log s +  d \log \log \min(d,h) + \sum_{i=1}^{\log \min(d,h)} (d/2^i) g(2^i, h)\right) \\
&= O\left(d\log s +  d \log \log \min(d,h) + \sum_{i=1}^{\log \min(d,h)} d( \log h + i^2)\right) \\
&= O(d \log s + d \log h \log \min(d,h) + d \log^3 \min(d,h)).
\end{align*}

We bound the running time just as in \autoref{thm:ibltofims}, but we consider two cases.  First, if $d < h^\epsilon$ then
\begin{align*}
O\left(\sum_{i=1}^{\log \min(d,h)} \sum_{j=1}^m f(2^i, n_j)\right) &= O\left(\sum_{i=1}^{\log d} \sum_{j=1}^m n_j(i + \log \log h)\right) \\
&= O\left(n \log^2 d + n \log d \log \log h\right).
\end{align*} 
Furthermore,
\begin{align*}
O&\left(\sum_{i=1}^{\log \min(d,h)} \sum_{j=1}^{\min(\q, d/2^i)} (\q + \min(\q, d/2^i) - 2j) f(2^i,n_j)\right) \\
&= O\left(\sum_{i=1}^{\log (d/\q)} (\q + d/2^i) n (i + \log \log h) +\sum_{i=\log (d/\q)}^{\log d} 2\q n (i + \log \log h) \right)\\
&= O(n d \log \log h + n \q\log d \log \log h +n \q\log^2d).
\end{align*}

Now consider $d \geq h^\epsilon$.  There are a several subcases here depending on the relative values of $d/\q,d,h^\epsilon$ and $h$, but they all yield the same bound.  We only present here the case when $h^\epsilon \leq d/\q \leq \min(d,h)$.
\begin{align*}
O\left(\sum_{i=1}^{\log \min(d,h)} \sum_{j=1}^m f(2^i, n_j)\right) &= O\left(\sum_{i=1}^{\epsilon \log h} \sum_{j=1}^m n_j(i + \log \log h)+\sum_{i=\epsilon \log h}^{\log \min(d,h)} \sum_{j=1}^m n_j \log h\right) \\
&= O\left(n \log^2 h\right),
\end{align*}
and
\begin{align*}
O&\left(\sum_{i=1}^{\log \min(d,h)} \sum_{j=1}^{\min(\q, d/2^i)} (\q + \min(\q, d/2^i) - 2j) f(2^i,n_j)\right) \\
&= O\left(\sum_{i=1}^{e\log h} (\q + d/2^i) n (i + \log \log h)+\sum_{i=e \log h}^{\log (d/\q)} (\q + d/2^i) n \log h +\sum_{i=\log (d/\q)}^{\log \min(d,h)} 2\q n \log h \right)\\
&= O(n d \log \log h +n \q\log^2 h).
\end{align*}
Combining the cases we get a total time bound of
\[O(nd \log \log h + n \q \log^2 \min(d, h) + n \q \log \min(d,h) \log \log h).\]
\end{proof}

\section{Details of Faster Decoding Document Exchange Protocol}
\label{app:fastims}

Here we develop a protocol optimized for detecting whether it is ultimately going to fail in time $o(n)$.  We then combine this with \autoref{lem:ibltofdocs} to achieve a fast directory reconciliation protocol.

\fastims*

\begin{proof}
First we detail the protocol, except for some replication for probability amplification and some computational optimization, then prove its correctness afterwards.

\begin{itemize}
\item Alice's step:
\begin{enumerate}
\item For $i \in [\lceil \log_2(n/k^3) \rceil]$, Alice divides her string into $2^ik$ blocks $\bar{a}_1, \bar{a}_2,\ldots,\bar{a}_{2^ik}$ of length $n/(2^ik)$.  For each $j \in [2^ik]$, she computes $x_{i,j}$, a $\Theta(\log(n/\delta))$ bit rolling hash of $\bar{a}_j=a_{(j-1)n/(2^ik)+1,jn/(2^ik)}$.  She creates $T_i$, a $\Theta(k)$-cell IBLT.  For $j \in [2^ik]$ she inserts the pair ($j$,$x_{i,j}$) as a key into $T_i$.
\item Alice creates a $\Theta(k)$-cell IBLT $T^*$, with both keys and values, which uses the same hash functions as $T_{\lceil \log_2(n/k^3) \rceil}$.  She then divides her string into $n/k^2$ blocks $\bar{a}_1, \bar{a}_2,\ldots,\bar{a}_{k^2}$ of length $k^2$, and for $j \in [n/k^2]$ she inserts into $T^*$ the key ($j$,$x_{\lceil \log_2(n/k^3) \rceil,j}$) with the corresponding value $\bar{a}_j$.
\item She sends $T^*$ and all of the $T_i$s to Bob.
\end{enumerate}
\item Bob's precomputation:
\begin{enumerate}
\item For $i \in [\lceil \log_2(n/k^3) \rceil]$ and $j' \in [n-n/(2^ik)+1]$ Bob computes $y_{i,j'}$, a $\Theta(\log (n/\delta))$ bit rolling hash of $b_{j':j'+n/(2^ik)-1}$.
\item For $i \in [\lceil \log_2(n/k^3) \rceil]$ and $j \in [2^ik]$, he creates a hash table $H_{i,j}$.  
\item For $i \in [\lceil \log_2(n/k^3) \rceil], j \in [2^ik]$, and $m \in \{-k,-k+1,\ldots,k\}$, let $j' = (j-1)n/(2^ik)+1+m$.  Bob inserts $y_{i,j'}$ into $H_{i,j}$ as a key, with $m$ as its corresponding value.  If a key was previously inserted into a table, pick between the values arbitrarily.
\item For $i \in [\lceil \log_2(n/k^3) \rceil], j \in [2^ik], m \in \{-k,-k+1,\ldots,k\}$, and $\ell \in \{i,\ldots,\lceil \log_2(n/k^3) \rceil\}$: Bob creates a $\Theta(k)$ cell IBLT $T_{i,j',m,\ell}$.  Let $j'=(j-1)n/(2^ik)+1+m$. He constructs $T_{i,j,m,\ell}$ so that if $b_{j':j'+n/(2^ik)-1} = \bar{a}_{j}$, $T_{i,j,m,\ell}$ will equal $\bar{a}_{j}$'s contribution to $T_\ell$.  This means that if he divides $b_{j':j'+n/(2^ik)-1}$ into $2^{\ell-i}$ blocks $\bar{b}_1,\ldots,\bar{b}_{2^{\ell-i}}$ of size $n/{2^\ell k}$, $T_{i,j,m,\ell}$ contains the pairs ($2^{\ell-i}(j-1)+\iota$,$\Theta(\log (n/\delta))$ bit rolling hash of $\bar{b}_\iota$), for $\iota \in [2^{\ell-i}]$.

\end{enumerate}
\item After Bob receives Alice's message:
\begin{enumerate}
\item Bob initializes an empty list $L$, which will consist of tuples indicating the parts of $a$ that Bob has recovered, and what parts of $b$ they match.
\item For $\ell \in [\lceil \log_2(n/k^3) \rceil]$, for each $(i,j,m)$ tuple in $L$, Bob subtracts $T_{i,j,m,\ell}$ from $T_\ell$.  Bob then attempts to decode $T_\ell$.  If it fails, he reports failure and terminates.  Otherwise, if $\ell < \lceil \log_2(n/k^3) \rceil$, then for each $(j,x_{\ell,j})$ pair that Bob extracts from $T_i$, he checks if $x_{\ell,j}$ is in $H_{\ell,j}$, and if it is he extracts the corresponding value $m$ and adds the tuple $(\ell,j,m)$ to $L$. (This is the end of the failure check phase of the protocol.)

\item For each $(i,j,m)$ tuple in $L$, let $j'= (j-1)n/(2^ik)+1+m$.  Bob divides $b_{j':j'+n/(2^ik)-1}$ into $n/(k^3 2^i)$ blocks $\bar{b}_1,\ldots,\bar{b}_{n/(k^3 2^i)}$ of size $k^2$.  For $\iota \in [n/{k^3 2^i}]$, he deletes from $T^*$ the key ($(j-1)n/(k^3 2^i)+\iota$,$\Theta(\log (n/\delta))$ bit rolling hash of $\bar{b}_\iota$) with value $\bar{b}_\iota$.  Bob then decodes $T^*$.
\item Bob creates his output string $a'$ as follows.  For each $(i,j,m)$ tuple in $L$, Bob lets $a'_{(j-1)n/(2^ik)+1,jn/(2^ik)} = b_{j':j'+n/(2^ik)-1}$, where $j' = (j-1)n/(2^ik)+1+m$.  For each key $(j,x_{i,j})$ and value $\bar{a}_j$ that Bob extracts from $T^*$, Bob lets $a'_{(j-1)k^2+1,jk^2} = \bar{a}_j$.
\end{enumerate}
\end{itemize}

We now argue that if none of the hashes collide and none of the IBLTs fail, then our protocol succeeds ($a' = a$).  This protocol operates exactly as \autoref{thm:ims} (the IMS sketch of \cite{irmak2005improved}) does except in four ways: it uses IBLTs for its systematic error correcting codes, it stops using hashes once blocks are size $\Theta(k^2)$ instead of $\Theta(\log n)$, it has a slightly different way of transmitting the encoded plain text in $T^*$ at the bottom level, and Bob has a different mechanism for deleting what he has recovered so far from Alice's codes.  We argue that these four changes still make the output consistent with that of \autoref{thm:ims}.  

IBLTs (when used to represent a vector rather than a set by including the index of each item in the pair) do indeed fulfill the requisite criteria for a systematic error correcting code here.  Stopping at blocks of size $\Theta(k^2)$ will also not affect the result, since stopping at any level, so long as at that level we directly encode the blocks rather than just the hashes, will not affect correctness.  The normal mechanism for \autoref{thm:ims} to encode the blocks at the bottom level (with IBLTs as codes) would be to make an IBLT independent of the $T_i$s and insert the the blocks paired with their indices $(j,\bar{a}_j)$ into $T^*$ as keys, without values.  Assuming no hash collisions, the same information is present in $T^*$ as we have constructed it, and assuming all of the IBLTs decode, that information will still be recovered and usable in the same way.

What remains is to argue that the way Bob uses his precomputed data to decode Alice's codes is consistent with \autoref{thm:ims}.  Suppose that Bob did not precompute anything and was just executing \autoref{thm:ims} using only the first three of our differences.  In this case, Bob's protocol after receiving Alice's message would be:
\begin{enumerate}
\item Bob initializes an empty list $L$, which will consist of tuples indicating the parts of $a$ that Bob has recovered, and what parts of $b$ they match.
\item For $\ell \in [\lceil \log_2(n/k^3) \rceil-1]$:
\begin{itemize}
\item For each $(i,j',j)$ tuple in $L$, he divides $b_{j':j'+n/(2^ik)-1}$ into $2^{\ell-i}$ blocks $\bar{b}_1,\ldots,\bar{b}_{2^{\ell-i}}$ of size $n/{2^\ell k}$.  For $\iota \in [n/{2^\ell k}]$, he deletes the pair ($2^{\ell-i}(j-1)+\iota$,$\Theta(\log (n/\delta))$ bit rolling hash of $\bar{b}_\iota$) from $T_\ell$.
\item  Bob attempts to decode $T_\ell$.  If it fails, he reports failure and terminates.  Otherwise, for each $(j,x_{\ell,j})$ pair that Bob extracts from $T_i$, he checks if there exists a $j'$ for which the $\Theta(\log (n/\delta))$ bit rolling hash of $b_{j':j'+n/(2^\ell k) -1}$ is equal to $x_{\ell,j}$. If such a $j'$ does exist, then he adds the tuple $(\ell,j',j)$ to $L$.
\end{itemize} 
\item For each $(i,j',j)$ tuple in $L$, Bob divides $b_{j':j'+n/(2^ik)-1}$ into $n/(k^3 2^i)$ blocks $\bar{b}_1,\ldots,\bar{b}_{n/(k^3 2^i)}$ of size $k^2$.  For $\iota \in [n/{k^3 2^i}]$, he deletes from $T^*$ the key ($(j-1)n/(k^3 2^i)+\iota$,$\Theta(\log (n/\delta))$ bit rolling hash of $\bar{b}_\iota$) with value $\bar{b}_\iota$.  
\item Bob attempts to decode $T^*$.  If it fails, he reports failure and ends the protocol.
\item Bob creates his output string $a'$ as follows.  For each $(i,j',j)$ tuple in $L$, Bob lets $a'_{(j-1)n/(2^ik)+1,jn/(2^ik)} = b_{j':j'+n/(2^ik)-1}$.  For each key $(j,x_{i,j})$ and value $\bar{a}_j$ that Bob extracts from $T^*$, Bob lets $a'_{(j-1)k^2+1,jk^2} = \bar{a}_j$.
\end{enumerate}
There are two differences from our protocol here.  The first is that Bob uses a precomputed hash table to determine if the hash $x_{\ell,j}$ matches the hash of a substring of $b$.  The way our protocol does this, it only checks in the table for Bob's substrings within a distance $k$ of Alice's substring.  That is, $x_{\ell,j}$ is the hash of $a_{(j-1)n/(2^\ell k)+1:jn/(2^\ell k)}$, and Bob only checks his substrings with starting index in $[(j-1)n/(2^\ell k)+1-k,(j-1)n/(2^\ell k)+1+k]$.  This is valid, because our protocol only assumes $k$ is a bound on the edit distance, not a bound on the edit distance with block moves, as is assumed in \autoref{thm:ims}.  As a result, it suffices to only look for matches of within these $k$ indices of Alice's substring.

The second difference is how we delete the known pieces of $a$ from $T_\ell$.  In the above version of \autoref{thm:ims}, we iterate through each known piece and compute the hash of each of block in that piece and delete it from $T_\ell$.  Equivalently, we could take all of those hashes and add them to a new IBLT $T'$, then subtract $T'$ from $T_\ell$.  $T'$ is exactly equal to the precomputed $T_{i,j,m,\ell}$ which our protocol subtracts from $T_\ell$, thus the two protocols are consistent.

Now we consider the failure probability.  Each of the $T_i$s decodes with probability $1-1/\poly(k)$.  We replicate the IBLTs $\Theta(\lceil \log_k \log n\rceil)$ times, so that after union bounding over the $O(\lceil \log(n/k^3) \rceil)$ IBLTs we attempt to decode, we still succeed with probability $1-1/\poly(k)$.  Since $T^*$ uses the same keys and hash functions as $T_{\lceil \log_2(n/k^3) \rceil}$, if $T_{\lceil \log_2(n/k^3) \rceil}$ decodes then $T^*$ will with probability 1.  Thus, if all of the $T_i$s succeed, then failure can only occur due to hash collision which occurs with probability at most $\delta$ since we are using $\Theta(\log (n/\delta))$ bit hash functions and hashing a total of $O(n \log n)$ strings ($O(n)$ per level, and $O(\log n)$ levels).

Alice's part of the protocol takes $O(n \lceil \log (n/k^3)\rceil) = O(n \log (n /k))$ time.  Each $T_i$ takes $O(k \log (n/\delta) \lceil \log_k \log n\rceil)$ space after replication, and $T^*$ takes $O(k(\log (n/\delta) + k^2)\lceil \log_k \log n\rceil)$ space, so the total communication cost is $O((k \log(n/k) \log (n/\delta)+k^3) \lceil \log_k \log n\rceil)$.

Computing the $y_{i,j'}$s takes Bob $O(n \log(n/k) \log_n(n/\delta))$ time.  Generating each $H_{i,j}$ takes Bob $O(k \log_n(n/\delta))$ time, so generating all of them takes $O(n \log_n(n/\delta)/k)$ time.  Bob can compute the $T_{i,j,m,\ell}$s and their replications in $O((n+ k^3  \log(n/k))\log_n(n/\delta)\lceil \log_k \log n\rceil)$ time.  To achieve this, he first computes all of the $T_{i,j,m,i}$s (note that here $\ell = i$).  Each one takes $O(k \log_n(n/\delta))$ time since it takes $O(k \log_n(n/\delta))$ time to initialize and then $O(\log_n(n/\delta))$ time to insert the single item into the table.  Now we observe that $T_{i,j,m,\ell}$ for $\ell > i$ is equal to the sum of $T_{i+1,2j-1,m,\ell}$ and $T_{i+1,2j,m,\ell}$, thus once we have computed each $T_{i,j,m,i+t}$, we can compute a given $T_{i,j,m,i+t+1}$ in $O(k)$ time by adding together two already computed IBLTs.  Thus the total time to compute the $T_{i,j,m,\ell}$s is $O(k \log_n(n/\delta))$ times how many of them there are, giving us a total precomputation time of
\begin{align*}O&(n \log(n/k) \log_n(n/\delta) + \lceil \log_k \log n\rceil\sum_{i=1}^{\lceil \log_2(n/k^3) \rceil} \sum_{j=1}^{2^i k} \sum_{m=-k}^k \sum_{\ell=i}^{\lceil \log_2(n/k^3) \rceil} O(k \log_n(n/\delta)) \\
&= O\left(n  \log(n/k) \log_n(n/\delta)  + \lceil \log_k \log n\rceil\sum_{i=1}^{\lceil \log_2(n/k^3) \rceil} (\lceil \log_2(n/k^3) \rceil-i) 2^i k^3 \log_n(n/\delta) \right) \\
&= O\left(n  \log(n/k) \log_n(n/\delta) + k^3 \log(n/k) \log_n(n/\delta) \lceil \log_k \log n\rceil\right).
\end{align*}

Finally we examine Bob's computation time after he receives Alice's message.  $|L| = O(k)$ at all times, so the loop over $\ell$ takes $O(k^2 \log(n/k) \log_n(n/\delta) \lceil \log_k \log n\rceil)$ time, which is the entire computation in the failure check phase.  Bob's remaining two steps take $O(n \lceil \log_k \log n\rceil)$ time.
\end{proof}

Now with this document exchange protocol in hand, we get the following directory reconciliation result.

\thmfastsos*

\begin{proof}
Basically, we plug \autoref{thm:fastims} into \autoref{lem:ibltofdocs}.  We do all of Bob's precomputation steps once at the beginning of the protocol, and when trying to decode a document encoding if it fails by the end of the failure check step, we just stop there so for an appropriate choice of $\delta$, we should only have perform an $\tilde{O}(n)$ computation once for each of Bob's documents and once for each of Alice's documents that we recover.

We choose $\delta = 1/\poly(n)$ so that with at least $14/15$, in none of \autoref{lem:ibltofdocs}'s $O(d^2 \log d)$ attempts to decode a message, will the decoding fail after passing the failure check step.  Our bounds are then \autoref{lem:ibltofdocs}'s with $g(n',k)=O((k \log n' \log n +k^3) \lceil \log_k \log n'\rceil)$ and $f(n',k)=O(k^2 \log (n'/k) \log_{n'} n\lceil \log_k \log n'\rceil)$ with an additional $O(n \log h \log \min(d,h))$ computation time for Alice's time to generate her document exchange messages, \[O(n  \log h \log \min(d,h) + d^3 \log h  \log \min(d,h) \log \log h)\] time for Bob to perform his precomputations, and $O(d^2 \log h  \log \log h)$ time for Bob to perform his recovery of Alice's documents that have passed their failure checks. Our communication bound is then
\begin{align*}
O&\left(d\log s +  d \log \log \min(d,h) + \sum_{i=1}^{\log \min(d,h)} (d/2^i)g(2^i, h)\right) \\
&= O\left(d\log s +  d \log \log \min(d,h) + \sum_{i=1}^{\log \min(d,h)} d (\log h \log n + 2^{2i})  \log \log h \right) \\
&= O(d\log s+ d \log n \log h \log \min(d,h) \log \log h + d \min(d,h)^2 \log \log h).
\end{align*}

We have already factored in the construction time for Alice and Bob's messages in \autoref{lem:ibltofdocs}, so the \[O\left(\sum_{i=1}^{\log \min(d,h)} \sum_{j=1}^m f(2^i, n_j)\right)\] piece of the computation time is not included.

Thus our computation time from \autoref{lem:ibltofdocs} is simply
\begin{align*}
O&\left(\sum_{i=1}^{\log \min(d,h)} \sum_{j=1}^{\min(\q, d/2^i)} (\q + \min(\q, d/2^i) - 2j) f(2^i,n_j)\right) \\
&= O\left(\sum_{i=1}^{\log (d/\q)} \sum_{j=1}^{d/2^i} (\q + d/2^i) 2^{2i} \log n \log \log n_j +\sum_{i=\log (d/\q)}^{\log d} \sum_{j=1}^{\q} 2\q 2^{2i} \log n \log \log n_j  \right)\\
&= O\left(\sum_{i=1}^{\log (d/\q)} (\q + d/2^i) d 2^i \log n \log \log h +\sum_{i=\log (d/\q)}^{\log d} 2\q^2 2^{2i} \log n \log \log h\right)\\
&= O(d^2 \q^2 \log n \log \log h).
\end{align*}
Adding up all of our pieces, we get the desired computation time.
\end{proof}

\section{Details of Protocols for Unknown $d$} \label{app:unknownd}
Here we develop an efficient directory reconciliation protocol which uses only a constant number of rounds of communication for the case when we do not have a bound on $d$.  First we describe the tools we need for it.  The first is the CGK embedding of \cite{chakraborty2016streaming}. 
Recall that $\Delta_e$ is the edit distance function and let $\Delta_H$ be the Hamming distance function.

\begin{lemma}[Theorem 4.1 of \cite{chakraborty2016streaming}]
There is a mapping $f : \{0,1\}^n \times \{0,1\}^{6n}\rightarrow \{0,1\}^{3n}$ with the following properties:
\begin{enumerate}
\item The mapping can be computed in $O(n)$ time.
\item For every $x \in \{0,1\}^n$, given $f(x,r)$ and $r$, $x$ can be computed in $O(n)$ time with probability at least $1-\exp(-\Omega(n))$ over $r$.
\item For every $x, y \in \{0,1\}^n$, $\Delta_e(x,y)/2 \leq \Delta_H(f(x,r),f(y,r)) \leq 1300 (\Delta_e(x,y))^2$ with probability at least $2/3$ over $r$.
\end{enumerate}
\end{lemma}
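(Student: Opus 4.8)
The mapping $f$ is the \emph{CGK random walk}. I would first fix the construction: interpret the $6n$ bits of $r$ as a table $\{r_{\sigma,j}\}$ indexed by a symbol $\sigma\in\{0,1\}$ and a step $j\in[3n]$, each bit uniform and independent. To compute $f(x,r)$, pad $x$ on the right with a sentinel symbol, keep a pointer $i$ initialized to $1$, and for $j=1,\ldots,3n$ let the $j$-th symbol of $f(x,r)$ be $x_i$ and then set $i\gets i+1$ iff $r_{x_i,j}=1$. The point of sharing $r$ is that on two inputs the two walks consult the same random bit whenever their pointers sit on equal symbols at the same step. I would then establish the three properties in turn.

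Property (1) is immediate: the walk runs for $3n$ steps with $O(1)$ work per step. For property (2), given $f(x,r)$ and $r$ we replay the walk — at step $j$ the output symbol equals $x_i$ for the current pointer $i$, and since $r$ is known we can recompute whether the pointer advances; thus $x_1,x_2,\ldots$ are recovered in order, up to the largest pointer value reached after $3n$ steps. Along a single walk each table entry $r_{\sigma,j}$ is consulted at most once (only the pair $(x_i,j)$ is read at step $j$), so the pointer advances with probability exactly $1/2$ at each step, independently across steps; a Chernoff bound shows the pointer exceeds $n$ within $3n$ steps except with probability $\exp(-\Omega(n))$, at which point all of $x$ has appeared.

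Property (3), upper bound, is the heart. Fix an optimal alignment of $x$ and $y$ realizing $K=\Delta_e(x,y)$ edits; it leaves all but $O(K)$ positions on each side matched to an equal symbol on the other side. Run the walks on $x$ and $y$ with the same $r$ and couple them. Call them \emph{synchronized} at a step when the two pointers sit on positions matched to each other: while synchronized the outputs agree and both walks read the same entry $r_{\sigma,j}$, so they advance in lockstep and stay synchronized; the only way to desynchronize is for a pointer to reach one of the $O(K)$ edited positions. Once desynchronized, the signed offset between the two walks' progress performs an unbiased lazy $\pm1$ random walk driven by fresh coins, and the walks resynchronize exactly when this offset returns to $0$ with both pointers back on the matched set. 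Since the offset accumulated before the next edit is $O(K)$, a standard hitting-time estimate bounds the expected length of each desynchronization episode by $O(K^2)$; summing over the $O(K)$ episodes gives $\E[\Delta_H(f(x,r),f(y,r))]=O(K^2)$, and Markov's inequality upgrades this to $\Delta_H(f(x,r),f(y,r))\le 1300K^2$ with probability at least $2/3$, provided the constant hidden in the $O(\cdot)$ is at most $1300/3$. For the lower bound I expect $\Delta_H(f(x,r),f(y,r))\ge \Delta_e(x,y)/2$ to hold with probability $1$: from the positions $j\in[3n]$ where $f(x,r)_j=f(y,r)_j$, read the pointer pairs $(i_x(j),i_y(j))$; since $i_x,i_y$ are each nondecreasing and change by $0$ or $1$, the distinct such pairs form a monotone matching between a subsequence of $x$ and a subsequence of $y$ with equal matched symbols, and a short counting argument caps the unmatched symbols on either side by $2\Delta_H(f(x,r),f(y,r))$, i.e. $\Delta_e(x,y)\le 2\,\Delta_H(f(x,r),f(y,r))$. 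As this holds deterministically, combining it with the upper bound gives property (3) with probability at least $2/3$ over $r$.

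The main obstacle is the $O(K^2)$ bound on the expected total desynchronized length: one must pick the right offset potential, verify it is genuinely an unbiased lazy walk whose resynchronization event is a return to the origin, control the maximum offset reachable between consecutive edits, and handle the bookkeeping when several edits occur close together or when a pointer hits the padded end. Properties (1) and (2) and the lower bound of (3) are comparatively routine once the construction is set up.
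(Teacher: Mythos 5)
First, note that the paper does not prove this statement at all: it is imported verbatim as Theorem 4.1 of \cite{chakraborty2016streaming}, so there is no in-paper proof to compare against; what you have written is an attempted reproof of the CGK embedding theorem. Your construction is the right one (the shared-randomness random walk with $2$ fresh bits per step, $3n$ steps), and your treatment of property (1), property (2) (each step consults a fresh bit for the current symbol, so the pointer advances independently with probability $1/2$ and a Chernoff bound finishes), and the deterministic lower bound $\Delta_e(x,y)\le 2\,\Delta_H(f(x,r),f(y,r))$ all match the standard argument. The episode structure you set up for the upper bound (synchronized stretches where the walks move in lockstep, at most $O(K)$ desynchronization episodes triggered by edited positions, the offset performing an unbiased lazy $\pm 1$ walk) is also the correct skeleton.

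The genuine gap is the quantitative core of the upper bound. You claim that ``a standard hitting-time estimate bounds the expected length of each desynchronization episode by $O(K^2)$,'' and then apply Markov to $\E[\Delta_H]$. This is false: resynchronization is the event that an unbiased lazy walk hits a \emph{specific} offset at distance $O(1)$ from its starting point, and the hitting time of a point by an unbiased one-dimensional walk has tail $\Theta(1/\sqrt{t})$ and \emph{infinite} expectation. (You appear to be conflating this with the expected exit time from an interval of width $O(K)$, which is indeed $O(K^2)$ but is a different event; exiting an interval does not resynchronize the walks.) With the forced truncation at $3n$ steps the expected episode length is $\Theta(\sqrt{n})$, so $\E[\Delta_H]$ is more like $O(K\sqrt{n})$, and Markov on this expectation cannot yield the $1300K^2$ bound. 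The correct route, and the one CGK take, is to prove the tail bound $\Pr[T_i>t]=O(1/\sqrt{t})$ for a single episode and then bound $\Pr\bigl[\sum_{i\le O(K)} T_i > cK^2\bigr]$ directly, exploiting independence across episodes; for instance, truncating each $T_i$ at level $a$, using $\E[\min(T_i,a)]=O(\sqrt{a})$ together with a union bound on $\Pr[T_i>a]$, gives $\Pr[\sum_i T_i>a]=O(K/\sqrt{a})$, and choosing $a=\Theta(K^2)$ with explicit constants produces the stated $1300K^2$ bound with probability $2/3$. Repairing your argument therefore requires replacing the expectation-plus-Markov step with this heavy-tail summation argument; the rest of your outline can stand.
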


\cite{chakraborty2016streaming} uses this embedding to produce a document exchange protocol by combining it with the following Hamming distance sketch.  Document exchange can then be performed by encoding each party's string using the same $r$, reconciling these encodings using the Hamming distance sketch, and then inverting the encoding.  

\begin{lemma}[Theorem 4.4 of \cite{porat2007improved}]
Given $x \in \{0,1\}^n$ and $k \in [n]$, there is an algorithm that produces an $O(k \log n)$ sketch such $s_k(x)$ in time $O(n \log n)$.  Given $s_k(x)$ and $s_k(y)$ for $y \in \{0,1\}^n$ and $\Delta_H(x,y) \leq k$, there is an algorithm taking time $O(k \log n)$ which returns all tuples $(x_i,y_i)$ for which $x_i \neq y_i$ with probability at least $1-1/n$ (over the random bits in the sketching algorithm).
\end{lemma}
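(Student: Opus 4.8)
The plan is to recast the lemma as a $k$-sparse recovery (equivalently, set reconciliation) problem, solve it with a linear syndrome-style sketch that is automatically $O(k\log n)$ bits, and then bolt on a randomized bucketing/peeling layer that buys sublinear decoding time at the cost of a small failure probability.

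\textbf{Reduction.} Identify a string $z\in\{0,1\}^n$ with the set $S_z=\{i:z_i=1\}\subseteq[n]$, so that $\Delta_H(x,y)=|S_x\triangle S_y|$ and the coordinates on which $x$ and $y$ differ are exactly $D:=S_x\triangle S_y$. Since the alphabet is binary, knowing $D$ together with $y$ (which the decoder holds) determines every tuple $(x_i,y_i)$, so it suffices to produce $s_k$ so that $s_k(x)$ and $s_k(y)$ reveal $D$ whenever $|D|\le k$. I would make $s_k$ linear in the indicator vector over a characteristic-$2$ field; then $s_k(x)\oplus s_k(y)=s_k(\mathbf 1_D)$ is a sketch of a binary vector of Hamming weight at most $k$, and the whole task reduces to recovering such a vector from its sketch.

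\textbf{Syndrome core.} Fix $m=\Theta(\log n)$ with $2^m>n$ and distinct nonzero $\alpha_1,\dots,\alpha_n\in\mathrm{GF}(2^m)$, and let the sketch of $\mathbf 1_S$ be the first $2k$ power sums $P_j(S)=\sum_{i\in S}\alpha_i^{\,j}$ for $j=1,\dots,2k$: that is $2k$ field elements, $O(k\log n)$ bits, linear in $\mathbf 1_S$. Given the $P_j(D)$, this is classical syndrome decoding of a Reed--Solomon/BCH code: Berlekamp--Massey (or Newton's identities / the Euclidean algorithm) recovers the error-locator polynomial $\Lambda(w)=\prod_{i\in D}(1-\alpha_i w)$ from $P_1(D),\dots,P_{2k}(D)$ in $O(k^2)$ time, and its roots are the $\alpha_i^{-1}$ for $i\in D$. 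This step is deterministic and always succeeds when $|D|\le k$; what it does \emph{not} give on its own is the claimed decode time, since finding the roots of $\Lambda$ by a Chien search over all $n$ candidates costs $\Theta(nk)$.

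\textbf{Bucketing/peeling layer.} To reach $O(k\log n)$ decode time I would hash $[n]$ into $B=\Theta(k)$ buckets with a (pairwise-)independent $h:[n]\to[B]$ and keep, per bucket, only a short version of the above sketch --- enough power sums to recover a constant number of differing coordinates landing there, plus an $O(\log n)$-bit fingerprint of the recovered set to detect over-full buckets. In expectation each bucket receives fewer than one element of $D$, so a constant fraction of $D$ sits alone and is decoded in $O(1)$ time; collided buckets are detected (count too large, or fingerprint mismatch) and re-hashed with fresh randomness, peeling in IBLT-like fashion (cf.\ \autoref{thm:iblt}) across $\Theta(\log n/\log k)$ layers. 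A balls-in-bins / peeling analysis, union-bounded over the $\le k$ coordinates of $D$ and the layers with the constants tuned appropriately, shows all of $D$ is recovered except with probability at most $1/n$. Each string coordinate contributes to $O(1)$ buckets per layer over $O(\log n)$ layers, so $s_k(x)$ is built in $O(n\log n)$ time; the sketch stays $O(k\log n)$ bits; and decoding touches $O(k)$ buckets per layer with $O(1)$ field arithmetic each, i.e.\ $O(k\log n)$ time.

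\textbf{Main obstacle.} The delicate part is the last step: the plain syndrome sketch already nails the $O(k\log n)$ \emph{space} but only gives near-linear-in-$n$ decoding, so the content of the lemma is precisely the randomized hierarchical structure that trades an inverse-polynomial failure probability for sublinear decoding without inflating the sketch. The points to verify carefully are (i) that over-full buckets are reliably flagged --- this is what the fingerprints buy and where the $1-1/n$ probability enters; (ii) that the re-hashing/peeling provably terminates having recovered every coordinate of $D$; and (iii) that the cross-layer bookkeeping keeps both space and time within the stated bounds. It is worth noting as well that, although the lemma assumes $\Delta_H(x,y)\le k$, the fingerprints make the decoder self-certifying, so it can report failure rather than garbage if that hypothesis fails.
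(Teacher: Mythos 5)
First, note that the paper does not prove this statement at all: it is imported verbatim as Theorem~4.4 of \cite{porat2007improved} and used as a black box, so there is no internal proof to compare against. Your architecture --- a GF(2)-linear sketch so that $s_k(x)\oplus s_k(y)$ sketches the $k$-sparse difference vector, BCH/power-sum syndromes as the exactly-correct core, and a randomized bucketing layer whose only job is to make decoding sublinear in $n$ --- is the right family of techniques and matches the cited construction in spirit, and you correctly identify that the plain syndrome sketch already gives the space bound and that root-finding (Chien search) is what forces the extra machinery.

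The gap is that the part you defer to ``points to verify'' is exactly the content of the theorem, and as sketched your bucketing layer does not deliver the $1-1/n$ guarantee within the $O(k\log n)$-bit budget. All hash functions must be committed at sketching time (a one-shot sketch cannot ``re-hash with fresh randomness'' adaptively), so you are pre-committing $L$ layers of $\Theta(k)$-bucket tables. If the layers do not shrink, the bit budget forces $L=O(1)$, and a per-coordinate isolation failure probability that decays only geometrically in $L$ leaves a constant overall failure probability; if the layers shrink geometrically (so the space telescopes to $O(k\log n)$), the balls-in-bins concentration you invoke breaks once only $O(1)$ differences survive, and the tail of the peeling again fails with constant probability --- this is the same reason \autoref{thm:iblt} gives $1-O(1/\poly(m))$ in the table size $m$, not $1-1/n$. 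Closing this requires an additional idea you have not supplied, e.g.\ a deterministic residual syndrome for the $O(k/\log n)$ (or $O(\log n)$) survivors whose roots can be found quickly because the earlier layers have already localized them to $O(k)$ candidate positions, so that randomness is used only for localization while correctness of the final recovery is unconditional. Two smaller points: the per-bucket ``fingerprint of the recovered set'' must itself be XOR-homomorphic (an XOR of per-element hashes, as in the IBLT checksum) for the subtraction $s_k(x)\oplus s_k(y)$ to make sense; and over GF(2) the subtracted sketch reveals only the symmetric difference, not which string holds the $1$ at each differing index, so recovering the tuples $(x_i,y_i)$ implicitly uses the decoder's access to $y$ (harmless in the paper's application, but worth stating).
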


Using the Hamming sketch requires knowing an upperbound $k$ on the Hamming distance between the strings.  we estimate this difference efficiently using set difference estimators, which if we interpret sets as binary vectors, can be used to estimate Hamming distance.  In the language of Hamming distance, a set difference estimator is a data structure for estimating the Hamming distance between two binary strings.  It implicitly maintains two strings $x,y\in\{0,1\}^n$ and supports two operations: creation, merge and query.  Creation takes in a single string $x$ and makes an estimator $D$ representing $x$ and $y=\{0\}^n$.  Merge takes in a second set difference estimator $D'$, which implicitly maintains sets $x',y'\in\{0,1\}^n$ and returns a new set difference estimator $D''$ representing $x \wedge x'$ and $y \wedge y'$, where $\wedge$ denotes the logical OR operation.  Query returns an estimate for $\Delta_H(x,y)$.

\begin{lemma}[Theorem 3.1 of \cite{mitzenmacher2017reconciling}] \label{thm:strata}
There is a set difference estimator requiring $O(\log(1/\delta) \log n)$ space with $O(n \log(1/\delta)$ creation times, and $O(\log(1/\delta))$ merge and query times, which reports the size of the Hamming distance to within a constant factor with probability at least $1 - \delta$.
\end{lemma}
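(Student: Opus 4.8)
The plan is to realise the ``set difference estimator'' as a \emph{strata of signed IBLTs} (a geometric family of IBLTs at exponentially decreasing subsampling rates, in the style of the straggler/IBLT machinery of \autoref{sec:iblt}), give it constant success probability first, and then boost the confidence from a constant to $1-\delta$ by running $\Theta(\log(1/\delta))$ independent copies and reporting the median estimate. First I would record the reduction: identify a string $x\in\{0,1\}^n$ with the set $A_x=\{i\in[n]:x_i=1\}$, so that $\Delta_H(x,y)=|A_x\triangle A_y|$ and the coordinatewise-OR merge corresponds to set union; thus it suffices to estimate the size of a symmetric difference under an idempotent ``union'' merge.

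\textbf{The base estimator.} A single base estimator has $\lceil\log_2 n\rceil+1$ \emph{levels}. Using one shared hash function $\sigma:[n]\to[0,1]$, declare that index $i$ belongs to level $\ell$ iff $\sigma(i)<2^{-\ell}$, so level $\ell$ retains each index independently with probability $2^{-\ell}$ and the levels are nested. For each level keep a signed IBLT with $\Theta(1)$ cells, exactly the structure of \autoref{sec:iblt} that can represent the difference of two disjoint multisets of keys and recover it by peeling. \emph{Creation} from a string $x$ (with $y=\mathbf 0$): for each $i$ with $x_i=1$ insert $i$ with a positive sign into the IBLTs of all levels $0,\dots,\ell^\ast(i)$ it belongs to; since $\Pr[\ell^\ast(i)\ge j]=2^{-j}$ each index touches $O(1)$ levels in expectation, so creation is $O(n)$ per base estimator (capping the depth gives a worst-case bound). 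A symmetric creation puts the keys on the negative side for the $y$-argument. \emph{Merge} of two base estimators combines the corresponding per-level IBLTs cell by cell (add the counts, XOR the key/checksum fields); after a merge the level-$\ell$ IBLT is exactly the level-$\ell$ subsample of $A_x\triangle A_y$ for the merged pair, the positive--negative cancellations being handled exactly (up to $1/\poly$ failure) by the checksum mechanism of \autoref{sec:iblt}.

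\textbf{Query and amplification.} To \emph{query} a base estimator, try to peel every level's IBLT; let $\ell$ be the coarsest (smallest) level whose IBLT decodes, i.e.\ whose subsampled symmetric difference fits in $\Theta(1)$ cells, and output $2^{\ell}$ times the recovered size. For the accuracy, if $D=|A_x\triangle A_y|$ and $\ell^\ast=\lceil\log_2(D/c)\rceil$ for the right constant $c$, then the level-$\ell^\ast$ subsample is $\mathrm{Bin}(D,2^{-\ell^\ast})$ with mean $\Theta(c)$, so by a Chernoff bound it fits in the IBLT, decodes (by \autoref{thm:iblt}), and is within a constant factor of its mean, with some fixed probability $p_0>1/2$; on that event $2^{\ell}$ times the recovered size is within a constant factor of $D$, and in particular when $D=O(1)$ level $0$ recovers $D$ exactly. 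Then run $m=\Theta(\log(1/\delta))$ independent base estimators with independent hash functions, let \emph{merge} act componentwise, and let \emph{query} return the median of the $m$ base estimates; a Chernoff bound over the $m$ independent ``within a constant factor'' events shows the median is within a constant factor of $D$ except with probability $\delta$. This multiplies space, creation time, merge time and query time by $m$, matching the claimed $O(\log(1/\delta)\log n)$ space, $O(n\log(1/\delta))$ creation time, and $O(\log(1/\delta))$ merge and query time.

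\textbf{Main obstacle.} The delicate part is the query-side analysis across \emph{all} scales of $D$ simultaneously: one must show that with a fixed constant probability some level both decodes and yields a constant-factor estimate, and that the rule ``take the coarsest decoding level'' is not corrupted by a spurious false decode at a too-coarse level. The latter is precisely where the IBLT checksums of \autoref{sec:iblt} are needed, so that a false decode happens only with probability $1/\poly(n)$ (absorbed into the stated $1-1/\poly(n)$). A secondary subtlety is keeping \emph{merge} consistent with coordinatewise OR, which is idempotent whereas IBLT cells are additive; in every use the estimator makes, one side of a merged pair is empty, so no double counting occurs, and the general case is covered, up to the $1/\poly$ failure, by the signed-IBLT cancellation argument.
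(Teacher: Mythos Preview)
The paper does not prove this statement at all: it is quoted verbatim as Theorem~3.1 of \cite{mitzenmacher2017reconciling} and used as a black box. So there is no ``paper's own proof'' to compare your attempt against. That said, your construction is exactly the \emph{strata estimator} that the cited reference uses (note the label \texttt{thm:strata} in the source), so you have correctly reverse-engineered the intended object: a geometric hierarchy of constant-size signed IBLTs, with confidence amplified to $1-\delta$ by $\Theta(\log(1/\delta))$ independent copies and a median.

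One quantitative discrepancy is worth flagging. In your construction, a single base estimator has $\Theta(\log n)$ levels, each a constant-size IBLT. Merging two base estimators therefore touches $\Theta(\log n)$ cells, and your query (``try to peel every level'') likewise scans $\Theta(\log n)$ levels. After the $\Theta(\log(1/\delta))$-fold replication this gives merge and query times of $O(\log n\,\log(1/\delta))$, not the $O(\log(1/\delta))$ stated in the lemma. Your space and creation-time analyses match the statement, but as written your sketch does not deliver the claimed merge/query bounds; you would need either to argue that a single level (chosen, say, by the highest nonzero stratum) suffices for the query, and that merge can be done lazily, or to accept the extra $\log n$ factor. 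Since this paper only ever uses the estimator in contexts where that extra $\log n$ is dominated anyway, the discrepancy is harmless for the downstream theorems, but it is a gap relative to the lemma as stated.

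The ``OR merge is idempotent but IBLT addition is not'' issue you raise is real, and your workaround (the two sides are always disjoint in every actual use) is exactly how the cited paper and this one deploy the estimator; it is fine as a remark but is not a proof of the merge semantics in full generality.
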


We now combine these tools to develop a directory reconciliation protocol for unknown $d$, designed based on the multi-round set of sets reconciliation protocol of \cite{mitzenmacher2017reconciling}.

\begin{lemma}
Directory reconciliation with unknown $d$ can be solved in 4 rounds using 
\[O(\q \log s \lceil \log_{\q} (1/\delta)\rceil+\q \log h \log^2(\q/\delta) + d^2 \log h \lceil \log_h (\q/\delta)\rceil)\]
bits of communication and 
\[O((sh + \q^2)\log^2(\q/\delta)+(\q h + d^2)\log h\lceil\log_h(\q / \delta)\rceil)\]
time with probability at least $1-\delta$.
\end{lemma}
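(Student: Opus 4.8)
The plan is to follow the four-round structure of the multi-round set-of-sets reconciliation protocol of \cite{mitzenmacher2017reconciling}, instantiated with the CGK embedding of \cite{chakraborty2016streaming} and the Porat--Porat Hamming sketch of \cite{porat2007improved} in place of generic set reconciliation. In the first round, Alice sends Bob a set difference estimator (\autoref{thm:strata}, with failure parameter $\Theta(\delta)$) for the set of $\Theta(\log s)$-bit hashes of her documents; Bob combines it with his own to obtain a constant-factor estimate $\hat{q}=\Theta(\q)$ of the number of differing documents. In the second round, Alice sends an IBLT (\autoref{thm:iblt}) on her document-hash set with $\Theta(\hat q)$ cells, replicated $\Theta(\lceil\log_{\q}(1/\delta)\rceil)$ times for success probability $1-\delta/\poly$; Bob subtracts his own copy and decodes the hash symmetric difference, thereby identifying $D_B$ (his differing documents) and the hashes of Alice's differing documents $D_A$, of which there are $O(\q)$ on each side. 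In the third round, Alice draws $\Theta(\log(\q/\delta))$ shared CGK-embedding seeds $r_1,\dots$ and, for each document in $D_A$, sends a set difference estimator of the indicator set of each embedding $f(\cdot,r_\ell)$; Bob does the same for $D_B$ and, for every pair $(x\in D_A,x'\in D_B)$, combines the estimators to get a per-pair, per-seed constant-factor estimate of $\Delta_H(f(x,r_\ell),f(x',r_\ell))$, taking the median over $\ell$. In the fourth round, Bob tentatively matches each of Alice's differing documents to the Bob-document minimizing this median estimate, and Alice sends, on each pair Bob flags and each seed, a Porat--Porat Hamming sketch of the embedding sized to that estimate; Bob recovers $f(x,r_\ell)$ for each matched pair, inverts one good embedding to recover $x$, verifies it against Alice's hash, and replaces $D_B$ with the recovered $D_A$ in his directory.

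Correctness rests on three observations, all following the template of \cite{mitzenmacher2017reconciling}. First, by property~3 of the CGK embedding a single seed satisfies $\Delta_e/2\le\Delta_H\le 1300\,\Delta_e^2$ for a fixed pair with probability $2/3$; taking the median over $\Theta(\log(\q/\delta))$ independent seeds, a union bound over the $O(\q^2)$ candidate pairs and the reconciliations guarantees that simultaneously every matched pair has a usable embedding and every estimate is within a constant of the true Hamming distance in that embedding. Second --- the point of reusing the same embeddings in rounds 3 and 4 --- the Hamming sketch Bob needs for a matched pair is sized by the round-3 estimate, hence by $O(\Delta_e(x,x')^2)$, so the total sketch communication over a matching is $O(\log h\sum\Delta_e(x,x')^2)\le O(\log h\,(\sum\Delta_e)^2)=O(d^2\log h)$; an estimate drawn from a fresh embedding would not bound the sketch actually transmitted. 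Third, because $\Delta_H\ge\Delta_e/2$, a pair whose documents are genuinely far never beats a genuinely close competitor in the minimization, so the greedy matching recovers all of $D_A$; the standard charging argument of \cite{mitzenmacher2017reconciling} (each of Alice's documents is matched to the closest of Bob's under the minimum-edit matching) shows consistency with the true matching, and any residual discrepancy is caught by the per-document hash check, at which point the offending document is requested directly within the already-sized budget. There is no genuine unknown-$d$ difficulty after round~1: once $\hat q$ is known, every subsequent structure is sized against quantities the parties have already learned ($\hat q$, the identified $D_A$/$D_B$, the measured pairwise estimates).

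For the resource bounds: the round-2 IBLT contributes $O(\q\log s\lceil\log_{\q}(1/\delta)\rceil)$ bits; the round-3 estimators contribute $O(\q)$ documents $\times$ $O(\log(\q/\delta))$ seeds $\times$ $O(\log(1/\delta)\log h)$ bits $=O(\q\log h\log^2(\q/\delta))$ bits, with $O(sh\log^2(\q/\delta))$ time to build all embeddings and estimators and $O(\q^2\log^2(\q/\delta))$ time for the $O(\q^2)$ pairwise queries; the round-4 sketches contribute $O(d^2\log h\lceil\log_h(\q/\delta)\rceil)$ bits and, via the $O(k\log h)$ decode time summed over matched pairs, $O(d^2\log h\lceil\log_h(\q/\delta)\rceil)$ time, plus $O(\q h\log h\lceil\log_h(\q/\delta)\rceil)$ time to build the sketches for all differing documents and a final $O(\q h)$ for the embedding inversions. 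Summing these matches the claimed bounds.

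The main obstacle is the coupling in the second observation: making the final reconciliation cost telescope to $O(d^2)$ rather than to something like $O(\q\cdot\max_i\Delta_e^2)$ requires that the Hamming sketch for each matched pair be sized by an estimate of \emph{that pair's Hamming distance in the very embedding being sketched}, which is exactly what forces us to commit to the $\Theta(\log(\q/\delta))$ shared seeds already in round~3 and to query pairwise estimators against them. The secondary difficulty is amplifying CGK's constant success probability uniformly over all $\Theta(\q^2)$ candidate pairs and the reconciliations without letting the replication factor exceed the claimed $\log(\q/\delta)$ and $\lceil\log_h(\q/\delta)\rceil$ terms, together with arguing that the one-sided failure mode of the embedding's upper bound on far pairs cannot corrupt the greedy matching --- which holds because only pairs with a small estimate are ever acted on and the lower bound $\Delta_H\ge\Delta_e/2$ keeps truly distant documents out of contention.
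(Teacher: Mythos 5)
Your high-level plan uses exactly the paper's ingredients (hash-set difference estimator to get $\hat q$, IBLT of document hashes to identify differing documents, $\Theta(\log(\q/\delta))$ shared CGK embeddings with per-embedding difference estimators, and Hamming sketches sized by the estimate \emph{in the same embedding}, which is indeed the key point that yields the $O(d^2\log h)$ term), but the protocol choreography as you describe it does not work within 4 messages and is not executable by the parties as stated. First, you have Alice send the round-1 estimator and then also send the round-2 IBLT sized by $\Theta(\hat q)$ — but after round 1 only Bob knows $\hat q$, so Alice cannot size her IBLT; the paper has Bob send the estimator in round 1 precisely so that Alice learns $\hat q$ before sending $T_A$. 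Second, after your round 2 only Bob can decode the symmetric difference, so Alice does not know which of her documents differ; yet your round 3 asks Alice to send estimators ``for each document in $D_A$.'' The paper resolves this by having Bob's round-3 message carry both $T_B$ (so Alice can decode $(T_A,T_B)$ and learn her differing documents) and the list $L_B$ of Bob's embedding estimators. Third, you let Bob compute the matching and ``flag'' pairs, after which Alice sends the sketches; since a round in this paper is a single message, Bob's flags plus Alice's sketches are two additional messages, and together with the bidirectional estimator exchange in your round 3 the protocol uses strictly more than 4 messages. The paper avoids this by having \emph{Alice} compute the matching from Bob's estimators and piggyback the indices $b_i,c_i,d_i$ together with the sketches $S_i$ in the single final message. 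Your fallback of ``requesting the offending document directly'' after a hash mismatch would likewise add rounds; the paper simply charges such events to the failure probability $\delta$.

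A smaller point: your third observation (that far pairs can never beat close ones in the minimization, because $\Delta_H\ge\Delta_e/2$) is neither quite true nor needed — a far pair with edit distance $D$ can have embedded Hamming distance as small as $D/2$, which may undercut the up-to-$O(k^2)$ blowup of the true match. Correctness only requires that the estimate $d_i$ be accurate for the pair and embedding actually chosen (so the sketch with $k=\Theta(d_i)$ decodes), and the communication bound only requires $d_i = O(k_i^2)$ because $d_i$ is a minimum over candidates that include the true match, whence $\sum_i d_i = O(\sum_i k_i^2) = O(d^2)$. So the matching need not coincide with the minimum-edit matching at all. With the message directions fixed as above (Bob, Alice, Bob, Alice) and the matching moved to Alice's side, your accounting of the communication and time terms matches the paper's.
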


\begin{proof}
First we detail the reconciliation protocol we use (except for some small amount of probability amplification), and then we argue its correctness afterwards.
\begin{enumerate}
\item Bob computes a $\Theta(\log (s/\delta))$-bit pairwise independent hash of his documents, creates a set difference estimator (with failure probability $\Theta(\delta)$) for his set of hashes, and sends it to Alice.
\item Alice computes a $\Theta(\log (s/\delta))$-bit pairwise independent hash of her documents.  She uses Bob's set difference estimator to estimate the size of the difference between their sets of hashes, which should be $O(\q)$.  Alice then inserts all of her document hashes into $O(\q)$-cell IBLT $T_A$ which she transmits to Bob.
\item Bob inserts all of his document hashes into an $O(\q)$-cell IBLT $T_B$.  Bob decodes $(T_A,T_B)$, and determines which of his child sets differ from Alice.  For each of his differing documents, he creates $\Theta(\log(\q/\delta))$ length $\Theta(h)$ CGK encodings of it, and constructs a set difference estimator (with failure probability $\Theta(\delta/\poly(\q))$) for each of the encodings.  For each of these documents, Bob creates a vector of its corresponding set difference estimators and inserts the vector into a list $L_B$.  He transmits $T_B$ and $L_B$ to Alice.
\item Alice decodes $(T_A,T_B)$, and constructs $L_A$, a list of vectors set difference estimators (each estimator within a vector again corresponding to a different CGK encoding) for each of her differing documents.  For each vector of set difference estimators $L_{A,i}\in L_A$ and each $L_{B,j} \in L_B$, Alice estimates the edit distance between the documents corresponding to $i$ and $j$ by merging $L_{A,i}$ and $L_{B,j}$ element-wise, and then taking the median of the estimates.  Let $b_i$ be the index $j$ of the $L_{B,j}$ with which $L_{A,i}$ yielded the smallest estimate, let $c_i$ be the index of the CGK embedding used in that estimate, and let $d_i$ be that estimated difference.  For each $i$, Alice transmits $b_i$, $c_i$, $d_i$, and $S_i$, a Hamming distance sketch (with $k=\Theta(d_i)$, then replicated $\Theta(\lceil\log_h(\q/\delta) \rceil)$ times) of the $c_i$th CGK encoding of Alice's document $i$.
\item For each of the received tuples $(b_i, c_i, d_i, S_i)$ pairs, Bob recovers Alice's document $i$ by creating a Hamming sketch (with $k = \Theta(d_i)$ and $\Theta(\lceil\log_h(\q/\delta) \rceil)$ replication) of the $c_i$th CGK encoding of his document $b_i$ and uses it to decode $S_i$.  Bob then applies the extracted differences to his CGK encoded document and inverts the CGK encoding to yield Alice's document $i$.  Bob then recovers Alice's total directory by removing all documents corresponding to $L_B$ from his set and adding in Alice's documents that he has recovered. 
\end{enumerate}

 This protocol succeeds so long as none of the hashes collide, none of the set difference estimators fail, $T_A$ and $T_B$ together decode, none of the Hamming sketches fail, and for each pair of differing documents, the median Hamming distance between their CGK encodings is accurate (up to $O(k^2)$).  Union bounding over all $O(s^2)$ pairs of documents, none of the hashes collide with probability at least $1 - O(\delta)$.  The first set difference estimator succeeds with probability $1-O(\delta)$.  $(T_A,T_B)$ decodes with probability at least $1 - \poly(\q)$.  By replicating step 2 (and the corresponding part of step 3) $\Theta(\lceil\log_{\q}(1/\delta)\rceil)$ times, we reduce the probability that $T_A$ and $T_B$ fails to decode to $O(\delta)$.

There are $O(\q^2 \log(\q/\delta))$ pairs of set difference estimators, each of which fails with probability $O(\delta / \poly(\q))$, so that they all succeed with probability at least $1 - O(\delta)$.  For each of the $O(\q^2)$ pairs of documents compared, there are $O(\log(\q/\delta))$ CGK encodings, each of which fails with probability at most $1/3$.  For each pair of documents, by a Chernoff bound, the CGK encoding with the median Hamming distance fails with probability $O(\delta/\poly(\q))$, thus by a union bound each pair's median CGK encoding succeeds with probability $1-O(\delta)$. Each Hamming sketch fails with probability $O(1/h)$ before replication, so by replicating it $\Theta(\lceil \log_h (\q/\delta)\rceil)$ times every pair of Hamming sketches will succeed with probability $1-O(\delta)$.  Putting it all together, for the right choice of constants, the protocol succeeds with probability at least $1 - \delta$.

Computing the hashes takes $O(n)$ time, and creating and transmitting the initial set difference estimator takes $O(\log(1/\delta))$ time and $O(\log(1/\delta)\log s)$ communication, by \autoref{thm:strata}.  Constructing and decoding $T_A$ and $T_B$ takes, over $O(\lceil\log_{\q}(1/\delta)\rceil)$ replications, $O(\lceil\log_{\q}(1/\delta)\rceil s)$ time and $O(\lceil\log_{\q}(1/\delta)\rceil\q\log s)$ bits of communication.  Computing all of the CGK encodings, and then later decoding them, takes $O(sh \log(\q/\delta))$ time.  By \autoref{thm:strata}, constructing $L_A$ and $L_B$ takes $O(sh\log^2 (\q / \delta))$ time and transmitting $L_B$ takes $O(\q \log h \log^2 (\q / \delta))$ bits of communication.  Finding the $b_i$s, $c_i$s, and $d_i$s consists of $O(\q^2\log (\q / \delta))$ set difference merges and queries, which by \autoref{thm:strata} take a total of $O(\q^2 \log^2 (\q / \delta))$ time.  Sending the $b_is$, $c_i$s, and $d_i$s takes $O(\q\log (\q h \log(1/\delta)))$ bits of communication.  Computing and decoding the $S_i$s takes $O((\q h + d^2) \log h \lceil\log_h(\q/\delta)\rceil)$ time and transmitting them takes $O(d^2 \log h \lceil\log_h(\q/\delta)\rceil)$ bits of communication.  Adding up all of these terms, we get our desired bounds.
\end{proof}

Plugging $\delta = 1/\poly \max(\q, h)$ into this lemma, then replicating the result $\Theta(\log n / \log (1/\delta))$ times in parallel, we get our theorem.

\multiroundnod*

\end{document}